\providecommand{\U}[1]{\protect\rule{.1in}{.1in}}
\newtheorem{theorem}{Theorem}[section]
\newtheorem{corollary}[theorem]{Corollary}
\newtheorem{definition}[theorem]{Definition}
\newtheorem{example}[theorem]{Example}
\newtheorem{lemma}[theorem]{Lemma}
\newtheorem{proposition}[theorem]{Proposition}
\newtheorem{remark}[theorem]{Remark}
\numberwithin{theorem}{section}
\begin{document}
\title[A duality theory]{A duality theory for unbounded Hermitian operators in Hilbert space}
\author{Palle E.T. Jorgensen}
\address{Department of Mathematics\\
University of Iowa\\
Iowa City, IA 52242-1466}
\email{jorgen@math.uiowa.edu}
\urladdr{http://www.math.uiowa.edu}
\dedicatory{\emph{University of Iowa}}\date{}
\subjclass[2000]{47B25, 47B32, 47B37, 47S50, 60H25, 81P15, 81Q10}
\keywords{operators in Hilbert space, deficiency spaces, Schr\"{o}dinger equation,
selfadjoint extensions, reproducing kernel Hilbert spaces}

\begin{abstract}
We develop a duality theory for unbounded Hermitian operators with dense
domain in Hilbert space. As is known, the obstruction for a Hermitian operator
to be selfadjoint or to have selfadjoint extensions is measured by a pair of
deficiency indices, and associated deficiency spaces; but in practical
problems, the direct computation of these indices can be difficult. Instead,
in this paper we identify additional structures that throw light on the
problem. While duality considerations are a tested tool in mathematics, we
will attack the problem of computing deficiency spaces for a single Hermitian
operator with dense domain in a Hilbert space which occurs in a duality
relation with a second Hermitian operator, often in the same Hilbert space.

\end{abstract}
\maketitle

\section{Introduction\label{Intro}}

The theory of unbounded Hermitian operators with dense domain in Hilbert space
was developed by H. M. Stone and John von Neumann with view to use in quantum
theory; more precisely to put the spectral theory of the Schr\"{o}dinger
equation on a sound mathematical foundation. Early in the theory, it was
realized that a Hermitian operator may not be selfadjoint. It was given a
quantitative formulation in the form of deficiency indices and deficiency
spaces, and we refer the reader to the books \cite{St90} and \cite{vN32} , and
more recently \cite{DuSc88} and \cite{ReSi75}. In physical problems, see e.g.,
\cite{Bal00}, these mathematical notions of defect take the form of
\textquotedblleft boundary conditions;\textquotedblright\ for example waves
that are diffracted on the boundary of a region in Euclidean space; the
scattering of classical waves on a bounded obstacle \cite{LaPh80}; a quantum
mechanical \textquotedblleft particle\textquotedblright\ in a repulsive
potential that shoots to infinity in finite time; or in more recent
applications (see e.g., \cite{JoPe08}, \cite{BBC+09}, \cite{BrWo05},
\cite{OrWo07}) random walk on infinite weighted graphs $G$ that
\textquotedblleft wonder off\textquotedblright\ to points on an idealized
boundary of $G$. In all of the instances, one is faced with a dynamical
problem: For example, the solution to a Schr\"{o}dinger equation, represents
the time evolution of quantum states in a particular problem in atomic
physics. The operators in these applications will be Hermitian, but in order
to solve the dynamical problems, one must first identify a selfadjoint
extension of the initially given operator. Once that is done, von Neumann's
spectral theorem can then be applied to the selfadjoint operator. A choice of
selfadjoint extension will have a spectral resolution, i.e., it is an integral
of an orthogonal projection valued measure; with the different extensions
representing different \textquotedblleft physical\textquotedblright\ boundary
conditions. Since non-zero deficiency indices measure that degree of
non-selfadjointness, the question of finding selfadjoint extensions takes on
some urgency.

Now the variety of applied problems that lend themselves to computation of
deficiency indices and the study of selfadjoint extensions are vast and
diverse. As a result, it helps if one can identify additional structures that
throw light on the problem. Here duality considerations within the framework
of Hilbert space are tested tools in applied mathematics. In this paper we
will device such a geometric duality theory: We will attack the problem of
computing deficiency spaces for a single Hermitian operator with dense domain
in a Hilbert space which occurs in a duality relation with a second Hermitian
operator, often in the same Hilbert space. We will further use our duality to
prove essential selfadjointness of families of Hermitian operators that arise
naturally in reproducing kernel Hilbert spaces. The latter include graph
Laplacians for infinite weighted graphs $(G,w)$ with the Laplacian in this
context presented as a Hermitian operator in an associated Hilbert space of
finite energy functions on the vertex set in $G$. Other examples include
Hilbert spaces of band-limited signals. Further applications enter into the
techniques used in discrete simulations of stochastic integrals, see
\cite{Hi80}. We encountered the present operator theoretic duality in our
study of discrete Laplacians, which in turn have part of its motivation in
numerical analysis. A key tool in applying numerical analysis to solving
partial differential equations is discretization, and use of repeated
differences; see e.g., \cite{AH05}.

Specifically, one picks a grid size $h$, and then proceeds in steps: (1)
Starting with a partial differential operator, then study an associated
discretized operator with the use of repeated differences on the $h$-lattice
in $\mathbb{R}^{d}$. (2) Solve the discretized problem for $h$ fixed. (3) As
$h$ tends to zero, numerical analysts evaluate the resulting approximation
limits, and they bound the error terms.

Our present approach, based on reproducing kernels and unbounded operators,
fits into a larger framework in applied operator theory, for example the use
of reproducing kernel Hilbert spaces in the determination of optimal spectral
estimation: Here the problem is to estimate some sampled signal represented as
the sum of a deterministic (time-) function and a term representing noise, for
example white noise; see e.g., \cite{Doo59, Hi80}. For the multivariable case,
the process under study is indexed by some prescribed discrete set $X$
(representing sample points; it could be the vertex set in an infinite graph).
The choice of statistical distribution, modeling the noise term, then amounts
to a selection of a reproducing kernel (representing function differences)
with vectors $v_{x}$ (dipoles in the present context), and linear combinations
of these vectors $v_{x}$ in this approach then represents a spectral
estimator. The problem becomes that of selecting samples which minimize error
terms in a prediction of a signal.

\section{Reproducing kernel-Hilbert spaces\label{ReproKHspaces}}

For this purpose, one must use a metric, and the norm in Hilbert space has
proved an effective tool, hence the Hilbert spaces and the operator theory.
This procedure connects to our present graph-Laplacians: When discretization
is applied to the Laplace operator in $d$ continuous variables, the result is
the graph of integer points $\mathbb{Z}^{d}$ with constant weights. But if
numerical analysis is applied instead to a continuous Laplace operator on a
Riemannian manifold, the discretized Laplace operator will instead involve
infinite graph with variable weights, so with vertices in other configurations
than $\mathbb{Z}^{d}$. Inside the technical sections we will use standard
tools from analysis and probability. References to the fundamentals include
\cite{Doo59}, \cite{Kol77}, \cite{Nel73} and \cite{YN08}. There is a large
literature covering the general theory of reproducing kernel Hilbert spaces
and its applications, see e.g., \cite{Alp06}, \cite{AL08a}, \cite{AL08b}%
,\cite{Aro50}, and \cite{Zh09}. Such applications include potential theory,
stochastic integration, and boundary value problems from PDEs among others.

In brief summary, a reproducing kernel Hilbert space consists of two things: a
Hilbert space of functions $f$ on a set $X$, and a reproducing kernel $k$,
i.e., a complex valued function $k$ on $X\times X$ such that for every $x$ in
$X$, the function $k(\cdot,x)$ is in and reproduces the value $f(x)$ from the
inner product $<f,k(\cdot,x)>$ in $\mathcal{H}$ so the formula
\[
f(x)=<f,k(\cdot,x)>
\]
holds for all $x$ in $X$. Moreover, there is a set of axioms for a function
$k$ in two variables that characterizes precisely when it determines a
reproducing kernel Hilbert space. And conversely there are necessary and
sufficient conditions that apply to Hilbert spaces $\mathcal{H}$ and decide
when $\mathcal{H}$ is a reproducing kernel Hilbert space. Here we shall
restrict these \textquotedblleft reproducing\textquotedblright\ axioms and
obtain instead a smaller class of reproducing kernel Hilbert spaces. We add
two additional axioms: Firstly, we will be reproducing not the values
themselves of the functions $f$ in $\mathcal{H}$, but rather the differences
$f(x)-f(y)$ for all pairs of points in $X$; and secondly we will impose one
additional axiom to the effect that the Dirac mass at $x$ is contained in
$\mathcal{H}$ for all $x$ in $\mathcal{H}$. In more precise form, the axioms
are as follows:

\begin{enumerate}
\item[(i)] For all $x,y\in X,~\exists w_{x,y}\in\mathcal{H}$ such that
$f\left(  x\right)  -f\left(  y\right)  =\left\langle f,w_{x,y}\right\rangle
$; and

\item[(ii)] For all $x\in X$, we have $\delta_{x}\in\mathcal{H}$.
\end{enumerate}

\textit{Quantum states} in physics are represented by norm-one vectors $v$ in
some Hilbert space $\mathcal{H}$, i.e., $\left\Vert v\right\Vert
_{\mathcal{H}}=1$. Hence the significance of assumption (ii) is to allow us to
\textquotedblleft place\textquotedblright\ quantum states on the points in
some prescribed set $X$ which allows a reproducing kernel-Hilbert space
$\mathcal{H}$, subject to condition (ii): If $x\in X$, then the corresponding
quantum state is $\left\Vert \delta_{x}\right\Vert _{\mathcal{H}}^{-1}%
\delta_{x}$; and the transition probability $x\longmapsto y$ is
\[
p_{x,y}\text{:}=\left\Vert \delta_{x}\right\Vert _{\mathcal{H}}^{-1}\left\Vert
\delta_{y}\right\Vert _{\mathcal{H}}^{-1}\left\vert \left\langle \delta
_{x},\delta_{y}\right\rangle _{\mathcal{H}}\right\vert \text{.}%
\]

When these two additional conditions (i)--(ii) are satisfied, we say that
$\mathcal{H}$ is a \textit{relative reproducing kernel Hilbert space}. It is
known that every weighted graph (the infinite case is of main concern here)
induces a relative reproducing kernel Hilbert space, and an associated graph
Laplacian. A main result in section \ref{ComputingSpaces} below is that the
converse holds: Given a relative reproducing kernel Hilbert space
$\mathcal{H}$ on a set $X$, it is then possible in a canonical way to
construct a weighted graph $G$ such that $X$ is the set of vertices in $G$,
and such that its energy Hilbert space coincides with $\mathcal{H}$ itself. In
our construction, the surprise is that the edges in $G$ as well as the weights
on the edges may be built directly from only the Hilbert space axioms defining
the initially given relative reproducing kernel Hilbert space. Since this
includes all infinite graphs of electrical resistors and their potential
theory (boundaries, harmonic functions, and graph Laplacians) the result has
applications to these fields, and it serves to unify diverse branches in a
vast research area.

\section{Other Applications\label{OtherApps}}

One additional application of our \textit{relative} reproducing kernel-Hilbert
spaces to infinite graphs $G$ entails the concept of \textquotedblleft
graph-boundary.\textquotedblright\ This is part of the study of discrete
dynamical systems and their harmonic analysis, i.e., following infinite paths
in the vertex set of $G$, and computing probabilities of sets of infinite paths.

While there is already a substantial literature on \textquotedblleft
boundaries\textquotedblright\ in the case of \textit{bounded} harmonic
functions on infinite weighted graphs $(G,w)$, our present setting has a quite
different flavor. We are concerned with harmonic functions $h$ of
\textit{finite energy}, and our reproducing kernel Hilbert spaces are chosen
such as to make this precise, as well as serving as a computational device. An
important technical point is that these \textquotedblleft finite-energy
Hilbert spaces\textquotedblright\ do not come equipped with an \textit{a
priori} realization as $L^{2}$-spaces.

This fact further explains why the resulting boundary theory is somewhat more
subtle than is the better known and better understood theory for the case of
bounded harmonic functions. Moreover there does not appear to be a direct way
of comparing the two \textquotedblleft boundary theories.\textquotedblright

There are some good intuitive reasons why stochastic integrals should
\textquotedblleft have something to\textquotedblright\ do with boundaries and
finite energy for infinite weighted graphs $(G,w)$;-- indeed be a crucial part
of this theory. Indeed, a fixed choice of weights on edges in $G$ (for example
conductance numbers) yields probabilities for a random walk. Going to the
\textquotedblleft boundary\textquotedblright\ for $(G,w)$ involves a subtle
notion of limit, and it is a well known principle that suitable limits of
random walk yield Brownian motion realized in $L^{2}$-spaces of global
measures (e.g., Wiener measure), and so corresponding to the stochastic nature
of Brownian motion.

The discreteness of vertex sets in infinite graphs, has a quantum aspect as
well \cite{CaPi08}, \cite{KlPa08}. It enters when inner products from a chosen
reproducing kernel-Hilbert space is used in encoding transition probabilities,
i.e., computing a transition between two vertices in $G$ as the absolute value
of the inner product of the corresponding Dirac-delta functions. Hence,
vertices in $G$ play the role of quantum states.

Let $\mathcal{H}$ be a reproducing kernel Hilbert space of functions on some
fixed set $X$; we assume properties (i)--(ii) above. There is then a dense
linear subspace $\mathcal{D}\subseteq\mathcal{H}$, and a hermitian operator
$\Delta\,$:$\,\mathcal{D}$ $\rightarrow\mathcal{H}$ determined by
\begin{equation}
\left(  \Delta u\right)  \left(  x\right)  \text{:}=\left\langle \delta
_{x},u\right\rangle ,~\forall u\in\mathcal{D}\text{,} \label{Eq2.1}%
\end{equation}
where $\left\langle \cdot,\cdot\right\rangle $:$=\left\langle \cdot
,\cdot\right\rangle _{\mathcal{H}}$ refers to the inner product in
$\mathcal{H}$.

\begin{definition}
\label{Def2.1}Let $\mathcal{H}$, $X,$ $\Delta$ be as described above\emph{;}
and let $x_{0}\in X$ be given. A vector $w\left(  =w_{x_{0}}\right)  $ is said
to be a \emph{monopole} if
\begin{equation}
\left\langle w,\Delta u\right\rangle =\left\langle \delta_{x_{0}%
},u\right\rangle \text{ for all }u\in\mathcal{D}\text{.} \label{Eq2.2}%
\end{equation}

\end{definition}

\noindent(Contrast (\ref{Eq2.2}) with condition (i) above. A function
$w_{x,y}\in\mathcal{H}$ satisfying (i) is called a \textit{bipole}. We will
see that bipoles always exist, while monopoles do not.)

\begin{example}
\label{Ex2.2}Consider functions $u$ on the integers $\mathbb{Z}$ subject to
the condition
\begin{equation}
\left\Vert u\right\Vert ^{2}=\sum_{x\in\mathbb{Z}}\left\vert u\left(
x\right)  -u\left(  x+1\right)  \right\vert ^{2}<\infty\text{.} \label{Eq2.3}%
\end{equation}
Moding out with the constant functions on $\mathbb{Z}$, note that $\left\Vert
\cdot\right\Vert $ in \emph{(}\ref{Eq2.3}\emph{)} is then a Hilbert norm. The
corresponding Hilbert space will be denoted $\mathcal{H}$.
\end{example}

It is convenient to realize $\mathcal{H}$ via the following Fourier series
representation
\begin{equation}
\tilde{u}\left(  \theta\right)  \text{:}=\sum_{x\in\mathbb{Z}}u\left(
x\right)  e^{ix\cdot\theta}\text{,} \label{Eq2.4}%
\end{equation}
i.e., a $2\pi$-periodic function. Note that the same construction works
\textit{mutatis mutandis} in $d$ variables for $d>1$.

\begin{lemma}
\label{Lem2.3}A $2\pi$-periodic function $\tilde{u}$ as in \emph{(}%
\ref{Eq2.4}\emph{)} represents an $u\in\mathcal{H}$ if and only if
\[
\sin\left(  \frac{\theta}{2}\right)  \tilde{u}\left(  \theta\right)  \in
L^{2}\left(  -\pi,\pi\right)  ;
\]
and in that case
\begin{equation}
\left\Vert u\right\Vert ^{2}=\frac{2}{\pi}\int_{-\pi}^{\pi}\sin^{2}\left(
\frac{\theta}{2}\right)  \left\vert \tilde{u}\left(  \theta\right)
\right\vert ^{2}~d\theta. \label{Eq2.5}%
\end{equation}

\end{lemma}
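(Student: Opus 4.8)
The plan is to recognize the norm in (\ref{Eq2.3}) as the squared $\ell^{2}(\mathbb{Z})$-norm of the first-difference sequence $(\partial u)(x):=u(x)-u(x+1)$, and then to transport the problem to the Fourier side via Parseval's identity for the unitary correspondence $\ell^{2}(\mathbb{Z})\cong L^{2}(-\pi,\pi)$. The first step is the elementary identity of formal Fourier series
\[
\sum_{x\in\mathbb{Z}}\bigl(u(x)-u(x+1)\bigr)e^{ix\theta}=(1-e^{-i\theta})\,\tilde{u}(\theta),
\]
which follows from (\ref{Eq2.4}) by the index shift $\sum_{x}u(x+1)e^{ix\theta}=e^{-i\theta}\tilde{u}(\theta)$. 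Thus the Fourier transform of $\partial u$ is obtained from $\tilde{u}$ by multiplication with the symbol $1-e^{-i\theta}$.

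The second step is the trigonometric computation
\[
1-e^{-i\theta}=2i\,e^{-i\theta/2}\sin\!\Bigl(\tfrac{\theta}{2}\Bigr),\qquad |1-e^{-i\theta}|^{2}=2-2\cos\theta=4\sin^{2}\!\Bigl(\tfrac{\theta}{2}\Bigr).
\]
Since the multipliers $1-e^{-i\theta}$ and $2\sin(\theta/2)$ differ only by the unimodular factor $ie^{-i\theta/2}$, for any $\tilde{u}$ the function $(1-e^{-i\theta})\tilde{u}(\theta)$ lies in $L^{2}(-\pi,\pi)$ if and only if $\sin(\theta/2)\tilde{u}(\theta)$ does, and then $\|(1-e^{-i\theta})\tilde{u}\|_{L^{2}}^{2}=4\,\|\sin(\theta/2)\tilde{u}\|_{L^{2}}^{2}$.

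Combining the two steps with Parseval's identity (normalized so that $\|a\|_{\ell^{2}}^{2}=\frac{1}{2\pi}\int_{-\pi}^{\pi}|\sum_{x}a(x)e^{ix\theta}|^{2}\,d\theta$) completes the argument: $u\in\mathcal{H}$ is exactly the statement $\partial u\in\ell^{2}(\mathbb{Z})$, which holds iff the Fourier series $(1-e^{-i\theta})\tilde{u}$ represents an $L^{2}(-\pi,\pi)$-function, i.e.\ iff $\sin(\theta/2)\tilde{u}\in L^{2}(-\pi,\pi)$; and in that case
\[
\|u\|^{2}=\|\partial u\|_{\ell^{2}}^{2}=\frac{1}{2\pi}\int_{-\pi}^{\pi}|1-e^{-i\theta}|^{2}\,|\tilde{u}(\theta)|^{2}\,d\theta=\frac{2}{\pi}\int_{-\pi}^{\pi}\sin^{2}\!\Bigl(\tfrac{\theta}{2}\Bigr)|\tilde{u}(\theta)|^{2}\,d\theta,
\]
which is (\ref{Eq2.5}).

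The one delicate point — and the only real obstacle — is that a general class $[u]\in\mathcal{H}$ need not have $u\in\ell^{2}(\mathbb{Z})$ (normalizing $u(0)=0$, Cauchy--Schwarz applied to the telescoping sum $u(n)=-\sum_{k=0}^{n-1}(\partial u)(k)$ only gives $|u(x)|\le\sqrt{|x|}\,\|u\|$), so $\tilde{u}$ is a priori merely a distribution on the circle and ``$\sin(\theta/2)\tilde{u}(\theta)$'' must be assigned a meaning. I would handle this by arguing entirely on the difference side, where $1-e^{-i\theta}$ is a genuine smooth $2\pi$-periodic multiplier: in the forward direction $\partial u\in\ell^{2}$ yields an honest function $G(\theta):=\sum_{x}(\partial u)(x)e^{ix\theta}\in L^{2}(-\pi,\pi)$, and one sets $\tilde{u}(\theta):=G(\theta)/(1-e^{-i\theta})$ a.e., so that $\sin(\theta/2)\tilde{u}(\theta)=\tfrac{1}{2i}e^{i\theta/2}G(\theta)\in L^{2}$; conversely, if $\sin(\theta/2)\tilde{u}\in L^{2}$ then $G:=(1-e^{-i\theta})\tilde{u}=2i\,e^{-i\theta/2}\bigl(\sin(\theta/2)\tilde{u}\bigr)\in L^{2}$, and computing its $m$-th Fourier coefficient returns $u(m)-u(m+1)$, whence $\partial u\in\ell^{2}$ and $u\in\mathcal{H}$. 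Alternatively one first proves the identity for finitely supported $u$, where every manipulation is a literal finite sum, and then passes to the limit using density of such $u$ in $\mathcal{H}$. Either way, the substance is just the Parseval computation together with $|1-e^{-i\theta}|=2|\sin(\theta/2)|$.
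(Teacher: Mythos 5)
Your proof is correct and follows essentially the same route as the paper's: identify the Fourier multiplier $1-e^{-i\theta}$ of the first difference, use $|1-e^{-i\theta}|=2|\sin(\theta/2)|$, and conclude by Parseval, which also yields the constant $\tfrac{2}{\pi}$ exactly as in (\ref{Eq2.5}). Your additional care about the a priori distributional meaning of $\tilde{u}$ (working on the difference side where the multiplier is honest) is a point the paper's two-line proof silently passes over, but it does not change the substance of the argument.
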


\begin{remark}
\label{Rem2.4}Note that the constant function $u_{1}\equiv1$ on $\mathbb{Z}$
does not contribute to \emph{(}\ref{Eq2.3}\emph{);} and as a result $\tilde
{u}_{1}\left(  \theta\right)  =\delta\left(  \theta-0\right)  $ does not
contribute to \emph{(}\ref{Eq2.5}\emph{)}. The last fact can be verified directly.
\end{remark}

\begin{proof}
[Proof of Lemma \ref{Lem2.3}]For the RHS in (\ref{Eq2.3}) we have
\[
\left\vert \left(  \tilde{u}\left(  \cdot\right)  -\widetilde{u\left(
\cdot+1\right)  }\right)  \left(  \theta\right)  \right\vert =\left\vert
\left(  1-e^{-i\theta}\right)  \tilde{u}\left(  \theta\right)  \right\vert
=2\left\vert \sin\left(  \frac{\theta}{2}\right)  \tilde{u}\left(
\theta\right)  \right\vert ;
\]
and the conclusion in (\ref{Eq2.5}) now follows from Parseval's formula for
Fourier series.
\end{proof}

\begin{lemma}
\label{Lem2.5}The Hilbert space $\left(  \mathcal{H},\mathbb{Z}\right)  $ in
Example \ref{Ex2.2} has dipoles, but not monopoles.
\end{lemma}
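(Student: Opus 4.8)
\emph{Proof plan.}
I would pass to the Fourier model of Lemma \ref{Lem2.3}. Put $d\mu(\theta)=\tfrac{2}{\pi}\sin^{2}(\theta/2)\,d\theta$ on $(-\pi,\pi)$; then $u\mapsto\tilde{u}$ is an isometry of $\mathcal{H}$ onto $L^{2}(\mu)$, and by Lemma \ref{Lem2.3} an element $u$ belongs to $\mathcal{H}$ exactly when $\sin(\theta/2)\tilde{u}\in L^{2}(-\pi,\pi)$. Three facts are recorded first. (a) $\delta_{x}\in\mathcal{H}$ with $\widetilde{\delta_{x}}(\theta)=e^{ix\theta}$, since $\sin(\theta/2)e^{ix\theta}$ is bounded (so condition (ii) holds here). (b) Evaluating $\langle\delta_{x},u\rangle$ directly from the definition of the $\mathcal{H}$-inner product gives $(\Delta u)(x)=2u(x)-u(x-1)-u(x+1)$, the graph Laplacian on $\mathbb{Z}$ with unit weights; equivalently $\widetilde{\Delta u}(\theta)=(2-e^{i\theta}-e^{-i\theta})\tilde{u}(\theta)=4\sin^{2}(\theta/2)\,\tilde{u}(\theta)$. (c) The finitely supported functions all lie in $\mathcal{D}$ and are dense in $\mathcal{H}$, since their Fourier transforms are precisely the trigonometric polynomials and these are dense in $L^{2}(\mu)$.

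\emph{Dipoles exist.} For this half I would just note that the functional $f\mapsto f(x)-f(y)$ is bounded on $\mathcal{H}$: by Cauchy--Schwarz applied along a path of edges joining $y$ to $x$, $|f(x)-f(y)|\le|x-y|^{1/2}\|f\|_{\mathcal{H}}$. Hence the Riesz representation theorem supplies a vector $w_{x,y}\in\mathcal{H}$ with $f(x)-f(y)=\langle f,w_{x,y}\rangle$ for all $f\in\mathcal{H}$, i.e.\ a dipole. (One can also exhibit it explicitly: $\widetilde{w_{x,y}}(\theta)=\big(e^{ix\theta}-e^{iy\theta}\big)/\big(4\sin^{2}(\theta/2)\big)$ up to conjugation, which belongs to $\mathcal{H}$ by Lemma \ref{Lem2.3} because the numerator vanishes to first order at $\theta=0$, so $\sin(\theta/2)\widetilde{w_{x,y}}(\theta)$ is bounded on $(-\pi,\pi)$.) Alternatively one simply invokes the general existence of bipoles noted after Definition \ref{Def2.1}.

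\emph{Monopoles do not exist.} Here I would argue by contradiction: suppose some $w=w_{x_{0}}\in\mathcal{H}$ satisfies (\ref{Eq2.2}). Testing (\ref{Eq2.2}) against all finitely supported $u$ and inserting $\widetilde{\Delta u}(\theta)=4\sin^{2}(\theta/2)\tilde{u}(\theta)$ and $\widetilde{\delta_{x_{0}}}(\theta)=e^{ix_{0}\theta}$, the identity becomes, up to the (immaterial) conjugation convention,
\[
\int_{-\pi}^{\pi}\sin^{2}(\theta/2)\,\big(4\sin^{2}(\theta/2)\,\tilde{w}(\theta)-e^{ix_{0}\theta}\big)\,\overline{\tilde{u}(\theta)}\;d\theta=0
\]
for every trigonometric polynomial $\tilde{u}$. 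The function $h(\theta):=\sin^{2}(\theta/2)\big(4\sin^{2}(\theta/2)\tilde{w}(\theta)-e^{ix_{0}\theta}\big)$ lies in $L^{2}(-\pi,\pi)$: its first term is $4\sin^{3}(\theta/2)\cdot\big(\sin(\theta/2)\tilde{w}(\theta)\big)$ with $\sin(\theta/2)\tilde{w}\in L^{2}$ by Lemma \ref{Lem2.3}, and its second term is bounded. Being orthogonal to all trigonometric polynomials (which are dense in $L^{2}(-\pi,\pi)$ and closed under conjugation), $h$ vanishes a.e.; hence $|\tilde{w}(\theta)|=1/\big(4\sin^{2}(\theta/2)\big)$ for a.e.\ $\theta$, so $\sin(\theta/2)|\tilde{w}(\theta)|=1/\big(4|\sin(\theta/2)|\big)$, which is $\sim 1/(2|\theta|)$ near $\theta=0$. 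Thus $\int_{-\pi}^{\pi}\sin^{2}(\theta/2)\,|\tilde{w}(\theta)|^{2}\,d\theta=+\infty$, which by Lemma \ref{Lem2.3} is incompatible with $w\in\mathcal{H}$. So no monopole exists.

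The step that needs genuine care --- the real obstacle --- is the passage ``$h$ orthogonal to all trigonometric polynomials $\Rightarrow h\equiv0$'' together with the surrounding integrability bookkeeping: because a general $w\in\mathcal{H}$ satisfies only $\sin(\theta/2)\tilde{w}\in L^{2}(-\pi,\pi)$ and not $\tilde{w}\in L^{2}(-\pi,\pi)$, one must verify that every integrand that occurs is genuinely integrable and that $h$ is an honest square-integrable function before density of trigonometric polynomials may be used. Once that is in place the contrast between the two cases is transparent: in the bipole the numerator $e^{ix\theta}-e^{iy\theta}$ vanishes to first order at $\theta=0$ and absorbs one factor $\sin(\theta/2)$, leaving a bounded (hence admissible) function, whereas in the monopole the ``numerator'' is the constant $1$, which does not vanish at $\theta=0$, so an undeleted factor $\sin^{-1}(\theta/2)$ survives and destroys square-integrability against $\sin^{2}(\theta/2)\,d\theta$.
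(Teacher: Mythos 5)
Your proof is correct. For the non-existence of monopoles you take essentially the same route as the paper: pass to the Fourier model of Lemma \ref{Lem2.3}, show that the monopole identity forces $\tilde{w}\left(  \theta\right)  =e^{ix_{0}\theta}/\bigl(4\sin^{2}\left(  \theta/2\right)  \bigr)$, and observe that this violates the finiteness condition (\ref{Eq2.5}). The paper compresses this into the single observation (\ref{Eq2.9}); your bookkeeping --- that the function $h$ is honestly in $L^{2}\left(  -\pi,\pi\right)$ because $\sin\left(  \theta/2\right)  \tilde{w}\in L^{2}$ supplies the needed integrability, and that orthogonality to all trigonometric polynomials then kills $h$ --- is precisely the detail the paper leaves implicit, and it is the right thing to check. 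The only genuine divergence is in the dipole half: the paper exhibits $w_{x,y}$ explicitly on the spatial side as the ``tent'' function (\ref{Eq2.6}), which has the side benefit of yielding the kernel identity $\left\langle v_{x},v_{y}\right\rangle _{\mathcal{H}}=\left\vert x\right\vert \wedge\left\vert y\right\vert$ of (\ref{Eq2.8}), whereas you obtain existence abstractly from the bound $\left\vert f\left(  x\right)  -f\left(  y\right)  \right\vert \leq\left\vert x-y\right\vert ^{1/2}\left\Vert f\right\Vert _{\mathcal{H}}$ plus Riesz representation (or, equivalently, from the Fourier-side formula whose numerator vanishes to first order at $\theta=0$). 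Both are valid; your route is shorter and generalizes more readily, the paper's is more explicit and yields the covariance kernel as a bonus.
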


\begin{proof}
Let $x,y\in\mathbb{Z}$. We may assume without loss that $0\leq y<x$. Now set
\begin{equation}
w_{x,y}\left(  n\right)  \text{:}=\left\{
\begin{array}
[c]{l}%
\,0,~\text{if }\left\vert n\right\vert \leq y\\
\left\vert n\right\vert -y,~\text{if }y<\left\vert n\right\vert \leq x\\
\;x-y,~\text{if }x<\left\vert n\right\vert \text{.}%
\end{array}
\right.  \label{Eq2.6}%
\end{equation}
Then the reproducing formula holds, i.e., we have (i):
\begin{equation}
\left\langle w_{x,y},u\right\rangle _{\mathcal{H}}=u\left(  x\right)
-u\left(  y\right)  ,~\forall u\in\mathcal{H}\text{.} \label{Eq2.7}%
\end{equation}
Setting $v_{x}$:$=w_{x,0}$, we get
\begin{equation}
\left\langle v_{x},v_{y}\right\rangle _{\mathcal{H}}=\left\vert x\right\vert
\wedge\left\vert y\right\vert . \label{Eq2.8}%
\end{equation}

The fact that there are no monopoles follows from the observation that
\begin{equation}
\tilde{w}\left(  \theta\right)  =\frac{e^{ix\cdot\theta}}{4\sin^{2}\left(
\frac{\theta}{2}\right)  } \label{Eq2.9}%
\end{equation}
does note satisfy the finiteness condition in (\ref{Eq2.5}).
\end{proof}

\begin{remark}
\label{Rem2.6}Set $X=\mathbb{Z}$, and let $\mathcal{H}$ be the Hilbert space
in Example \ref{Ex2.2} of functions $f:\mathbb{Z}\rightarrow\mathbb{C}$,
modulo the constant functions, such that
\[
\left\Vert f\right\Vert _{\mathcal{H}}^{2}=\sum_{x\in\mathbb{Z}}\left\vert
f\left(  x\right)  -f\left(  x+1\right)  \right\vert ^{2}<\infty\text{.}%
\]
A computation reveals the following three facts \emph{(}details in section
\ref{ComputingSpaces} below\emph{):}

\begin{itemize}
\item[\emph{(}a\emph{)}] For all $x\in\mathbb{Z}\,\diagdown\left(  0\right)
$, there is a $v_{x}\in\mathcal{H}$ such that
\[
\left\langle v_{x},f\right\rangle _{\mathcal{H}}=f\left(  x\right)  -f\left(
0\right)  \text{ holds for all }f\in\mathcal{H}.
\]

\item[\emph{(}b\emph{)}] There is \emph{no} $w\in\mathcal{H}$ such that
\[
\left\langle w,f\right\rangle _{\mathcal{H}}=f\left(  x\right)  \text{ holds
for all }f\in\mathcal{H}.
\]

\item[\emph{(}c\emph{)}] Functions in $\mathcal{H}$ may be \emph{unbounded:}
Take for example $f\left(  x\right)  $\emph{:}$=\log\left(  1+\left\vert
x\right\vert \right)  ,$ defined for all $x\in\mathbb{Z}.$
\end{itemize}
\end{remark}

A glance at the defining conditions (i) -- (ii) for \textquotedblleft relative
reproducing kernel Hilbert spaces\textquotedblright\ suggests applications to
\textquotedblleft boundaries\textquotedblright\ of infinite discrete
configurations, such as infinite weighted graphs.

One of the aims of our paper is to study precisely this: The introduction of a
suitable reproducing kernel Hilbert space into the analysis of an infinite
configuration $X$ leads to an associated \textquotedblleft
boundary\textquotedblright, i.e., to a compactification of $X$, so the
boundary consisting of the points in the compactification not already in $X$;
hence notions not present in the finite case; see especially our operator
theoretic formulation of \textquotedblleft boundary\textquotedblright\ in
section \ref{Extensions} below.

The study of \textquotedblleft boundary terms\textquotedblright\ is central to
our approach. In contrast to other related but different notions in the
literature of \textquotedblleft boundary\textquotedblright\ for random walks,
we employ here tools intrinsic to unbounded operators with dense domain in
Hilbert space. To start this, we must first, for a given infinite
configuration $X$, identify the \textquotedblleft right\textquotedblright%
\ Hilbert space; see sections \ref{Extensions} and \ref{PairsInDuality} below.
Our boundary \textquotedblleft$\operatorname*{bd}X$\textquotedblright%
\ (section \ref{ConcludingRem}) is comparable to, but different from, other
boundaries in the literature.

\section{Extensions of Unbounded Operators\label{Extensions}}

\begin{definition}
\label{Def3.1}~\newline

\begin{itemize}
\item $\mathcal{H}$\emph{:} some given complex Hilbert space with fixed inner
product $<\cdot,\cdot>$ and norm $\left\Vert \cdot\right\Vert $.

\item $\mathcal{D}\subset\mathcal{H}$\emph{:} some given dense linear subspace
in $\mathcal{H}$.

\item $\Delta:\mathcal{D}\rightarrow\mathcal{H}$\emph{:} a given linear
operator\emph{;} typically unbounded.
\end{itemize}
\end{definition}

We say that $\Delta$ is \textit{Hermitian} iff
\begin{equation}
\left\langle u,\Delta v\right\rangle =\left\langle \Delta u,v\right\rangle
,~\forall u,v\in\mathcal{D}\text{;} \label{Eq3.1}%
\end{equation}
and we say that $\mathcal{D}$ is the \textit{domain} of $\Delta$; written

\begin{itemize}
\item $\operatorname*{dom}\left(  \Delta\right)  $:$=\mathcal{D}$.
\end{itemize}

The \textit{adjoint }operator $\Delta^{\ast}$ is defined as follows: Let
\[
\operatorname*{dom}\left(  \Delta^{\ast}\right)  \text{:}=\left\{  \psi
\in\mathcal{H}|\text{ such that }\exists\mathcal{C}<\infty\text{ with
}\left\vert \left\langle \psi,\Delta v\right\rangle \right\vert \subseteq
\mathcal{C}\left\Vert v\right\Vert ,~\forall v\in\mathcal{D}\right\}  \text{.}%
\]

If $\psi\in\operatorname*{dom}\left(  \Delta^{\ast}\right)  $, then by Riesz'
lemma, there is a unique $w\in\mathcal{H}$ such that
\begin{equation}
\left\langle \psi,\Delta v\right\rangle =\left\langle w,v\right\rangle
,~\forall v\in\mathcal{D}\text{;} \label{Eq3.2}%
\end{equation}
and we set $\Delta^{\ast}\psi$:$=w$.

The graph $G$ of an operator $\Delta$ is defined by
\begin{equation}
G\left(  \Delta\right)  \text{:}=\left\{  \left(
\genfrac{}{}{0pt}{}{v}{\Delta v}%
\right)  |v\in\operatorname*{dom}\left(  \Delta\right)  \right\}
\subseteq\mathcal{H}\times\mathcal{H}\text{.} \label{Eq3.3}%
\end{equation}

If $\Delta$ is hermitian, there is a closed hermitian operator $\Delta
^{\operatorname*{clo}}$ such that%
\begin{equation}
G\left(  \Delta^{\operatorname*{clo}}\right)  \text{:}=G\left(  \Delta\right)
^{\left\Vert \cdot\right\Vert \times\left\Vert \cdot\right\Vert -closure}%
\text{.} \label{Eq3.4}%
\end{equation}

One checks that
\begin{equation}
\left(  \Delta^{\operatorname*{clo}}\right)  ^{\ast}=\Delta^{\ast}\text{.}
\label{Eq3.5}%
\end{equation}

For a pair of operators $\Delta_{1}$ and $\Delta_{2}$ we say that
\begin{equation}
\Delta_{1}\subseteq\Delta_{2}\underset{\operatorname*{Def}%
\nolimits^{\underline{n}.}}{\Leftrightarrow}G\left(  \Delta_{1}\right)
\subseteq G\left(  \Delta_{2}\right)  \text{.} \label{Eq3.6}%
\end{equation}

\begin{lemma}
\label{Lem3.2}Let $T$ be a hermitian extension of $\Delta$. Then the following
containments hold\emph{:}
\begin{equation}
\Delta\subseteq T\subseteq T^{\operatorname*{clo}}\subseteq T^{\ast}%
\subseteq\Delta^{\ast}\text{.} \label{Eq3.7}%
\end{equation}

\end{lemma}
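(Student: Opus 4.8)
The plan is to verify the five nodes of the chain (\ref{Eq3.7}) one inclusion at a time, reading (\ref{Eq3.6}) as the working definition of $\subseteq$, i.e. containment of graphs. The first containment $\Delta\subseteq T$ is the hypothesis. The second, $T\subseteq T^{\operatorname*{clo}}$, is immediate from (\ref{Eq3.4}), since a set is contained in its closure; here I would recall that $T^{\operatorname*{clo}}$ really is an operator (its graph is a graph) precisely because $T$ is hermitian with dense domain — if $(v_{n},Tv_{n})\to(0,w)$ then $\langle w,u\rangle=\lim\langle Tv_{n},u\rangle=\lim\langle v_{n},Tu\rangle=0$ for all $u\in\operatorname*{dom}(T)$, hence $w=0$ — but this is already recorded in the text preceding (\ref{Eq3.4}), so I would simply cite it as the thing that legitimizes writing the closure as an operator at all.

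For the middle containment $T^{\operatorname*{clo}}\subseteq T^{\ast}$ I would first establish the elementary fact that a hermitian operator is contained in its adjoint: if $u\in\operatorname*{dom}(T)$, then for every $v\in\operatorname*{dom}(T)$ the hermitian identity gives $\langle u,Tv\rangle=\langle Tu,v\rangle$, whence $|\langle u,Tv\rangle|\le\|Tu\|\,\|v\|$ by Cauchy--Schwarz; thus $u\in\operatorname*{dom}(T^{\ast})$ and, by uniqueness of $w$ in (\ref{Eq3.2}), $T^{\ast}u=Tu$. So $T\subseteq T^{\ast}$. Next I would use that $T^{\ast}$ is automatically closed — its graph is, up to the flip $(\xi,\eta)\mapsto(-\eta,\xi)$, the orthogonal complement of $G(T)$ in $\mathcal{H}\times\mathcal{H}$, hence norm-closed. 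Since $T^{\operatorname*{clo}}$ is by (\ref{Eq3.4}) the smallest closed extension of $T$ and $T^{\ast}$ is one such closed extension, $T^{\operatorname*{clo}}\subseteq T^{\ast}$ follows. Alternatively, one can get $T^{\operatorname*{clo}}\subseteq T^{\ast}$ directly by the same limiting computation used for well-definedness together with (\ref{Eq3.5}).

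Finally, for $T^{\ast}\subseteq\Delta^{\ast}$ I would invoke the order-reversing behaviour of the adjoint applied to $\Delta\subseteq T$: if $\psi\in\operatorname*{dom}(T^{\ast})$, there is $\mathcal{C}<\infty$ with $|\langle\psi,Tv\rangle|\le\mathcal{C}\|v\|$ for all $v\in\operatorname*{dom}(T)$, in particular for all $v\in\mathcal{D}=\operatorname*{dom}(\Delta)$, where $Tv=\Delta v$; hence $|\langle\psi,\Delta v\rangle|\le\mathcal{C}\|v\|$ on $\mathcal{D}$, so $\psi\in\operatorname*{dom}(\Delta^{\ast})$, and comparing (\ref{Eq3.2}) for $T$ and for $\Delta$ over $v\in\mathcal{D}$ gives $\Delta^{\ast}\psi=T^{\ast}\psi$, i.e. $T^{\ast}\subseteq\Delta^{\ast}$. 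Stringing the four inclusions together yields (\ref{Eq3.7}). None of the steps is a genuine obstacle; the only point needing care is bookkeeping — keeping straight that passing to adjoints reverses inclusions while passing to closures preserves them, and that $\operatorname*{dom}(T)\supseteq\mathcal{D}$ is dense so that the closure and adjoint constructions behave as expected.
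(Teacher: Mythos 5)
Your proposal is correct, and it supplies exactly the standard argument: the paper in fact states Lemma \ref{Lem3.2} without proof, treating the chain (\ref{Eq3.7}) as a known consequence of the definitions (\ref{Eq3.2})--(\ref{Eq3.6}). Each of your four inclusions — hypothesis, set-in-its-closure, $T\subseteq T^{\ast}$ with $T^{\ast}$ closed, and the order-reversal of the adjoint restricted to the dense subdomain $\mathcal{D}$ — is the intended justification, so your write-up simply fills in the details the author omitted.
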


\begin{corollary}
\label{Cor3.3}Let $T$ be a hermitian extension of $\Delta$\emph{;} then
\begin{equation}
\operatorname*{dom}\left(  T^{\ast}\right)  \subseteq\operatorname*{dom}%
\left(  \Delta^{\ast}\right)  \text{.} \label{Eq3.8}%
\end{equation}

\end{corollary}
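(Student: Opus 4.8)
The plan is to read Corollary \ref{Cor3.3} off from Lemma \ref{Lem3.2} essentially for free. The right-hand end of the chain (\ref{Eq3.7}) asserts $T^{\ast}\subseteq\Delta^{\ast}$, and by the definition (\ref{Eq3.6}) of the relation $\subseteq$ between operators this is the containment of graphs $G(T^{\ast})\subseteq G(\Delta^{\ast})$; projecting onto the first coordinate of $\mathcal{H}\times\mathcal{H}$ then gives $\operatorname*{dom}(T^{\ast})\subseteq\operatorname*{dom}(\Delta^{\ast})$, which is (\ref{Eq3.8}).

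I would also record the direct verification, since it isolates exactly why the statement is true and needs none of the intermediate links of (\ref{Eq3.7}). Fix $\psi\in\operatorname*{dom}(T^{\ast})$. By definition of the adjoint domain there is a finite constant $C$ with $|\langle\psi,Tv\rangle|\leq C\left\Vert v\right\Vert$ for all $v\in\operatorname*{dom}(T)$. Because $\Delta\subseteq T$ we have $\operatorname*{dom}(\Delta)\subseteq\operatorname*{dom}(T)$ and $Tv=\Delta v$ whenever $v\in\operatorname*{dom}(\Delta)$; restricting the estimate to such $v$ yields $|\langle\psi,\Delta v\rangle|\leq C\left\Vert v\right\Vert$ for all $v\in\operatorname*{dom}(\Delta)$, i.e.\ $\psi\in\operatorname*{dom}(\Delta^{\ast})$ by the defining condition for $\operatorname*{dom}(\Delta^{\ast})$.

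There is no real obstacle here. The one point worth flagging is that the bound characterizing membership in $\operatorname*{dom}(\Delta^{\ast})$ is tested only against the smaller set $\operatorname*{dom}(\Delta)$, so the same constant $C$ that worked on $\operatorname*{dom}(T)$ works a fortiori; one never has to produce the vector $T^{\ast}\psi$ explicitly or compare it with $\Delta^{\ast}\psi$ — although, by the uniqueness in Riesz' lemma behind (\ref{Eq3.2}), these two vectors do in fact agree on $\operatorname*{dom}(T^{\ast})$.
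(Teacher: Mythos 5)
Your first paragraph is exactly the paper's proof: Corollary \ref{Cor3.3} is read off from the containment $T^{\ast}\subseteq\Delta^{\ast}$ in (\ref{Eq3.7}) together with the graph-containment definition (\ref{Eq3.6}). The additional direct verification you record is also correct, but the argument is essentially the same as the paper's.
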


\begin{proof}
Immediate from (\ref{Eq3.6}) and (\ref{Eq3.7}).
\end{proof}

We now turn to a specific family of hermitian extensions of a fixed densely
defined operator $\Delta$.

\begin{definition}
\label{Def3.4}Let $\Delta$ be a hermitian operator with dense domain
$\mathcal{D}$ in a Hilbert space $\mathcal{H}$. Let $\mathcal{C}$ be a closed
subspace in $\mathcal{H}$, and assume that
\begin{equation}
\mathcal{C}\subset\operatorname*{dom}\left(  \Delta^{\ast}\right)  \text{.}
\label{Eq3.9}%
\end{equation}

\end{definition}

On the space
\begin{equation}
\mathcal{D}+\mathcal{C}=\left\{  v+h|v\in\mathcal{D},~h\in\mathcal{C}\right\}
\label{Eq3.10}%
\end{equation}
set
\begin{equation}
\Delta_{\mathcal{C}}\left(  v+h\right)  :=\Delta v,~\text{for }v\in
\mathcal{D}\text{ and }h\in\mathcal{C}. \label{Eq3.11}%
\end{equation}

\begin{lemma}
\label{Lem3.5}Let $\Delta$, $\mathcal{H}$, and $\mathcal{C}$ be as in the
definition. Then the following two conditions are equivalent\emph{:}

\begin{enumerate}
\item[(i)] $\Delta_{\mathcal{C}}$ in \emph{(}\ref{Eq3.11}\emph{)} is a well
defined hermitian extension operator\emph{;} and

\item[(ii)] $\mathcal{C}\subseteq\ker\left(  \Delta^{\ast}\right)  $.
\end{enumerate}
\end{lemma}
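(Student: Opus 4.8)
The plan is to prove the equivalence by unpacking the two defining requirements of ``well-defined hermitian extension'' and showing each is captured exactly by the condition $\mathcal{C}\subseteq\ker(\Delta^{\ast})$.

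First I would address well-definedness. The formula $\Delta_{\mathcal{C}}(v+h):=\Delta v$ is unambiguous precisely when the decomposition $v+h$ with $v\in\mathcal{D}$, $h\in\mathcal{C}$ is unique modulo the kernel of the assignment; concretely, if $v_1+h_1=v_2+h_2$ then we need $\Delta v_1=\Delta v_2$, i.e. $\Delta(v_1-v_2)=0$ whenever $v_1-v_2=h_2-h_1\in\mathcal{D}\cap\mathcal{C}$. Note that $\Delta$ and $\Delta^{\ast}$ agree on $\mathcal{D}$ (since $\Delta$ is hermitian, $\mathcal{D}\subseteq\operatorname*{dom}(\Delta^{\ast})$ and $\Delta^{\ast}|_{\mathcal{D}}=\Delta$), so for $g\in\mathcal{D}\cap\mathcal{C}$ we have $\Delta g=\Delta^{\ast}g$. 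Thus under (ii), any such $g$ satisfies $\Delta g=\Delta^{\ast}g=0$, giving well-definedness. Conversely, if (ii) fails, I would need to produce a nonzero value of $\Delta$ on $\mathcal{D}\cap\mathcal{C}$, which is the delicate point and I return to it below.

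Next, the hermitian property. Assuming $\Delta_{\mathcal{C}}$ is well defined, I compute $\langle u+h,\Delta_{\mathcal{C}}(v+k)\rangle$ for $u,v\in\mathcal{D}$ and $h,k\in\mathcal{C}$. This equals $\langle u+h,\Delta v\rangle=\langle u,\Delta v\rangle+\langle h,\Delta v\rangle$. Since $h\in\mathcal{C}\subseteq\operatorname*{dom}(\Delta^{\ast})$ we have $\langle h,\Delta v\rangle=\langle\Delta^{\ast}h,v\rangle$. So the expression becomes $\langle\Delta u,v\rangle+\langle\Delta^{\ast}h,v\rangle$. For $\Delta_{\mathcal{C}}$ to be hermitian we need this to be symmetric under interchanging $(u,h)\leftrightarrow(v,k)$; subtracting the symmetric counterpart, the requirement reduces to $\langle\Delta^{\ast}h,v\rangle+\langle u,\Delta^{\ast}k\rangle^{-}$-type cross terms vanishing, and unwinding this one finds it is equivalent to $\langle\Delta^{\ast}h,v\rangle=0$ for all $v\in\mathcal{D}$ and all $h\in\mathcal{C}$ together with the analogous statement with the $\Delta^{\ast}k$ term. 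Because $\mathcal{D}$ is dense, $\langle\Delta^{\ast}h,v\rangle=0$ for all $v\in\mathcal{D}$ forces $\Delta^{\ast}h=0$, i.e. $h\in\ker(\Delta^{\ast})$. Thus hermiticity of $\Delta_{\mathcal{C}}$ holds if and only if $\mathcal{C}\subseteq\ker(\Delta^{\ast})$, which simultaneously handles the converse direction for well-definedness since $\mathcal{C}\subseteq\ker(\Delta^{\ast})$ already gave us $\Delta|_{\mathcal{D}\cap\mathcal{C}}=0$.

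I expect the main obstacle to be the bookkeeping around the converse of well-definedness in isolation: showing that if $\mathcal{C}\not\subseteq\ker(\Delta^{\ast})$ then $\Delta_{\mathcal{C}}$ genuinely fails one of the two conditions. The clean way around this is to observe that (i) packages \emph{both} well-definedness \emph{and} the hermitian identity, and the hermitian identity alone already entails $\mathcal{C}\subseteq\ker(\Delta^{\ast})$ by the density argument above — so one never needs to analyze failure of well-definedness separately. The only subtlety to be careful about is that we are extending a not-necessarily-closed $\Delta$, so I will consistently use $\Delta^{\ast}=(\Delta^{\operatorname*{clo}})^{\ast}$ (equation (\ref{Eq3.5})) and the fact that $\mathcal{C}\subseteq\operatorname*{dom}(\Delta^{\ast})$ is a standing hypothesis of Definition \ref{Def3.4}, so all the pairings $\langle h,\Delta v\rangle=\langle\Delta^{\ast}h,v\rangle$ used above are legitimate.
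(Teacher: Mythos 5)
Your proof is correct, and on the direction (ii)$\Rightarrow$(i) it runs along essentially the same lines as the paper: well-definedness because $\Delta g=\Delta^{\ast}g=0$ for $g\in\mathcal{D}\cap\mathcal{C}$, and hermiticity because the cross terms $\langle h,\Delta v\rangle=\langle \Delta^{\ast}h,v\rangle$ vanish when $h\in\ker(\Delta^{\ast})$. Where you genuinely diverge is in (i)$\Rightarrow$(ii). The paper extracts (ii) from the well-definedness implication alone: from $v+h=0$ it applies $\Delta^{\ast}$ to get $0=\Delta v+\Delta^{\ast}h=\Delta^{\ast}h$ and then asserts that this ``applies to all $h\in\mathcal{C}$.'' But that argument only reaches those $h$ which arise as $-v$ for some $v\in\mathcal{D}$, i.e.\ $h\in\mathcal{C}\cap\mathcal{D}$; for a general $h\in\mathcal{C}$ it says nothing. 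Your route --- evaluate the hermitian identity with $\psi_{1}=0+h$, so that $\Delta_{\mathcal{C}}\psi_{1}=0$, against $\psi_{2}=v\in\mathcal{D}$, conclude $\langle \Delta^{\ast}h,v\rangle=\langle h,\Delta v\rangle=0$ for all $v\in\mathcal{D}$, and then invoke density of $\mathcal{D}$ --- reaches every $h\in\mathcal{C}$, and your closing remark that one therefore never needs a standalone converse to well-definedness is exactly the right way to organize the proof; it in fact repairs the weak point in the paper's own argument. Two small cosmetic cautions: with the paper's convention that the inner product is linear in the \emph{second} variable, keep track of which slot $\Delta_{\mathcal{C}}$ occupies when you symmetrize, and the dangling expression written as $\langle u,\Delta^{\ast}k\rangle^{-}$ should simply be the conjugate cross term $\overline{\langle \Delta^{\ast}k,u\rangle}$; neither affects the validity of the argument.
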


\begin{proof}
(i)$\Rightarrow$(ii). From (i) we conclude that the following implication
holds:
\begin{equation}
\left(  v\in\mathcal{D},~h\in\mathcal{C},~v+h=0\right)  \Rightarrow\Delta v=0.
\label{Eq3.12}%
\end{equation}
Now use (\ref{Eq3.9}), and apply $\Delta^{\ast}$ to $v+h$ in (\ref{Eq3.12}):
We get
\[
0=\Delta^{\ast}\left(  v+h\right)  =\Delta^{\ast}v+\Delta^{\ast}h=\Delta
v+\Delta^{\ast}h=\Delta^{\ast}h;
\]
so $h\in\ker\left(  \Delta^{\ast}\right)  $. This applies to all
$h\in\mathcal{C}$ so (ii) holds.

(ii)$\Rightarrow$(i). Assume (ii). We must then prove the implication
(\ref{Eq3.12}). Then it follows that $\Delta_{\mathcal{C}}$ is a well defined
extension operator. If $v\in\mathcal{D}$, $h\in\mathcal{C}$, and $v+h=0$,
then
\[
\Delta^{\ast}\left(  v+h\right)  =0=\Delta^{\ast}v+\Delta^{\ast}h=\Delta v
\]
since $h\in\ker\left(  \Delta^{\ast}\right)  $. Hence $\Delta v=0$ which
proves (\ref{Eq3.12}).

To prove that $\Delta_{\mathcal{C}}$ is hermitian consider vectors $\psi_{i}%
$:$=v_{i}+h_{i}$, $v_{i}\in\mathcal{D}$, $h_{i}\in\mathcal{C}$, $i=1,2$.

Then
\begin{align*}
\left\langle \Delta_{\mathcal{C}}\psi_{1},\psi_{2}\right\rangle  &
=\left\langle \Delta v_{1},v_{2}+h_{2}\right\rangle \\
&  =\left\langle \Delta v_{1},v_{2}\right\rangle +\left\langle \Delta
v_{1},h_{2}\right\rangle \\
&  =\left\langle v_{1},\Delta v_{2}\right\rangle +\left\langle v_{1}%
,\Delta^{\ast}h_{2}\right\rangle \\
&  =\left\langle v_{1},\Delta v_{2}\right\rangle \\
&  =\left\langle v_{1}+h_{1},\Delta v_{2}\right\rangle \\
&  =\left\langle \psi_{1},\Delta_{\mathcal{C}}\psi_{2}\right\rangle ,
\end{align*}
which is the desired conclusion; in other words, $\Delta_{\mathcal{C}}$ is a
hermitian extension operator.
\end{proof}

\begin{theorem}
\label{Theo3.6}Let $\Delta$ be a hermitian operator with dense domain
$\mathcal{D}$ in a Hilbert space, and let $\mathcal{C}$ be a closed subspace
such that $\mathcal{C}$ $\subset\ker\left(  \Delta^{\ast}\right)  $. Let
$\Delta_{\mathcal{C}}$ be the corresponding hermitian extension operator.

Then
\begin{equation}
\operatorname*{dom}\left(  \Delta_{\mathcal{C}}^{\ast}\right)  =\left\{
\psi\in\operatorname*{dom}\left(  \Delta^{\ast}\right)  |\Delta^{\ast}\psi
\in\mathcal{H}\ominus\mathcal{C}\right\}  \label{Eq3.13}%
\end{equation}
where
\begin{equation}
\mathcal{H}\ominus\mathcal{C}\text{:=}\left\{  \varphi\in\mathcal{H}%
|\left\langle \varphi,h\right\rangle =0,\forall h\in\mathcal{C}\right\}
\text{.} \label{Eq3.14}%
\end{equation}

\end{theorem}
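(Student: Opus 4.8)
The plan is to prove the two inclusions in (\ref{Eq3.13}) separately, working directly from the definition of the adjoint and exploiting the decomposition $\operatorname{dom}(\Delta_{\mathcal{C}}) = \mathcal{D} + \mathcal{C}$ together with the hypothesis $\mathcal{C}\subseteq\ker(\Delta^{\ast})$.

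First I would show the inclusion ``$\subseteq$''. Suppose $\psi\in\operatorname{dom}(\Delta_{\mathcal{C}}^{\ast})$. Since $\Delta\subseteq\Delta_{\mathcal{C}}$ (this is a hermitian extension by Lemma \ref{Lem3.5}), Corollary \ref{Cor3.3} gives $\psi\in\operatorname{dom}(\Delta^{\ast})$, and moreover $\Delta_{\mathcal{C}}^{\ast}\psi = \Delta^{\ast}\psi$ by (the proof of) Lemma \ref{Lem3.2}, i.e. because $\Delta_{\mathcal{C}}^{\ast}\subseteq\Delta^{\ast}$. It remains to see that $\Delta^{\ast}\psi\perp\mathcal{C}$. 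For any $h\in\mathcal{C}$, write $h = 0 + h\in\mathcal{D}+\mathcal{C}$, so $h\in\operatorname{dom}(\Delta_{\mathcal{C}})$ and $\Delta_{\mathcal{C}}h = \Delta\cdot 0 = 0$ by (\ref{Eq3.11}). Then
\[
\langle \Delta^{\ast}\psi, h\rangle = \langle \Delta_{\mathcal{C}}^{\ast}\psi, h\rangle = \langle \psi, \Delta_{\mathcal{C}}h\rangle = \langle \psi, 0\rangle = 0,
\]
so $\Delta^{\ast}\psi\in\mathcal{H}\ominus\mathcal{C}$, as claimed.

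Next I would prove ``$\supseteq$''. Let $\psi\in\operatorname{dom}(\Delta^{\ast})$ with $\Delta^{\ast}\psi\in\mathcal{H}\ominus\mathcal{C}$; I must show $\psi\in\operatorname{dom}(\Delta_{\mathcal{C}}^{\ast})$, i.e. that there is a finite constant $C$ with $|\langle\psi,\Delta_{\mathcal{C}}\varphi\rangle|\le C\|\varphi\|$ for all $\varphi\in\operatorname{dom}(\Delta_{\mathcal{C}})$. Write $\varphi = v + h$ with $v\in\mathcal{D}$, $h\in\mathcal{C}$. Then
\[
\langle \psi, \Delta_{\mathcal{C}}\varphi\rangle = \langle \psi, \Delta v\rangle = \langle \Delta^{\ast}\psi, v\rangle.
\]
The natural move is to replace $v$ by $v + h = \varphi$ inside the inner product on the right, using $\langle\Delta^{\ast}\psi, h\rangle = 0$ since $\Delta^{\ast}\psi\in\mathcal{H}\ominus\mathcal{C}$; this gives $\langle\psi,\Delta_{\mathcal{C}}\varphi\rangle = \langle\Delta^{\ast}\psi,\varphi\rangle$, hence the bound $|\langle\psi,\Delta_{\mathcal{C}}\varphi\rangle|\le\|\Delta^{\ast}\psi\|\,\|\varphi\|$, so $\psi\in\operatorname{dom}(\Delta_{\mathcal{C}}^{\ast})$ and in fact $\Delta_{\mathcal{C}}^{\ast}\psi = \Delta^{\ast}\psi$.

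The step I expect to require the most care is the passage from $v$ to $\varphi = v+h$ in the ``$\supseteq$'' direction: one must be sure that the orthogonality $\langle\Delta^{\ast}\psi, h\rangle = 0$ is exactly what (\ref{Eq3.14}) supplies, and that nothing more than $\psi\in\operatorname{dom}(\Delta^{\ast})$ is needed to make $\langle\Delta^{\ast}\psi, v\rangle = \langle\psi,\Delta v\rangle$ legitimate. Everything else is a direct unwinding of definitions; the only structural inputs are Lemma \ref{Lem3.5} (so that $\Delta_{\mathcal{C}}$ is genuinely a hermitian extension and the decomposition is well posed) and the elementary fact that an adjoint reverses inclusions of operators, already recorded in Lemma \ref{Lem3.2}.
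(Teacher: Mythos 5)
Your proof is correct. The inclusion $(\supseteq)$ is essentially the paper's own argument: replace $v$ by $v+h$ inside $\langle\Delta^{\ast}\psi,\,\cdot\,\rangle$ using the orthogonality $\Delta^{\ast}\psi\perp\mathcal{C}$, and read off the bound $\|\Delta^{\ast}\psi\|\,\|v+h\|$. For the inclusion $(\subseteq)$, however, you take a genuinely different and shorter route. The paper starts from the defining estimate $|\langle\psi,\Delta_{\mathcal{C}}(v+h)\rangle|\le C\|v+h\|$, replaces $h$ by $\lambda h$ and optimizes over $\lambda\in\mathbb{C}$ to extract the Schwarz-type inequality $|\langle\psi,\Delta v\rangle|^{2}\le C^{2}\|P_{h}^{\bot}v\|^{2}$, and then invokes Riesz' theorem to produce a representing vector $\varphi^{\ast}\in\{h\}^{\bot}$ with $\Delta^{\ast}\psi=\varphi^{\ast}$, finally intersecting over all $h\in\mathcal{C}$. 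You instead observe that $\Delta\subseteq\Delta_{\mathcal{C}}$ forces $\Delta_{\mathcal{C}}^{\ast}\subseteq\Delta^{\ast}$ (the adjoint-reversal fact already implicit in Lemma \ref{Lem3.2}), which immediately gives $\psi\in\operatorname{dom}(\Delta^{\ast})$ with $\Delta^{\ast}\psi=\Delta_{\mathcal{C}}^{\ast}\psi$, and then you exploit that every $h\in\mathcal{C}$ lies in $\operatorname{dom}(\Delta_{\mathcal{C}})$ with $\Delta_{\mathcal{C}}h=0$, so $\langle\Delta^{\ast}\psi,h\rangle=\langle\psi,\Delta_{\mathcal{C}}h\rangle=0$. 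This bypasses the quadratic optimization and the rank-one projections entirely; the only place the hypothesis $\mathcal{C}\subseteq\ker(\Delta^{\ast})$ enters your argument is through Lemma \ref{Lem3.5}, to guarantee that $\Delta_{\mathcal{C}}$ is well defined on the (possibly non-direct) sum $\mathcal{D}+\mathcal{C}$. Your version is cleaner and more conceptual; the paper's computation has the incidental merit of producing the explicit pointwise estimate $|\langle\psi,\Delta v\rangle|\le C\|P_{h}^{\bot}v\|$, but that estimate is not needed for the statement being proved.
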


\begin{proof}
Whenever $\mathcal{C}$ is a closed subspace, the corresponding orthogonal
projection will be denoted $P_{\mathcal{C}}$. Recall $P_{\mathcal{C}}$
satisfies
\begin{equation}
P_{\mathcal{C}}=P_{\mathcal{C}}^{\ast}=P_{\mathcal{C}}^{2},\text{ and}
\label{Eq3.15}%
\end{equation}%
\begin{equation}
P_{\mathcal{C}}\mathcal{H}=\mathcal{C}\text{.} \label{Eq3.16}%
\end{equation}

Set $P_{\mathcal{C}}^{\bot}=I-P_{\mathcal{C}}$; then $P_{\mathcal{C}}^{\bot}$
is the projection onto $\mathcal{H}\ominus\mathcal{C}$.

For a one-dimensional subspace spanned by a single vector $h\not =0$, we have
\begin{equation}
P_{h}v=\left\Vert h\right\Vert ^{-2}\left\langle h,v\right\rangle h,~\forall
v\in\mathcal{H}\text{.} \label{Eq3.17}%
\end{equation}

We now turn to the proof of (\ref{Eq3.13}). First this inclusion:

$\quad\left(  \supseteq\right)  $ (Easy direction!) So let $\psi
\in\operatorname*{dom}\left(  \Delta^{\ast}\right)  $, and assume that
$\Delta^{\ast}\psi\in\mathcal{C}^{\bot}$. Then we get the following estimate:
\begin{align*}
\left\vert \left\langle \psi,\Delta_{\mathcal{C}}\left(  v+h\right)
\right\rangle \right\vert  &  =\left\vert \left\langle \psi,\Delta
v\right\rangle \right\vert \\
&  =\left\vert \left\langle \Delta^{\ast}\psi,v\right\rangle \right\vert \\
&  =\left\vert \left\langle \Delta^{\ast}\psi,v+h\right\rangle \right\vert \\
&  \leq\left\Vert \Delta^{\ast}\psi\right\Vert \cdot\left\Vert v+h\right\Vert
\end{align*}
valid for all $v\in\mathcal{D}$, and all $h\in\mathcal{C}$.

We conclude from Definition \ref{Def3.1} that $\psi\in\operatorname*{dom}%
\left(  \Delta_{\mathcal{C}}^{\ast}\right)  $.

\quad$\left(  \subseteq\right)  $ Conversely, if some fixed vector $\psi$ is
in the $\operatorname*{dom}\left(  \Delta_{\mathcal{C}}^{\ast}\right)  $, then
there is a constant $C<\infty$ such that $\left\vert \left\langle \psi
,\Delta_{\mathcal{C}}\left(  v+h\right)  \right\rangle \right\vert \leq
C\left\Vert v+h\right\Vert $ for all $v\in\mathcal{D}$ $and$ $h\in\mathcal{C}$.

Since $\Delta_{\mathcal{C}}\left(  v+h\right)  =\Delta v$, we get
\begin{equation}
\left\vert \left\langle \psi,\Delta v\right\rangle \right\vert ^{2}\leq
C^{2}\left\Vert v+h\right\Vert ^{2}=C^{2}\left(  \left\Vert v\right\Vert
^{2}+2\operatorname*{Re}\left\langle v,h\right\rangle +\left\Vert h\right\Vert
^{2}\right)  \text{.} \label{Eq3.18}%
\end{equation}

Now replacing $h$ with $\lambda h$ for $\lambda\in\mathbb{C}$, we arrive at
the following estimate; essentially an application of Schwarz' inequality:

As a result we get the following estimate:
\[
\left\Vert h\right\Vert ^{2}\left\vert \left\langle \psi,\Delta v\right\rangle
\right\vert ^{2}\leq C^{2}\cdot\left(  \left\Vert h\right\Vert ^{2}%
\cdot\left\Vert v\right\Vert ^{2}-\left\vert \left\langle h,v\right\rangle
\right\vert ^{2}\right)  ,
\]
valid for all $v\in\mathcal{D}$ and $h\in\mathcal{C}$. Or equivalently:
\[
\left\vert \left\langle \psi,\Delta v\right\rangle \right\vert ^{2}\leq
C^{2}\cdot\left(  \left\Vert v\right\Vert ^{2}-\frac{\left\vert \left\langle
h,v\right\rangle \right\vert ^{2}}{\left\Vert h\right\Vert ^{2}}\right)  .
\]

Introducing the rank-one projection $P_{h}$ this then reads as follows:
\[
\left\vert \left\langle \psi,\Delta v\right\rangle \right\vert ^{2}\leq
C^{2}\cdot\left(  \left\Vert v\right\Vert ^{2}-\left\Vert P_{h}v\right\Vert
^{2}\right)  \text{;}%
\]
or equivalently:
\begin{equation}
\left\vert \left\langle \psi,\Delta v\right\rangle \right\vert ^{2}\leq
C^{2}\cdot\left\Vert P_{h}^{\bot}v\right\Vert ^{2}\text{;} \label{Eq3.19}%
\end{equation}
See equation \ref{Eq3.17}.

An application of Riesz' theorem then yields a vector $\varphi^{\ast}%
\in\left\{  h\right\}  ^{\bot}$ such that
\begin{equation}
\left\langle \psi,\Delta v\right\rangle =\left\langle \varphi^{\ast}%
,P_{h}^{\bot}v\right\rangle \label{Eq3.20}%
\end{equation}
valid for all $v\in\mathcal{D}$.

But $\left\langle \varphi^{\ast},P_{h}^{\bot}v\right\rangle =\left\langle
\varphi^{\ast},v\right\rangle $, and we conclude that
\begin{equation}
\left\langle \psi,\Delta v\right\rangle =\left\langle \varphi^{\ast
},v\right\rangle \label{Eq3.21}%
\end{equation}
for all $v\in\mathcal{D}$. From (\ref{Eq3.21}), and Definition \ref{Def3.1},
we conclude that $\psi\in\operatorname*{dom}\left(  \Delta^{\ast}\right)  $,
and that
\[
\left\langle \Delta^{\ast}\psi-\varphi^{\ast},v\right\rangle =0~\text{for all
}v\in\mathcal{D}.
\]
Since $\mathcal{D}$ is dense in $\mathcal{H}$, we get
\[
\Delta^{\ast}\psi=\varphi^{\ast}\in\left\{  h\right\}  ^{\bot},
\]
and therefore
\[
\Delta^{\ast}\psi\in\bigcap\nolimits_{h\in\mathcal{C}}\left\{  h\right\}
^{\bot}=\mathcal{C}^{\bot}=\mathcal{H}\ominus\mathcal{C}.
\]

\end{proof}

\section{Pairs of Hermitian operators in duality\label{PairsInDuality}}

In Lemma \ref{Lem3.5} we introduced the following fundamental properties for a
pair $\left(  \Delta,\mathcal{C}\right)  $ where $\Delta$ is a given Hermitian
operator with dense domain $\mathcal{D}$ in a fixed Hilbert space
$\mathcal{H}$; and where $\mathcal{C}$ is a closed subspace in $\mathcal{H}$.

\begin{definition}
\label{Def4.1}Let $\left(  \Delta,\mathcal{C}\right)  $ be a pair as described
above, and let $\mathcal{H}$ be the ambient Hilbert space. We say the $\left(
\Delta,\mathcal{C}\right)  $ is a \emph{duality pair} iff the inclusion
\begin{equation}
\mathcal{C}\subseteq\ker\left(  \Delta^{\ast}\right)  \label{Eq4.1}%
\end{equation}
holds.

Let $R\left(  \Delta\right)  =\left\{  \Delta v|v\in\mathcal{D}\right\}  $ be
the range of $\Delta$, and
\begin{equation}
R\left(  \Delta\right)  ^{\operatorname*{clo}}=R\left(  \Delta\right)
^{\bot\bot} \label{Eq4.2}%
\end{equation}
the norm closure in $\mathcal{H}$.
\end{definition}

\begin{lemma}
\label{Lem4.2}For a pair $\left(  \Delta,\mathcal{C}\right)  $ in
$\mathcal{H}$, the following conditions are equivalent\emph{:}

\begin{enumerate}
\item[(i)] $\left(  \Delta,\mathcal{C}\right)  $ is a duality pair\emph{;} and

\item[(ii)] $R\left(  \Delta\right)  ^{\operatorname*{clo}}\subseteq
\mathcal{H}\ominus\mathcal{C}$.
\end{enumerate}
\end{lemma}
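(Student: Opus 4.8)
The plan is to prove the two implications separately, exploiting the standard relation between the range of an operator and the kernel of its adjoint. The key identity I will use is that for any densely defined operator $\Delta$ one has $R(\Delta)^{\bot}=\ker(\Delta^{\ast})$: indeed, a vector $\varphi$ is orthogonal to every $\Delta v$, $v\in\mathcal{D}$, precisely when $\langle\varphi,\Delta v\rangle=0$ for all $v\in\mathcal{D}$, which by Definition \ref{Def3.1} (with the bound constant $\mathcal{C}=0$) says exactly $\varphi\in\operatorname*{dom}(\Delta^{\ast})$ and $\Delta^{\ast}\varphi=0$, i.e. $\varphi\in\ker(\Delta^{\ast})$. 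Taking orthogonal complements and using that for a subspace $M$ one has $M^{\bot\bot}=M^{\operatorname*{clo}}$, this gives $R(\Delta)^{\operatorname*{clo}}=\ker(\Delta^{\ast})^{\bot}$, which by (\ref{Eq4.2}) is the precise bridge between (i) and (ii).

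For (i)$\Rightarrow$(ii): assume $\mathcal{C}\subseteq\ker(\Delta^{\ast})$. Taking orthogonal complements reverses the inclusion, so $\ker(\Delta^{\ast})^{\bot}\subseteq\mathcal{C}^{\bot}=\mathcal{H}\ominus\mathcal{C}$ (the last equality by (\ref{Eq3.14})). By the identity above, $R(\Delta)^{\operatorname*{clo}}=\ker(\Delta^{\ast})^{\bot}$, hence $R(\Delta)^{\operatorname*{clo}}\subseteq\mathcal{H}\ominus\mathcal{C}$, which is (ii). For (ii)$\Rightarrow$(i): assume $R(\Delta)^{\operatorname*{clo}}\subseteq\mathcal{H}\ominus\mathcal{C}$. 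Taking complements again, $\mathcal{C}=\mathcal{C}^{\bot\bot}=(\mathcal{H}\ominus\mathcal{C})^{\bot}\subseteq(R(\Delta)^{\operatorname*{clo}})^{\bot}=R(\Delta)^{\bot}=\ker(\Delta^{\ast})$, where I have used that $\mathcal{C}$ is a closed subspace so $\mathcal{C}^{\bot\bot}=\mathcal{C}$, that complementation reverses inclusions, and finally $R(\Delta)^{\bot}=\ker(\Delta^{\ast})$. This is exactly (\ref{Eq4.1}), so $(\Delta,\mathcal{C})$ is a duality pair.

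The only technical point that deserves care — and the step I expect to be the ``main obstacle,'' though it is really just bookkeeping — is justifying $R(\Delta)^{\bot}=\ker(\Delta^{\ast})$ cleanly from the definition of $\Delta^{\ast}$ given in the text: one must observe that $\langle\varphi,\Delta v\rangle=0$ for all $v\in\mathcal{D}$ trivially satisfies the boundedness requirement $|\langle\varphi,\Delta v\rangle|\le C\|v\|$ (with $C=0$), so $\varphi\in\operatorname*{dom}(\Delta^{\ast})$, and then (\ref{Eq3.2}) forces $\Delta^{\ast}\varphi=0$; conversely $\varphi\in\ker(\Delta^{\ast})$ gives $\langle\varphi,\Delta v\rangle=\langle\Delta^{\ast}\varphi,v\rangle=0$. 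Everything else is the elementary Hilbert-space fact that $M\mapsto M^{\bot}$ is inclusion-reversing with $M^{\bot\bot}=\overline{M}$, combined with the definitions (\ref{Eq4.2}) and (\ref{Eq3.14}) already recorded in the paper. No separate treatment of finite- versus infinite-dimensional $\mathcal{C}$ is needed, since $\mathcal{C}$ is assumed closed throughout.
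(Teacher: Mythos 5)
Your proof is correct and follows essentially the same route as the paper: both hinge on the identity $R(\Delta)^{\bot}=\ker(\Delta^{\ast})$ (equivalently $R(\Delta)^{\operatorname*{clo}}=R(\Delta)^{\bot\bot}=(\ker\Delta^{\ast})^{\bot}$) and then pass to orthogonal complements in each direction. You merely supply more detail than the paper does — justifying $R(\Delta)^{\bot}=\ker(\Delta^{\ast})$ from the definition of the adjoint and writing out the reverse implication, which the paper dispatches with ``the argument works in reverse'' — so this is elaboration, not a different argument.
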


\begin{proof}
(i)$\Rightarrow$(ii). Given (\ref{Eq4.1}), we may take ortho-complements, and
\begin{equation}
\left(  \ker\left(  \Delta^{\ast}\right)  \right)  ^{\bot}\subseteq
\mathcal{C}^{\bot} \label{Eq4.3}%
\end{equation}
The desired (ii) now follows from
\[
R\left(  \Delta\right)  ^{\operatorname*{clo}}=R\left(  \Delta\right)
^{\bot\bot}=\left(  \ker\left(  \Delta^{\ast}\right)  \right)  ^{\bot
}\text{and }\mathcal{C}^{\bot}=\mathcal{H}\ominus\mathcal{C}.
\]

(ii)$\Rightarrow$(i). The above argument works in reverse: Take perpendicular
on both sides in (\ref{Eq4.3}), and note that the containment reverses, so
(ii) implies (i).
\end{proof}

The following family of reproducing kernel Hilbert spaces includes duality
pairs. This in turn includes all graph-Laplacians on infinite weighted graphs,
as we will show.

\begin{definition}
\label{Def4.3}Let $X$ be a set. Pick some $o\in X$, and set $X^{\ast}$%
\emph{:}$=X\diagdown\left\{  0\right\}  $. A Hilbert space $\mathcal{H}$ is
said to be a \emph{reproducing kernel Hilbert space with base-point} if there
is a function
\begin{equation}
k\text{\emph{:\thinspace}}X\times X^{\ast}\rightarrow\mathbb{C} \label{Eq4.4}%
\end{equation}
such that
\begin{equation}
v_{x}\left(  \cdot\right)  \text{\emph{:}}=k\left(  \cdot,x\right)
\in\mathcal{H},~\forall x\in X^{\ast}; \label{Eq4.5}%
\end{equation}%
\begin{equation}
\left\langle v_{x},\,f\right\rangle _{\mathcal{H}}=f\left(  x\right)
-f\left(  o\right)  ,~\forall f\in\mathcal{H},~\forall x\in X^{\ast}\text{.}
\label{Eq4.6}%
\end{equation}
In particular, $\mathcal{H}$ is a space of functions on $X$. The inner product
in $\mathcal{H}$ is denoted $\left\langle \cdot,\cdot\right\rangle
_{\mathcal{H}}$ or simply $\left\langle \cdot,\cdot\right\rangle $.
\end{definition}

In addition, we require
\begin{equation}
\text{closed span}\left\{  v_{x}|x\in X^{\ast}\right\}  =\mathcal{H};\text{
and} \label{Eq4.7}%
\end{equation}%
\begin{equation}
\left\{  \delta_{x}|x\in X\right\}  \subset\mathcal{H}\text{.} \label{Eq4.8}%
\end{equation}
Hence
\begin{equation}
\delta_{x}\left(  y\right)  =\left\{
\begin{array}
[c]{ll}%
1 & \text{if }y=x\\
0 & \text{if }y\not =x\text{ in }X;
\end{array}
\right.  \label{Eq4.9}%
\end{equation}
i.e., the Dirac-functions on $X$.

\begin{remark}
\label{Rem4.4}\emph{(}a\emph{)} Because of \emph{(}\ref{Eq4.6}\emph{)},
$\mathcal{H}$ is really a space of functions modulo the constant functions.

\emph{(}b\emph{)} Not all reproducing kernel Hilbert spaces have property
\emph{(}\ref{Eq4.8}\emph{):} Take for example $X$:$=\left[  0,1\right]
,~o=0$,
\begin{equation}
k\left(  x,y\right)  \text{\emph{:}}=x\wedge y\text{,} \label{Eq4.10}%
\end{equation}
i.e., the smallest two numbers.
\end{remark}

Let $\mathcal{H}$ be the space of measurable functions $f$ on $X$ such that
the distribution derivative $f^{\prime}=\frac{df}{dx}$ is in $L^{2}\left(
0,1\right)  $. Set
\begin{equation}
\left\Vert f\right\Vert _{\mathcal{H}}^{2}\text{:}=\int\limits_{0}%
^{1}\left\vert f^{\prime}\left(  x\right)  \right\vert ^{2}~dx\text{.}
\label{Eq4.11}%
\end{equation}

It is easy to check then that conditions (\ref{Eq4.5})--(\ref{Eq4.7}) will be
satisfied; but that (\ref{Eq4.8}) will \textit{not} hold.

On the other hand, energy Hilbert spaces for weighted graphs will satisfy
(\ref{Eq4.8}). Specifically, let $\left(  G,c\right)  =\left(  G^{0}%
,G^{1},c\right)  $ be an (infinite) weighted graph, i.e.,

\begin{itemize}
\item $G^{0}=$ the vertex set (discrete);

\item $G^{1}\subset G^{0}\times G^{0}$ is the set of edges in $G$;

\item $c\,$:$\,G^{1}\rightarrow\mathbb{R}$ a fixed weight function such that
$c\left(  xy\right)  =c\left(  yx\right)  $ for all $\left(  xy\right)  \in
G^{1}$.
\end{itemize}

For functions $u$ and $v$ on $G^{0}$ set
\begin{equation}
\left\langle u,v\right\rangle _{\mathcal{H}}\text{:}=\frac{1}{2}\underset{%
\genfrac{}{}{0pt}{}{x,y}{\text{s.t. }\left(  xy\right)  \in G^{1}}%
}{\sum\sum}c\left(  xy\right)  \left(  \overline{u\left(  x\right)
}-\overline{u\left(  y\right)  }\right)  \left(  v\left(  x\right)  -v\left(
y\right)  \right)  ; \label{Eq4.12}%
\end{equation}
and $\left\Vert u\right\Vert _{\mathcal{H}}^{2}=\left\langle u,u\right\rangle
_{\mathcal{H}}$. (We choose our inner product to be linear in the second variable.)

The Hilbert space $\mathcal{H}$ consists of all functions $u$ such that
\begin{equation}
\left\Vert u\right\Vert _{\mathcal{H}}^{2}=\underset{%
\genfrac{}{}{0pt}{}{x,y}{\left(  xy\right)  \in G^{1}}%
}{\sum\sum}c\left(  xy\right)  \left\vert u\left(  x\right)  -u\left(
y\right)  \right\vert ^{2}<\infty\text{.} \label{Eq4.13}%
\end{equation}

We proved in \cite{JoPe08} that $\mathcal{H}$ is a reproducing kernel Hilbert
space with base-point; in particular, if $o\in G^{0}$ is chosen, then
conditions (\ref{Eq4.5})--(\ref{Eq4.8}) are satisfied.

Here we shall include (\ref{Eq4.8}) as part of our definition.

More precisely:

\begin{proposition}
\label{Prop4.5}Let $\left(  G,c\right)  $ be a weighted graph with energy
Hilbert space $\mathcal{H}=\mathcal{H}_{E}$.

Pick a base-point $o\in G^{0}$, and let $\left(  v_{x}\right)  _{x\in
G^{0}\diagdown\left(  0\right)  }$ be the family \emph{(dipoles)}\thinspace
from \emph{(}\ref{Eq4.6}\emph{)}.

Suppose, for all $x\in G^{0}$,
\begin{equation}
c\left(  x\right)  \text{\emph{:}}=\sum_{%
\genfrac{}{}{0pt}{}{y\text{, such that}}{\left(  xy\right)  \in G^{1}}%
}c\left(  xy\right)  <\infty; \label{Eq4.14}%
\end{equation}
then \emph{(}\ref{Eq4.8}\emph{)} holds, and
\begin{equation}
\delta_{x}=c\left(  x\right)  v_{x}-\sum_{\left(  xy\right)  \in G^{1}%
}c\left(  xy\right)  v_{y}\text{.} \label{Eq4.15}%
\end{equation}

\end{proposition}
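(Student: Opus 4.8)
The plan is to verify the proposed identity \eqref{Eq4.15} directly by pairing both sides against an arbitrary $f \in \mathcal{H}$ and using the reproducing property \eqref{Eq4.6}. The point is that $\delta_x \in \mathcal{H}$ is equivalent to the linear functional $f \mapsto f(x)$ (up to the constant-function ambiguity, i.e. really $f\mapsto f(x)-f(o)$) being bounded on $\mathcal{H}$, and the candidate vector on the right of \eqref{Eq4.15} is a concrete element of $\mathcal{H}$ — it is a finite linear combination of dipoles $v_y$, hence lies in $\mathcal{H}$ provided the coefficient sum converges, which is exactly what \eqref{Eq4.14} guarantees. So the real content is the computation of $\langle w, f\rangle_{\mathcal{H}}$ where $w := c(x)v_x - \sum_{(xy)\in G^1} c(xy) v_y$, and checking it equals $(\delta_x, f)$ in the appropriate sense.

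First I would fix $x \in G^{0}$ and assume without loss that $x \neq o$ (the base point case is handled by the constant-modulo convention, or by a separate elementary remark). Using \eqref{Eq4.6} for each term,
\[
\langle w, f\rangle_{\mathcal{H}} = c(x)\bigl(f(x)-f(o)\bigr) - \sum_{(xy)\in G^1} c(xy)\bigl(f(y)-f(o)\bigr).
\]
Now I collect the $f(o)$ terms: their total coefficient is $-c(x) + \sum_{(xy)\in G^1} c(xy) = 0$ by the definition \eqref{Eq4.14} of $c(x)$, so the base-point contributions cancel and we are left with
\[
\langle w, f\rangle_{\mathcal{H}} = c(x)f(x) - \sum_{(xy)\in G^1} c(xy) f(y) = \sum_{(xy)\in G^1} c(xy)\bigl(f(x)-f(y)\bigr).
\]
The final step is to recognize the right-hand side as $f(x)$ in the sense required for $\delta_x$: one checks that for the energy inner product \eqref{Eq4.12}, expanding $\langle \delta_x, f\rangle_{\mathcal{H}}$ directly from the definition gives precisely $\sum_{y:(xy)\in G^1} c(xy)(f(x)-f(y))$, since in the double sum defining $\langle\delta_x,f\rangle$ only the edges incident to $x$ survive (the term $\delta_x(x)-\delta_x(y)$ is $1$ when the edge meets $x$ at $x$, $-1$ when it meets $x$ at $y$, and $0$ otherwise, and the symmetrization by $\tfrac12$ accounts for each edge being counted twice). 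This shows simultaneously that $\delta_x \in \mathcal{H}$ (it equals $w$, a bona fide element) and that the identity \eqref{Eq4.15} holds.

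The main obstacle, such as it is, is bookkeeping rather than conceptual: one must be careful that the series $\sum_{(xy)\in G^1} c(xy) v_y$ converges in $\mathcal{H}$-norm when $G$ is infinite and the vertex $x$ has infinitely many neighbors. This is where hypothesis \eqref{Eq4.14} is essential — it is not merely used to cancel the $f(o)$ coefficients but to ensure the sum defines an element of $\mathcal{H}$ at all. I would address this by a Cauchy-sequence argument: truncating to finite edge subsets and estimating the tail of $\big\|\sum c(xy) v_y\big\|_{\mathcal{H}}$ using the Gram relations for the $v_y$ (analogous to \eqref{Eq2.8}), or more simply by observing that $\delta_x$, viewed as the function in \eqref{Eq4.9}, manifestly has finite energy $\|\delta_x\|_{\mathcal{H}}^2 = c(x) < \infty$ under \eqref{Eq4.14}, so membership $\delta_x \in \mathcal{H}$ is immediate and \eqref{Eq4.15} is then an identity between two elements of $\mathcal{H}$ that agree against every $f$, hence are equal since the dipoles span $\mathcal{H}$ by \eqref{Eq4.7}.
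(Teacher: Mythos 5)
Your core computation is correct and in fact goes further than the paper's own proof, which consists of a single line: the norm identity $\left\Vert \delta_{x}\right\Vert _{\mathcal{H}}^{2}=\sum_{y\sim x}c\left(  xy\right)  =c\left(  x\right)  <\infty$, read off directly from \eqref{Eq4.13}; that is what establishes \eqref{Eq4.8}, and the identity \eqref{Eq4.15} is left to the reader as the implicit ``direct computation.'' Your verification --- pairing the candidate vector against an arbitrary $f$ via \eqref{Eq4.6}, cancelling the $f(o)$ terms using \eqref{Eq4.14}, and matching the result with $\left\langle \delta_{x},f\right\rangle =\sum_{y\sim x}c\left(  xy\right)  \left(  f\left(  x\right)  -f\left(  y\right)  \right)  $ computed from \eqref{Eq4.12} --- is exactly that computation, and it is sound; note that the scalar series involved converge absolutely, by Cauchy--Schwarz applied to the edges incident to $x$ together with \eqref{Eq4.14}.

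One caution about your last paragraph. The claim that \eqref{Eq4.14} guarantees norm convergence of $\sum_{\left(  xy\right)  \in G^{1}}c\left(  xy\right)  v_{y}$, to be proved by a Cauchy/tail estimate on truncations, is not correct in general: the paper's Remark \ref{Rem4.6} states explicitly that the truncated sums on the right of \eqref{Eq4.15} need not converge in the norm of $\mathcal{H}_{E}$ unless $x$ has only finitely many neighbors. The obstruction is that $\left\Vert v_{y}\right\Vert ^{2}=v_{y}\left(  y\right)  -v_{y}\left(  o\right)  $ (an effective resistance to the base point) is not controlled by $c\left(  xy\right)  $, so summability of the conductances does not bound the tail of the vector-valued series. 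Hence for a vertex of infinite degree, \eqref{Eq4.15} can only be asserted in the weak, termwise sense that your main computation actually delivers: $\left\langle \delta_{x},f\right\rangle =c\left(  x\right)  \left\langle v_{x},f\right\rangle -\sum c\left(  xy\right)  \left\langle v_{y},f\right\rangle $ for every $f\in\mathcal{H}$, with membership $\delta_{x}\in\mathcal{H}$ secured separately by the norm computation. Your ``more simply'' alternative is therefore the right one to keep, and the Cauchy-sequence route should be dropped.
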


\begin{proof}
By a direct computation, using (\ref{Eq4.13}) and (\ref{Eq4.14}), we get
\begin{equation}
\left\Vert \delta_{x}\right\Vert _{\mathcal{H}}^{2}=\underset{%
\genfrac{}{}{0pt}{}{y\text{, such that}}{\left(  xy\right)  \in G^{1}}%
}{\sum}c\left(  xy\right)  =c\left(  x\right)  <\infty\text{.} \label{Eq4.16}%
\end{equation}

\end{proof}

\begin{lemma}
\label{Lem4.Six}Let $\left(  \mathcal{H},X,\Delta\right)  $ be a reproducing
kernel system as outlined in section \ref{ReproKHspaces}. Let $x_{0}\in
X$\emph{;} then some $w_{0}\in\mathcal{H}$ is a monopole at $x_{0}$ if and
only if $w_{0}\in\operatorname*{dom}\left(  \Delta^{\ast}\right)  $ and
\begin{equation}
\Delta^{\ast}w_{0}=\delta_{x_{0}}\text{.} \label{Eq4.Seventeen}%
\end{equation}

\end{lemma}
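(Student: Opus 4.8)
The statement to prove is Lemma \ref{Lem4.Six}: for a reproducing kernel system $(\mathcal{H},X,\Delta)$ with $\Delta$ as in \eqref{Eq2.1}, a vector $w_0\in\mathcal{H}$ is a monopole at $x_0$ if and only if $w_0\in\operatorname*{dom}(\Delta^\ast)$ and $\Delta^\ast w_0=\delta_{x_0}$. This is essentially a matter of unwinding the two definitions in parallel. The plan is to start from the defining condition \eqref{Eq2.2} for a monopole, namely $\langle w_0,\Delta u\rangle=\langle\delta_{x_0},u\rangle$ for all $u\in\mathcal{D}$, and compare it termwise with the definition of $\operatorname*{dom}(\Delta^\ast)$ and $\Delta^\ast$ given just after Definition \ref{Def3.1}.

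**Key steps.** First, the forward direction: suppose $w_0$ is a monopole at $x_0$. Then \eqref{Eq2.2} gives $|\langle w_0,\Delta u\rangle|=|\langle\delta_{x_0},u\rangle|\le\|\delta_{x_0}\|\,\|u\|$ for all $u\in\mathcal{D}$ by Cauchy–Schwarz, so the boundedness condition in the definition of $\operatorname*{dom}(\Delta^\ast)$ holds with $\mathcal{C}:=\|\delta_{x_0}\|_{\mathcal{H}}$; hence $w_0\in\operatorname*{dom}(\Delta^\ast)$. Then by \eqref{Eq3.2} there is a unique $\varphi\in\mathcal{H}$ with $\langle w_0,\Delta u\rangle=\langle\varphi,u\rangle$ for all $u\in\mathcal{D}$, and by definition $\Delta^\ast w_0=\varphi$. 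Comparing with \eqref{Eq2.2} we get $\langle\varphi,u\rangle=\langle\delta_{x_0},u\rangle$ for all $u\in\mathcal{D}$; since $\mathcal{D}$ is dense in $\mathcal{H}$ this forces $\varphi=\delta_{x_0}$, i.e.\ $\Delta^\ast w_0=\delta_{x_0}$. Conversely, if $w_0\in\operatorname*{dom}(\Delta^\ast)$ with $\Delta^\ast w_0=\delta_{x_0}$, then \eqref{Eq3.2} reads $\langle w_0,\Delta u\rangle=\langle\delta_{x_0},u\rangle$ for all $u\in\mathcal{D}$, which is exactly \eqref{Eq2.2}; hence $w_0$ is a monopole at $x_0$.

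**Main obstacle.** There is no serious obstacle here — the lemma is a translation between two languages. The only point that needs a word of care is that the ``$\delta_{x_0}$'' appearing in \eqref{Eq2.2} is indeed an element of $\mathcal{H}$ (so that $\langle\delta_{x_0},u\rangle$ makes sense as a genuine inner product and the Cauchy–Schwarz estimate applies), which is guaranteed by axiom (ii) / \eqref{Eq4.8}; and that density of $\mathcal{D}$ is what allows us to upgrade ``$\langle\varphi-\delta_{x_0},u\rangle=0$ for all $u\in\mathcal{D}$'' to ``$\varphi=\delta_{x_0}$''. Both are already available in the setup. I would present the proof as the two short implications above, roughly a paragraph each.
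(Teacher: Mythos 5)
Your proof is correct and follows essentially the same route as the paper: the forward direction uses the Schwarz bound $\left\vert\left\langle\delta_{x_{0}},u\right\rangle\right\vert\leq\left\Vert\delta_{x_{0}}\right\Vert\cdot\left\Vert u\right\Vert$ to place $w_{0}$ in $\operatorname*{dom}\left(\Delta^{\ast}\right)$ and then identifies $\Delta^{\ast}w_{0}$ via density of $\mathcal{D}$, while the converse is a direct unwinding of the definition of the adjoint (the paper compresses this into ``an application of Riesz' lemma''). Your version is, if anything, slightly more explicit than the paper's about where density of $\mathcal{D}$ and the assumption $\delta_{x_{0}}\in\mathcal{H}$ are used.
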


\begin{proof}
If $w_{0}\in\mathcal{H}$ is a monopole at $x_{0}$, then (\ref{Eq2.2}) holds,
i.e., $\left\langle w_{0},\Delta u\right\rangle =\left\langle \delta_{x_{0}%
},u\right\rangle $ is satisfied for all $u\in\mathcal{D}$. Since $\left\vert
\left\langle \delta_{x_{0}},u\right\rangle \right\vert \leq\left\Vert
\delta_{x_{0}}\right\Vert \cdot\left\Vert u\right\Vert $ holds by Schwarz, it
follow that (\ref{Eq4.Seventeen}) is satisfied.

The argument for the converse implication is an application of Riesz' lemma to
the Hilbert space $\mathcal{H}$.
\end{proof}

\begin{remark}
\label{Rem4.6}It is not true in general that the truncated summations on the
R.H.S. in \emph{(}\ref{Eq4.15}\emph{)} converge in the norm \emph{(}%
\ref{Eq4.13}\emph{)} of $\mathcal{H}_{E}$. But it is if each $x\in G^{0}$ has
at most a \emph{finite} number of \emph{neighbors}. For pairs of points in
$G^{0}$, set
\begin{align}
x  &  \sim y\text{ iff there is an edge}\label{Eq4.17}\\
e  &  \in G^{1}\text{ with }e=\left(  xy\right)  \text{.}\nonumber
\end{align}

Set
\[
\operatorname*{Nbh}\nolimits_{G}\left(  x\right)  \text{:}=\left\{  y\in
G^{0}|y\sim x\right\}  \text{.}%
\]
We say that $G$ has finite degrees if
\begin{equation}
\#\operatorname*{Nbh}\nolimits_{G}\left(  x\right)  <\infty,~\forall x\in
G^{0}\text{.} \label{Eq4.18}%
\end{equation}

\end{remark}

\begin{theorem}
\label{Theo4.7}Let $\left(  \mathcal{H},X\right)  $ be a reproducing kernel
Hilbert space with base point $o$, and assume \emph{(}\ref{Eq4.8}\emph{)} is
satisfied. For $x\in X$, and $f\in\mathcal{H}$, set
\begin{equation}
\left(  \Delta f\right)  \left(  x\right)  \text{\emph{:}}=\left\langle
\delta_{x},f\right\rangle ; \label{Eq4.19}%
\end{equation}
then $\Delta$ is a hermitian operator with dense domain
\begin{equation}
\mathcal{D}_{V}\text{\emph{:}}=\operatorname*{span}\left\{  v_{x}|x\in
X^{\ast}\right\}  \text{.} \label{Eq4.20}%
\end{equation}

It satisfies$:$
\begin{equation}
\Delta v_{x}=\delta_{x}-\delta_{0};\text{ and} \label{Eq4.21}%
\end{equation}%
\begin{equation}
\left\langle u,\Delta u\right\rangle \geq0,~\forall u\in\mathcal{D}%
_{V}\text{.} \label{Eq4.22}%
\end{equation}

Moreover, in the case of weighted graphs $\left(  G,c\right)  $, the identity
\begin{equation}
\left(  \Delta u\right)  \left(  x\right)  =\sum_{y\sim x}c\left(  xy\right)
\left(  u\left(  x\right)  -u\left(  y\right)  \right)  \label{Eq4.23}%
\end{equation}
holds.
\end{theorem}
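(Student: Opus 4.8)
The plan is to establish the three asserted properties of $\Delta$ in turn, after first checking that the formula (\ref{Eq4.19}) genuinely defines a function on $X$ and that $\Delta$ maps $\mathcal{D}_V$ into $\mathcal{H}$. The density of $\mathcal{D}_V$ in $\mathcal{H}$ is immediate from axiom (\ref{Eq4.7}), since $\mathcal{D}_V$ is by definition the algebraic span of the $v_x$ and its closure is all of $\mathcal{H}$. So the substantive work is (\ref{Eq4.21}), then the Hermitian property, then positivity (\ref{Eq4.22}), and finally the graph-Laplacian formula (\ref{Eq4.23}).

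First I would prove (\ref{Eq4.21}). Fix $x\in X^\ast$ and evaluate $(\Delta v_x)(y) = \langle \delta_y, v_x\rangle$ for each $y\in X$. By the reproducing identity (\ref{Eq4.6}) applied to $f = \delta_y$ we have $\langle v_x, \delta_y\rangle = \delta_y(x) - \delta_y(o)$; taking complex conjugates (or using that the identity can be read in either order once one is careful about which slot is which, but cleanly: $\langle \delta_y, v_x\rangle = \overline{\langle v_x,\delta_y\rangle} = \overline{\delta_y(x)-\delta_y(o)}$, and these values are real $0/1$) gives $(\Delta v_x)(y) = \delta_y(x) - \delta_y(o) = \delta_x(y) - \delta_o(y)$, using the symmetry $\delta_y(x) = \delta_x(y)$. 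Hence $\Delta v_x = \delta_x - \delta_0$ as elements of $\mathcal{H}$, where one notes $\delta_x - \delta_0 \in \mathcal{H}$ by (\ref{Eq4.8}). This already shows $\Delta$ maps the generators of $\mathcal{D}_V$ into $\mathcal{H}$, hence all of $\mathcal{D}_V$ by linearity. For the Hermitian property (\ref{Eq3.1}) it suffices by sesquilinearity to check it on generators: $\langle v_x, \Delta v_y\rangle = \langle v_x, \delta_y - \delta_0\rangle = \langle v_x,\delta_y\rangle - \langle v_x,\delta_0\rangle = (\delta_y(x)-\delta_y(o)) - (\delta_0(x) - \delta_0(o))$ by (\ref{Eq4.6}); one computes the analogous expression for $\langle \Delta v_x, v_y\rangle = \langle \delta_x - \delta_0, v_y\rangle = \overline{(\delta_x(y) - \delta_x(o))} - \overline{(\delta_0(y) - \delta_0(o))}$ and checks the two agree — this is a finite bookkeeping check of $0/1$ values in the four cases $x=y$, $x=o$, $y=o$, all distinct (recall $x,y\in X^\ast$ so $x,y\neq o$, simplifying things), and it comes out symmetric. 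For (\ref{Eq4.22}), write $u = \sum_x \xi_x v_x$ a finite sum; then $\langle u, \Delta u\rangle = \sum_{x,y}\overline{\xi_x}\xi_y \langle v_x,\Delta v_y\rangle$, and by (\ref{Eq4.21}) together with (\ref{Eq4.6}) applied to $f = \delta_\cdot$, the matrix $\langle v_x,\Delta v_y\rangle = \langle v_x, \delta_y-\delta_0\rangle$ can be identified; but cleaner is to observe $\langle u,\Delta u\rangle = \langle u, \sum_y\xi_y(\delta_y-\delta_0)\rangle = \sum_y \xi_y\overline{\langle \delta_y - \delta_0, u\rangle}$, and since $\langle v_y, u\rangle = u(y) - u(o)$ one relates this back to $\|u\|_{\mathcal{H}}^2$; indeed the natural outcome is $\langle u,\Delta u\rangle = \|u\|_{\mathcal H}^2 \geq 0$, which I would verify by expanding both sides against the generators.

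For the final identity (\ref{Eq4.23}) in the weighted-graph case, I would invoke Proposition \ref{Prop4.5}: under the finiteness hypothesis there, $\delta_x = c(x) v_x - \sum_{(xy)\in G^1} c(xy) v_y$. Substituting into (\ref{Eq4.19}) and using (\ref{Eq4.6}) to evaluate $\langle v_z, u\rangle = u(z) - u(o)$ for each $z$, we get $(\Delta u)(x) = \langle \delta_x, u\rangle = \overline{c(x)}(u(x)-u(o)) - \sum_{y\sim x}\overline{c(xy)}(u(y)-u(o))$; since the weights are real and $c(x) = \sum_{y\sim x} c(xy)$, the $u(o)$ terms cancel, leaving $(\Delta u)(x) = \sum_{y\sim x} c(xy)(u(x) - u(y))$, which is (\ref{Eq4.23}). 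One caveat: this manipulation is clean when $x$ has finitely many neighbors (Remark \ref{Rem4.6}); in the general case the sum on the right is still well-defined and equals the inner product $\langle \delta_x, u\rangle$ by Proposition \ref{Prop4.5}'s expansion interpreted appropriately, but I would state the identity for $u\in\mathcal D_V$ or under finite degree and remark on convergence.

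The step I expect to be the main obstacle is not any single computation — all four are elementary — but rather the careful tracking of conjugates and of which slot the reproducing identity (\ref{Eq4.6}) occupies, since the paper's convention is that $\langle\cdot,\cdot\rangle$ is linear in the second variable and $\delta_x$ takes only real values; a sign or conjugation slip there would break the Hermitian check. The secondary subtlety is the convergence issue in (\ref{Eq4.23}) for vertices of infinite degree, which I would handle by restricting to $\mathcal{D}_V$ (where everything is a finite sum) or by citing the finite-degree hypothesis of Remark \ref{Rem4.6}, rather than attempting to prove unconditional norm convergence of the defining series.
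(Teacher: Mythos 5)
Your treatment of (\ref{Eq4.21}), the Hermitian check on generators, and the density claim all match the paper's argument (the paper in fact omits the Hermitian and density verifications, so your explicit check is welcome), and your route to (\ref{Eq4.23}) via the expansion $\delta_{x}=c\left(  x\right)  v_{x}-\sum_{y\sim x}c\left(  xy\right)  v_{y}$ of Proposition \ref{Prop4.5} is a legitimate alternative to the paper's more direct computation, which just plugs $\delta_{x}$ into the energy form (\ref{Eq4.12}) and needs no finite-degree caveat. However, there is a genuine error in your proof of (\ref{Eq4.22}): the identity $\left\langle u,\Delta u\right\rangle =\left\Vert u\right\Vert _{\mathcal{H}}^{2}$ is false. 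Writing $u=\sum_{x}\xi_{x}v_{x}$ and using (\ref{Eq4.21}) together with (\ref{Eq4.6}), one gets $\left\langle v_{x},\delta_{y}-\delta_{0}\right\rangle =\left(  \delta_{y}-\delta_{0}\right)  \left(  x\right)  -\left(  \delta_{y}-\delta_{0}\right)  \left(  o\right)  =\delta_{x,y}+1$ for $x,y\in X^{\ast}$, hence
\[
\left\langle u,\Delta u\right\rangle =\sum_{x}\left\vert \xi_{x}\right\vert
^{2}+\Bigl\vert \sum_{x}\xi_{x}\Bigr\vert ^{2},
\]
which is the paper's formula and is what feeds into Corollary \ref{Cor4.8}. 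This quadratic form is not $\left\Vert u\right\Vert _{\mathcal{H}}^{2}=\sum_{x,y}\bar{\xi}_{x}\xi_{y}\left\langle v_{x},v_{y}\right\rangle $, since the Gram matrix $\left\langle v_{x},v_{y}\right\rangle =v_{y}\left(  x\right)  -v_{y}\left(  o\right)  $ is kernel-dependent: on the path graph $\mathbb{Z}$ with unit weights, $u=v_{1}+v_{2}$ gives $\left\Vert u\right\Vert ^{2}=1+2+2=5$ while $\left\langle u,\Delta u\right\rangle =2+4=6$.

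The source of the slip is the sentence ``since $\left\langle v_{y},u\right\rangle =u\left(  y\right)  -u\left(  o\right)  $ one relates this back to $\left\Vert u\right\Vert ^{2}$'': you are conflating the dipole $v_{y}$ with the Dirac mass $\delta_{y}$. The pairing that actually appears is $\left\langle \delta_{y}-\delta_{0},u\right\rangle =\left(  \Delta u\right)  \left(  y\right)  -\left(  \Delta u\right)  \left(  o\right)  $, not $u\left(  y\right)  -u\left(  o\right)  $; only the pairing against $v_{y}$ reproduces values of $u$. The conclusion (\ref{Eq4.22}) survives because the correct expansion is manifestly a sum of two nonnegative terms, and your stated fallback (expand against the generators) would produce it; but as written the proof rests on a false identity, so this step needs to be redone.
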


\begin{proof}
[Proof of \emph{(}\ref{Eq4.21}\emph{)}]%
\begin{align*}
\left(  \Delta v_{x}\right)  \left(  y\right)   &  =_{\left(  \text{by
}\left(  \ref{Eq4.19}\right)  \right)  }\left\langle \delta_{y},v_{x}%
\right\rangle \\
&  =_{\left(  \text{by }\left(  \ref{Eq4.6}\right)  \right)  }\delta
_{y}\left(  x\right)  -\delta_{y}\left(  o\right) \\
&  =\left(  \delta_{x}-\delta_{0}\right)  \left(  y\right)  \text{, }%
\end{align*}
which is (\ref{Eq4.21}).
\end{proof}

\begin{proof}
[Proof of \emph{(}\ref{Eq4.22}\emph{)}]Consider $u=\sum_{x}\xi_{x}v_{x}$, a
finite linear combination, $\xi_{x}\in\mathbb{C}$; then
\begin{align*}
\left\langle u,\Delta u\right\rangle  &  =\underset{x,y}{\sum\sum}\bar{\xi
}_{x}\xi_{y}\left\langle v_{x},\delta_{y}-\delta_{0}\right\rangle \\
&  =_{\left(  \text{by }\left(  \ref{Eq4.21}\right)  \right)  }\underset
{x,y}{\sum\sum}\bar{\xi}_{x}\xi_{y}\left\langle v_{x},\delta_{y}-\delta
_{0}\right\rangle \\
&  =_{\left(  \text{by }(\ref{Eq4.6})\right)  }\underset{x,y}{\sum\sum}%
\bar{\xi}_{x}\xi_{y}\left(  \delta_{x,y}+1\right) \\
&  =\sum_{x}\left\vert \xi_{x}\right\vert ^{2}+\left\vert \sum_{x}\xi
_{x}\right\vert ^{2}\geq0\text{.}%
\end{align*}

\end{proof}

\begin{proof}
[Proof of \emph{(}\ref{Eq4.23}\emph{)}]In the case of weighted graphs $\left(
G,c\right)  $
\begin{align*}
\left(  \Delta u\right)  \left(  x\right)   &  =_{\left(  \text{by }\left(
\ref{Eq4.19}\right)  \right)  }\left\langle \delta_{x},u\right\rangle
_{\mathcal{H}_{E}}\\
&  =_{\left(  \text{by }\left(  \ref{Eq4.12}\right)  \right)  }\frac{1}%
{2}\underset{%
\genfrac{}{}{0pt}{}{s,t}{s\sim t}%
}{\sum\sum}c\left(  s,t\right)  \left(  \delta_{x}\left(  s\right)
-\delta_{x}\left(  t\right)  \right)  \left(  u\left(  s\right)  -u\left(
t\right)  \right) \\
&  =\sum_{s\sim x}c\left(  sx\right)  \left(  u\left(  x\right)  -u\left(
s\right)  \right)
\end{align*}
which is the desired formula (\ref{Eq4.23}).
\end{proof}

\begin{corollary}
\label{Cor4.8}Let $\left(  \mathcal{H},X,o\right)  $ be as in the theorem, and
let $\Delta$ be the operator in \emph{(}\ref{Eq4.19}\emph{)}. Let
$\Delta^{\operatorname*{clo}}$ be the graph-closure of $\Delta$.

Then the domain of $\Delta^{\operatorname*{clo}}$ is contained in $\ell
^{2}(X)\cap\ell^{1}(X)$ where $\ell^{2}\cap\ell^{1}$ is understood with regard
to counting measure on $X$.

Note that
\begin{equation}
\left\{  \delta_{x}\right\}  \subseteq\mathcal{H} \label{Eq4.24}%
\end{equation}
is part of the assumption in the corollary.
\end{corollary}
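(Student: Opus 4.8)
The plan is to unwind the definitions and let the reproducing-kernel structure do the work. Suppose $f\in\operatorname{dom}(\Delta^{\operatorname{clo}})$. By \eqref{Eq3.4} there is a sequence $u_n\in\mathcal{D}_V=\operatorname{span}\{v_x|x\in X^\ast\}$ with $u_n\to f$ and $\Delta u_n\to g$ in $\mathcal{H}$, where $g=\Delta^{\operatorname{clo}}f$. The first step is to pin down what $g$ must be as a function on $X$. Using \eqref{Eq4.19}, for each fixed $x\in X$ we have $(\Delta u_n)(x)=\langle\delta_x,u_n\rangle$, and since $\delta_x\in\mathcal{H}$ by \eqref{Eq4.24} (the standing hypothesis), norm convergence $\Delta u_n\to g$ forces $(\Delta u_n)(x)=\langle\delta_x,u_n\rangle\to\langle\delta_x,g\rangle=g(x)$ as well as $(\Delta u_n)(x)\to g(x)$ pointwise in whatever sense the functions converge; in fact the cleanest route is to observe $g(x)=\langle\delta_x,g\rangle$, i.e.\ $\Delta^{\operatorname{clo}}f = \Delta f$ in the same functional form $(\Delta^{\operatorname{clo}}f)(x)=\langle\delta_x,f\rangle$, extended by continuity. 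So the function-values of $g$ are the inner products $\langle\delta_x,f\rangle$.

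Next I would prove the $\ell^2$ membership. The key identity is \eqref{Eq4.16} from Proposition \ref{Prop4.5} in the graph case, but more robustly one argues abstractly: for any finite subset $F\subseteq X$, consider the vector $\varphi_F:=\sum_{x\in F}\overline{g(x)}\,\delta_x\in\mathcal{H}$. Then
\[
\sum_{x\in F}|g(x)|^2=\sum_{x\in F}\overline{g(x)}\,\langle\delta_x,g\rangle=\Big\langle \sum_{x\in F}g(x)\delta_x,\;g\Big\rangle\leq \Big\|\sum_{x\in F}g(x)\delta_x\Big\|_{\mathcal{H}}\cdot\|g\|_{\mathcal{H}}.
\]
To close this I need a bound on $\|\sum_{x\in F}g(x)\delta_x\|_{\mathcal{H}}$ by $C\big(\sum_{x\in F}|g(x)|^2\big)^{1/2}$ — i.e.\ that the map $\xi\mapsto\sum\xi_x\delta_x$ is bounded from $\ell^2(F)$ into $\mathcal{H}$ uniformly in $F$. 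This is where Corollary \ref{Cor4.8} really has content: it is not automatic, and the honest statement is that one uses $\langle\delta_x,\delta_y\rangle$ together with $g=\Delta^{\operatorname{clo}}f$ to telescope. Concretely, writing $g=\lim\Delta u_n$ with $\Delta u_n = \sum \eta^{(n)}_x(\delta_x-\delta_0)$, the Gram computation already done in the proof of \eqref{Eq4.22} gives $\|\Delta u_n\|^2 \ge \sum_x|\eta^{(n)}_x|^2$ when $\sum\eta^{(n)}_x=0$ (which one may arrange, since adding a multiple of a constant does not change $\Delta$ applied to any element, but more carefully since $\Delta v_x=\delta_x-\delta_0$ the coefficient sum can be normalized). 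Passing to the limit, $\sum_x |g(x)|^2\le \|g\|_{\mathcal{H}}^2 = \|\Delta^{\operatorname{clo}}f\|_{\mathcal{H}}^2<\infty$, giving $g\in\ell^2(X)$, and since $g(x)=(\Delta^{\operatorname{clo}}f)(x)$ this is $\Delta^{\operatorname{clo}}f\in\ell^2(X)$; running the analogous argument on $f$ itself (here one uses that $f=\lim u_n$ with $u_n\in\mathcal{D}_V$ and the dipole Gram relations \eqref{Eq2.8}-type positivity) places $f\in\ell^2(X)$ too.

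For the $\ell^1$ part I would use \eqref{Eq4.22}: positivity $\langle u,\Delta u\rangle\ge 0$ on $\mathcal{D}_V$ extends to $\langle f,\Delta^{\operatorname{clo}}f\rangle\ge 0$, and more usefully the explicit formula $\langle u,\Delta u\rangle=\sum_x|\xi_x|^2+|\sum_x\xi_x|^2$ shows $\Delta^{\operatorname{clo}}$ is bounded below by a "number operator." Then for $f\in\operatorname{dom}(\Delta^{\operatorname{clo}})$ one writes, for finite $F$,
\[
\sum_{x\in F}|f(x)|=\sum_{x\in F}\big|\langle v_x \text{ or } \delta_x,\,f\rangle\big|
\]
and applies Cauchy–Schwarz against the summability \eqref{Eq4.14}/\eqref{Eq4.16}; in the abstract setting the $\ell^1$ control comes from pairing $f$ with $\sum_{x\in F}\operatorname{sgn}(f(x))\,\delta_x$ and bounding that vector's norm using $\sum_x\|\delta_x\|$-type data, which is exactly the finiteness condition \eqref{Eq4.24} packaged as membership in $\mathcal{H}$ plus the closed-graph limiting. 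The main obstacle I anticipate is precisely the uniform-in-$F$ operator bound for $\xi\mapsto\sum\xi_x\delta_x$: this does not follow from $\delta_x\in\mathcal{H}$ alone, and the real proof must extract it from the graph-closure limiting procedure — i.e.\ from the fact that $\Delta^{\operatorname{clo}}f$ is an honest $\mathcal{H}$-limit of $\Delta u_n$, combined with the positivity/Gram identity in the proof of \eqref{Eq4.22}. Everything else is bookkeeping with Cauchy–Schwarz and telescoping.
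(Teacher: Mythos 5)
Your skeleton is the right one---approximate $f$ by $u_{n}\in\mathcal{D}_{V}$, use the reproducing structure to turn norm convergence into pointwise convergence, invoke the Gram computation from the proof of (\ref{Eq4.22}), and pass to the limit---and that is essentially the paper's two-step proof. But the inequality you lean on for the $\ell^{2}$ part is wrong as stated. The computation behind (\ref{Eq4.22}) evaluates $\left\langle u,\Delta u\right\rangle =\sum_{x}\left\vert \xi_{x}\right\vert ^{2}+\left\vert \sum_{x}\xi_{x}\right\vert ^{2}$ for $u=\sum_{x}\xi_{x}v_{x}$; it uses only the pairings $\left\langle v_{x},\delta_{y}-\delta_{0}\right\rangle $ and says nothing about $\left\Vert \Delta u\right\Vert ^{2}=\left\Vert \sum_{x}\xi_{x}\left(  \delta_{x}-\delta_{0}\right)  \right\Vert ^{2}$, which would require a lower bound on the Gram matrix of the $\delta_{x}$'s that is not available. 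So your claim $\left\Vert \Delta u_{n}\right\Vert ^{2}\geq\sum_{x}\vert\eta_{x}^{(n)}\vert^{2}$ is unsupported, and the conclusion $\sum_{x}\left\vert g\left(  x\right)  \right\vert ^{2}\leq\left\Vert g\right\Vert _{\mathcal{H}}^{2}$ does not follow. The correct chain, which is the paper's Step 1 (identity (\ref{Eq4.25})), is $\sum_{x}\left\vert \left(  \Delta u_{n}\right)  \left(  x\right)  \right\vert ^{2}\leq\left\langle u_{n},\Delta u_{n}\right\rangle \leq\left\Vert u_{n}\right\Vert \cdot\left\Vert \Delta u_{n}\right\Vert $, whose right-hand side stays bounded along the approximating sequence; combining this with the pointwise convergence supplied by (\ref{Eq4.6}) and Fatou gives the $\ell^{2}$ statement. (Your pointwise limit is also scrambled: $\left(  \Delta u_{n}\right)  \left(  x\right)  =\left\langle \delta_{x},u_{n}\right\rangle \rightarrow\left\langle \delta_{x},f\right\rangle $, not $\left\langle \delta_{x},g\right\rangle $.)

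Two further problems. First, the corollary---as its proof makes clear---is about the sequence $\xi_{x}=\left(  \Delta u\right)  \left(  x\right)  $, i.e.\ the coefficients of $u$ in the $v_{x}$-expansion, which is exactly what (\ref{Eq4.25}) controls; your additional attempt to place the pointwise values $f\left(  x\right)  $ themselves in $\ell^{2}\left(  X\right)  $ via ``dipole Gram relations'' has no support and is false in general: Remark \ref{Rem2.6}(c) exhibits $\log\left(  1+\left\vert x\right\vert \right)  \in\mathcal{H}$, which is neither square-summable nor summable. That extra claim should be dropped, not patched. Second, your $\ell^{1}$ argument is only a sketch and, as you yourself concede, hinges on a uniform-in-$F$ bound for $\xi\longmapsto\sum_{x\in F}\xi_{x}\delta_{x}$ that does not follow from $\delta_{x}\in\mathcal{H}$; the paper instead reads the $\ell^{1}$ information off the second summand $\left\vert \sum_{x}\Delta u\left(  x\right)  \right\vert ^{2}$ in (\ref{Eq4.25}), again passed to the limit with Fatou. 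In short: right architecture, but the one inequality doing the work is misquoted, the target of the estimate is misidentified, and the $\ell^{1}$ half is not actually proved.
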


\begin{proof}
\textbf{Step 1}. We saw that if
\[
u=\sum_{x\in G^{0}\diagdown\left(  0\right)  }\xi_{x}v_{x}%
\]
is a finite summation with $\xi_{x}\in\mathbb{C}$, then $\xi_{x}=\left(
\Delta u\right)  \left(  x\right)  $. Hence by the theorem,
\begin{equation}
\left\langle u,\Delta u\right\rangle _{E}=\sum_{x\in X^{\ast}}\left\vert
\left(  \Delta u\right)  \left(  x\right)  \right\vert ^{2}+\left\vert
\underset{x\in X^{\ast}}{\sum\Delta u\left(  x\right)  }\right\vert
^{2}\text{.} \label{Eq4.25}%
\end{equation}

\textbf{Step 2.} A simple approximation argument shows that (\ref{Eq4.25})
extends to be valid also for all $u\in\operatorname*{dom}\left(
\Delta^{\operatorname*{clo}}\right)  $.

To see this, note that by (\ref{Eq4.6}), the $\left\Vert \cdot\right\Vert
_{E}$-norm convergence implies pointwise convergence. If a sequence $\left(
u_{n}\right)  \subset\mathcal{D}_{V}$ is chosen such that
\[
\left\Vert u-u_{n}\right\Vert _{E}\rightarrow0;~\left\Vert \Delta u_{n}%
-\Delta^{\operatorname*{clo}}u\right\Vert _{E}\rightarrow0,
\]
then we have pointwise of the corresponding functions on $X$, and we may apply
Fatou to the summations $\sum_{x}\left\vert \left(  \Delta u_{n}\right)
\left(  x\right)  \right\vert ^{2}$, and $\sum_{x}\left(  \Delta u_{n}\right)
\left(  x\right)  $.
\end{proof}

\begin{theorem}
\label{Theo4.9}Let $\left(  \mathcal{H},X\right)  $ be a reproducing kernel
Hilbert space with base point $o$, and assume \emph{(}\ref{Eq4.8}\emph{)}
holds. Set
\begin{align}
\mathcal{C}\text{{}}  &  \text{\emph{:}}=\mathcal{H}\ominus\left\{  \delta
_{x}|x\in X\right\} \label{Eq4.26}\\
&  \left(  =\left\{  u\in\mathcal{H}|\left\langle u,\delta_{x}\right\rangle
=0,~\forall x\right\}  \right)  ;\nonumber
\end{align}
then $\left(  \Delta,\mathcal{C}\right)  $ is a duality pair.
\end{theorem}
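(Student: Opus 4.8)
The goal is to show $\mathcal{C} \subseteq \ker(\Delta^{\ast})$, where $\mathcal{C} = \mathcal{H} \ominus \overline{\operatorname{span}}\{\delta_x \mid x \in X\}$. By Lemma~\ref{Lem4.2}, this is equivalent to showing $R(\Delta)^{\operatorname{clo}} \subseteq \mathcal{H} \ominus \mathcal{C}$, and since $\mathcal{C}^{\bot} = \overline{\operatorname{span}}\{\delta_x\}$, the task reduces to verifying $R(\Delta) \subseteq \overline{\operatorname{span}}\{\delta_x \mid x \in X\}$. The plan is to work directly from the generators: the domain $\mathcal{D}_V = \operatorname{span}\{v_x \mid x \in X^{\ast}\}$ by Theorem~\ref{Theo4.7}, so $R(\Delta)$ is the span of the vectors $\Delta v_x$. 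By formula~(\ref{Eq4.21}), $\Delta v_x = \delta_x - \delta_0$, which visibly lies in $\operatorname{span}\{\delta_x \mid x \in X\}$. Hence $R(\Delta) \subseteq \operatorname{span}\{\delta_x \mid x \in X\} \subseteq \overline{\operatorname{span}}\{\delta_x \mid x \in X\} = \mathcal{C}^{\bot} = \mathcal{H}\ominus\mathcal{C}$, and taking closures gives $R(\Delta)^{\operatorname{clo}} \subseteq \mathcal{H}\ominus\mathcal{C}$.

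First I would unwind the definitions: record that $\mathcal{C} = (\overline{\operatorname{span}}\{\delta_x\})^{\bot}$, so that $\mathcal{H}\ominus\mathcal{C} = \mathcal{C}^{\bot} = \overline{\operatorname{span}}\{\delta_x\}$ (using that orthogonal complementation is involutive on closed subspaces). Next I would invoke Theorem~\ref{Theo4.7} to identify $\operatorname{dom}(\Delta) = \mathcal{D}_V$ and to get the key identity $\Delta v_x = \delta_x - \delta_0$. Then, since every element of $\mathcal{D}_V$ is a finite linear combination $u = \sum_x \xi_x v_x$, linearity gives $\Delta u = \sum_x \xi_x(\delta_x - \delta_0) \in \operatorname{span}\{\delta_x \mid x \in X\}$, so $R(\Delta)$ is contained in this span, a fortiori in its closure. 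Finally I would close the chain of inclusions and apply Lemma~\ref{Lem4.2}(ii)$\Rightarrow$(i) to conclude that $(\Delta,\mathcal{C})$ is a duality pair.

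There is essentially no hard obstacle here: the statement is a formal consequence of the two structural facts already proved — the description of $\operatorname{dom}(\Delta)$ and the action formula~(\ref{Eq4.21}) from Theorem~\ref{Theo4.7}, together with the elementary characterization of duality pairs in Lemma~\ref{Lem4.2}. The only point requiring a moment's care is confirming that $\mathcal{C}$ as defined in~(\ref{Eq4.26}) is genuinely the orthogonal complement of the closed span of the Dirac masses (rather than of their unclosed span), so that $\mathcal{H}\ominus\mathcal{C}$ recovers that closed span; this is immediate since $v^{\bot} = (\overline{\operatorname{span}}\,S)^{\bot}$ whenever $v \perp s$ for all $s$ in a set $S$. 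I would present the argument as the short chain $R(\Delta) \subseteq \operatorname{span}\{\delta_x\} \subseteq \mathcal{H}\ominus\mathcal{C}$, close under norm-closure, and cite Lemma~\ref{Lem4.2}.
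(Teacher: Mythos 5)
Your proposal is correct and rests on exactly the same key fact as the paper's proof: by (\ref{Eq4.21}) the range of $\Delta$ on $\mathcal{D}_{V}$ lies in $\operatorname{span}\{\delta_{x}\mid x\in X\}\subseteq\mathcal{C}^{\bot}$. The only cosmetic difference is that you finish by citing Lemma \ref{Lem4.2}, whereas the paper verifies $\mathcal{C}\subseteq\ker(\Delta^{\ast})$ directly (for $h\in\mathcal{C}$ one has $\langle\Delta u,h\rangle=0$ for all $u\in\mathcal{D}_{V}$, hence $h\in\operatorname{dom}(\Delta^{\ast})$ and $\Delta^{\ast}h$ is orthogonal to the dense subspace $\mathcal{D}_{V}$); these are the same argument in different packaging.
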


\begin{proof}
The claim is that $\mathcal{C}\subseteq\ker\left(  \Delta^{\ast}\right)  $;
see (\ref{Eq4.1}). But by (\ref{Eq4.21}), $\Delta$ maps its domain
$\mathcal{D}_{V}$ into $\mathcal{C}^{\bot}$, so if $h\in\mathcal{C}$, then
\[
\left\langle \Delta u,h\right\rangle =0\text{ for }\forall u\in\mathcal{D}%
_{V}\text{.}%
\]
Hence $h\in\operatorname*{dom}\left(  \Delta^{\ast}\right)  $, and
\[
\left\langle u,\Delta^{\ast}h\right\rangle =0,~\forall u\in\mathcal{D}%
_{V}\text{.}%
\]
But $\mathcal{D}_{V}$ is dense in $\mathcal{H}$ by (\ref{Eq4.7}), and we
conclude that $\Delta^{\ast}h=0$, i.e., that $h\in\ker\left(  \Delta^{\ast
}\right)  $.
\end{proof}

\noindent\textbf{Remark 3.4(a) revisited.} Even though in Example
(\ref{Eq4.11}), $\delta_{x}$ is not in $\mathcal{H}$, the operator $\Delta
$:$=-\left(  \frac{d}{dx}\right)  ^{2}$ on the domain $\mathcal{D}%
:=C_{c}^{\infty}\left(  0,1\right)  $ is still hermitian and (\ref{Eq4.21})
holds. However, this candidate for domain $\mathcal{D}$ is not dense in
$\mathcal{H}$; in fact the function $f\left(  x\right)  =x$ is in
$\mathcal{H}\ominus C_{c}^{\infty}\left(  0,1\right)  $.

\section{The essential selfadjointness problem for a pair of hermitian
operators in duality\label{EssentProb}}

Let $\left(  \mathcal{H},X\right)  $ be a reproducing kernel Hilbert space
with base point $o,$ and assume that
\begin{equation}
\delta_{x}\in\mathcal{H}\text{ for all }x\in X\text{.} \label{Eq5.1}%
\end{equation}
Let $\Delta$ be the associated hermitian operator.
\begin{equation}
\left(  \Delta f\right)  \left(  x\right)  \text{:}=\left\langle \delta
_{x},f\right\rangle ,~x\in X\text{.} \label{Eq5.2}%
\end{equation}
Let $v_{x}$:$=k\left(  \cdot,x\right)  $ be the functions in $\mathcal{H}$
derived from the reproducing kernel $k\left(  \cdot,\cdot\right)  $ for
$\mathcal{H}$. Set
\begin{align*}
\mathcal{F}  &  \text{:}=\text{closed }\operatorname*{span}\left\{  \delta
_{x}|x\in X\right\}  ;\\
\mathcal{C}  &  \text{:}=\mathcal{H}\ominus\mathcal{F}\text{;}\\
\mathcal{D}_{F}  &  \text{:}=\operatorname*{span}\left\{  \delta_{x}|x\in
X\right\}  ;\text{ and }\\
\mathcal{D}_{V}  &  \text{:}=\operatorname*{span}\left\{  v_{x}|x\in X^{\ast
}\right\}  ;
\end{align*}
where $X^{\ast}$:$=X\diagdown\left(  0\right)  $, and where \textquotedblleft
span\textquotedblright\ means \textquotedblleft finite linear
combinations.\textquotedblright

It follows from Theorem \ref{Theo3.6} and Theorem \ref{Theo4.7} that the
prescriptions
\begin{equation}
\Delta_{F}\text{:}=\Delta|_{\mathcal{D}_{F}}\text{, } \label{Eq5.3}%
\end{equation}
meaning restriction; and
\begin{equation}
\Delta_{V}\text{:}=\Delta|_{\mathcal{D}_{V}} \label{Eq5.4}%
\end{equation}
yield hermitian operators with dense domain; the domain $\mathcal{D}_{F}$ of
$\Delta_{F}$ is dense in $\mathcal{H}$.

Let $\Delta_{F}^{\operatorname*{clo}}$ be the closure of $\Delta_{F}$ with
domain $\operatorname*{dom}\left(  \Delta_{F}^{\operatorname*{clo}}\right)  $;
and similarly $\Delta_{V}^{\operatorname*{clo}}$ for the closure of
$\Delta_{V}$ as a densely defined hermitian operator in $\mathcal{H}$.

Finally, set
\begin{equation}
\mathcal{D}_{H}\text{:}\text{=}\mathcal{D}_{V}+\mathcal{C}\text{, }
\label{Eq5.5}%
\end{equation}
and
\begin{equation}
\Delta_{H}\left(  u+h\right)  \text{:}=\Delta_{V}u\text{ for all }%
u\in\mathcal{D}_{V}\text{ and }h\in\mathcal{C}\text{.} \label{Eq5.6}%
\end{equation}

We proved in Theorem \ref{Theo3.6} that
\begin{equation}
\operatorname*{dom}\left(  \Delta_{H}^{\ast}\right)  =\left\{  \psi
\in\operatorname*{dom}\left(  \Delta_{V}^{\ast}\right)  |\Delta_{V}^{\ast}%
\psi\in\mathcal{F}\right\}  \text{.} \label{Eq5.7}%
\end{equation}

\begin{definition}
\label{Def5.1}A hermitian operator $\Delta$ with dense domain $\mathcal{D}$ is
a Hilbert space $\mathcal{H}$ is said to be \emph{selfadjoint} iff
$\Delta=\Delta^{\ast};$ and it is said to be \emph{essentially selfadjoint}
iff $\Delta^{\operatorname*{clo}}$ is selfadjoint, where $\Delta
^{\operatorname*{clo}}$ means the \emph{(}graph\emph{)} closure of $\Delta$.
\end{definition}

By a theorem of von Neumann (\cite{vN32}, \cite{ReSi75}) $\Delta$ is
essentially selfadjoint iff there are values $\lambda_{\pm}\in\mathbb{C}$,
$\operatorname*{Im}\lambda_{+}>0$, $\operatorname*{Im}\lambda_{-}<0$ such that
the two equations
\begin{equation}
\Delta^{\ast}\psi_{\pm}=\lambda_{\pm}\psi_{\pm} \label{Eq5.8}%
\end{equation}
in $\mathcal{H}$ only have the zero-solutions $\psi_{\pm}=0$. The solutions
$\psi_{\pm}$ to (\ref{Eq5.8}) form the \textit{deficiency spaces}, and their
respective dimensions are called the \textit{deficiency indices}.

If $\Delta$ is further semibounded, i.e., $\left\langle u,\Delta
u\right\rangle \geq0$ for all $u\in\mathcal{D}$, then for essential
selfadjointness it is enough to verify that the equation
\begin{equation}
\Delta^{\ast}\psi=-\psi\label{Eq5.9}%
\end{equation}
has only the zero-solution $\psi=0$ in $\mathcal{H}$. (It is understood in
(\ref{Eq5.8}) and (\ref{Eq5.9}) that the vectors $\psi_{\pm}$ and $\psi$ are
assumed to be in $\operatorname*{dom}\left(  \Delta^{\ast}\right)  $.

\begin{theorem}
\label{Theo5.2}Consider the two operators $\Delta_{F}$ in $\mathcal{F}$, and
$\Delta_{H}$ in $\mathcal{H}$ above, equation, \emph{(}\ref{Eq5.3}\emph{)} and
\emph{(}\ref{Eq5.6}\emph{)}, respectively.

Fix $\lambda\in\mathbb{C}$, with $\operatorname*{Im}\lambda\not =0$, or if
$\Delta_{V}$ is semibounded, $\operatorname*{Re}\lambda<0;$ then a function
$\psi\in\mathcal{H}$ satisfies
\begin{equation}
\Delta_{H}^{\ast}\psi=\lambda\psi\label{Eq5.10}%
\end{equation}
if and only if $\psi\in\mathcal{F}$, and
\begin{equation}
\Delta_{F}^{\ast}\psi=\lambda\psi\text{.} \label{Eq5.11}%
\end{equation}

\end{theorem}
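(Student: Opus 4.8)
The plan is to exploit the explicit description of $\operatorname{dom}(\Delta_H^{\ast})$ already established in \eqref{Eq5.7}, namely
\[
\operatorname{dom}\left(\Delta_H^{\ast}\right)=\left\{\psi\in\operatorname{dom}\left(\Delta_V^{\ast}\right)\ \middle|\ \Delta_V^{\ast}\psi\in\mathcal{F}\right\},
\]
together with the orthogonal decomposition $\mathcal{H}=\mathcal{F}\oplus\mathcal{C}$. The key observation is that for the specified values of $\lambda$ (either $\operatorname{Im}\lambda\neq0$, or $\operatorname{Re}\lambda<0$ in the semibounded case), the eigenvalue equation $\Delta_H^{\ast}\psi=\lambda\psi$ forces $\psi$ to lie in $\mathcal{F}$ automatically, because $\lambda$ avoids the spectrum of any selfadjoint extension of $\Delta_V$ restricted to the $\mathcal{C}$-direction — and $\mathcal{C}\subseteq\ker(\Delta^{\ast})$ consists of "eigenvectors at $0$'' for $\Delta_V^{\ast}$.

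\textbf{Step 1 (the "only if'' direction).} Suppose $\psi\in\operatorname{dom}(\Delta_H^{\ast})$ satisfies $\Delta_H^{\ast}\psi=\lambda\psi$. By \eqref{Eq5.7}, $\psi\in\operatorname{dom}(\Delta_V^{\ast})$ and $\Delta_V^{\ast}\psi=\Delta_H^{\ast}\psi=\lambda\psi$ (here one must first check that $\Delta_H^{\ast}$ and $\Delta_V^{\ast}$ agree on $\operatorname{dom}(\Delta_H^{\ast})$, which follows from the definitions since $\Delta_H$ extends $\Delta_V$ — so $\Delta_H^{\ast}\subseteq\Delta_V^{\ast}$ — and on that domain the action is the same). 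Now write $\psi=\psi_{\mathcal{F}}+\psi_{\mathcal{C}}$ with $\psi_{\mathcal{F}}\in\mathcal{F}$, $\psi_{\mathcal{C}}\in\mathcal{C}$. Since $\Delta_V^{\ast}\psi=\lambda\psi\in\mathcal{F}$ (as $\mathcal{F}$ is $\Delta_H^{\ast}$-invariant by \eqref{Eq5.7}: the range condition says $\Delta_V^{\ast}\psi\in\mathcal{F}$), and since $\mathcal{C}\subseteq\ker(\Delta^{\ast})=\ker(\Delta_V^{\ast})$ forces $\Delta_V^{\ast}\psi_{\mathcal{C}}=0$, we get $\lambda\psi_{\mathcal{C}}=\lambda\psi-\lambda\psi_{\mathcal{F}}$ on the one hand, while pairing $\Delta_V^{\ast}\psi=\lambda\psi$ against $h\in\mathcal{C}$ and using that $\Delta_V$ maps into $\mathcal{F}$ gives $\lambda\langle\psi_{\mathcal{C}},h\rangle$-type identities forcing $\psi_{\mathcal{C}}=0$ whenever $\lambda\neq0$. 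More carefully: apply $\langle\cdot,h\rangle$ for $h\in\mathcal{C}$ to $\lambda\psi=\Delta_V^{\ast}\psi$; the left side is $\lambda\langle\psi,h\rangle=\lambda\langle\psi_{\mathcal{C}},h\rangle$, and the right side equals $\langle\psi,\Delta_V^{\ast\ast}h\rangle$-style manipulation — but the cleanest route is: $\lambda\psi=\Delta_V^{\ast}\psi\in\mathcal{F}$ directly from \eqref{Eq5.7}, hence $\lambda\psi_{\mathcal{C}}=P_{\mathcal{C}}(\lambda\psi)=0$, so $\psi_{\mathcal{C}}=0$ since $\lambda\neq0$ (note $\lambda=0$ is excluded by both hypotheses on $\lambda$). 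Therefore $\psi=\psi_{\mathcal{F}}\in\mathcal{F}$. Finally, $\Delta_V^{\ast}\psi=\lambda\psi$ with $\psi\in\mathcal{F}$ needs to be upgraded to $\Delta_F^{\ast}\psi=\lambda\psi$; this requires relating $\Delta_F^{\ast}$ (adjoint taken within $\mathcal{F}$) to $\Delta_V^{\ast}$ (adjoint within $\mathcal{H}$) — the point is that for $\psi\in\mathcal{F}$, testing $\langle\psi,\Delta_V v\rangle$ against $v\in\mathcal{D}_V$ versus testing $\langle\psi,\Delta_F\varphi\rangle$ against $\varphi\in\mathcal{D}_F$ produces the same Riesz representative modulo $\mathcal{C}$, and since that representative $\lambda\psi$ already lies in $\mathcal{F}$, it is genuinely $\Delta_F^{\ast}\psi$.

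\textbf{Step 2 (the "if'' direction) and the main obstacle.} Conversely, if $\psi\in\mathcal{F}$ and $\Delta_F^{\ast}\psi=\lambda\psi$, one shows $\psi\in\operatorname{dom}(\Delta_H^{\ast})$ by verifying the boundedness estimate $|\langle\psi,\Delta_H(u+h)\rangle|\leq C\|u+h\|$: since $\Delta_H(u+h)=\Delta_V u$ and $\psi\in\mathcal{F}$ while one needs to compare with $\Delta_F$, the estimate reduces to controlling $\langle\psi,\Delta_V u\rangle$ for $u\in\mathcal{D}_V$, which should follow from $\Delta_F^{\ast}\psi=\lambda\psi$ after checking $\Delta_V$ and $\Delta_F$ have compatible closures (both are restrictions of the common $\Delta$ of Theorem \ref{Theo4.7}). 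Then $\Delta_H^{\ast}\psi=\lambda\psi$ follows, and $\lambda\psi\in\mathcal{F}$ is automatic. \textbf{The main obstacle} I anticipate is Step 1's deduction that $\psi_{\mathcal{C}}=0$ — one must be careful that \eqref{Eq5.7} already bakes in "$\Delta_V^{\ast}\psi\in\mathcal{F}$,'' so the $\mathcal{C}$-component is killed purely algebraically by $\lambda\neq0$, \emph{without} needing the semiboundedness or the imaginary-part hypothesis at all for that particular step. Those hypotheses on $\lambda$ instead do their real work in the companion essential-selfadjointness application (ensuring the deficiency equations have only trivial solutions), but within this theorem the only thing used is $\lambda\neq0$. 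This is worth stating explicitly to avoid the reader hunting for where $\operatorname{Im}\lambda\neq0$ enters. A secondary subtlety is the bookkeeping between the two adjoints $\Delta_F^{\ast}$ (in $\mathcal{F}$) and $\Delta_V^{\ast}$ (in $\mathcal{H}$), which I would handle by noting $\Delta_F$ and $\Delta_V$ both have graph-closures contained in the closure of the operator $\Delta$ from \eqref{Eq4.19}, so their adjoints restrict/extend each other in the expected way on $\mathcal{F}$.
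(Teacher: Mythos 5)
Your proposal is correct and follows essentially the same route as the paper: both arguments hinge on the characterization \eqref{Eq5.7} of $\operatorname{dom}(\Delta_{H}^{\ast})$ and on the observation that $\lambda\psi=\Delta_{V}^{\ast}\psi\in\mathcal{F}$ together with $\lambda\neq0$ forces $\psi\in\mathcal{F}$, which is exactly the paper's closing remark that only $\lambda\neq0$ is used in that step. Your extra care about reconciling the two adjoints $\Delta_{F}^{\ast}$ (taken in $\mathcal{F}$) and $\Delta_{V}^{\ast}$ (taken in $\mathcal{H}$) addresses a point the paper passes over silently, but it does not change the structure of the argument.
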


\begin{proof}
The reasoning is on (\ref{Eq5.7}) and the previous considerations. Indeed we
have the following two-way implications:

\begin{itemize}
\item $\psi$ satisfies (\ref{Eq5.10}).\medskip

\item[ ] $\Updownarrow\medskip$

\item $\psi\in\operatorname*{dom}\left(  \Delta_{V}^{\ast}\right)  ,$
$\Delta_{V}^{\ast}\psi\in\mathcal{F}$, and
\begin{equation}
\Delta_{V}^{\ast}\psi=\lambda\psi\text{.} \label{Eq5.12}%
\end{equation}

\vspace{-10pt}

\item[ ] $\Updownarrow$\medskip

\item $\psi\in\mathcal{F}$, and $\Delta_{F}^{\ast}\psi=\lambda\psi$.
\end{itemize}

In the last step we used that $\lambda\not =0$, so a solution $\psi$ to
(\ref{Eq5.12}) with $\Delta_{V}^{\ast}\psi\in\mathcal{F}$ must be in
$\mathcal{F}$.
\end{proof}

\begin{corollary}
\label{Cor5.3}Let the two operators $\Delta_{F}$ in $\mathcal{F}$, and
$\Delta_{V}$ in $\mathcal{H}$ be as above\emph{;} and let $\Delta_{H}$ be the
extension of $\Delta_{V}$ from \emph{(}\ref{Eq5.6}\emph{)}.

Then the following two properties are equivalent\emph{:}

\begin{enumerate}
\item[(i)] $\Delta_{H}$ is essentially selfadjoint\emph{;} and

\item[(ii)] $\Delta_{F}$ is essentially selfadjoint.
\end{enumerate}
\end{corollary}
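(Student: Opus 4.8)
The plan is to deduce Corollary \ref{Cor5.3} directly from Theorem \ref{Theo5.2} together with von Neumann's deficiency-index criterion as recalled just before Theorem \ref{Theo5.2}. The essential point is that Theorem \ref{Theo5.2} sets up, for every admissible spectral parameter $\lambda$, a bijection between the solution space of $\Delta_H^{\ast}\psi = \lambda\psi$ in $\mathcal{H}$ and the solution space of $\Delta_F^{\ast}\psi = \lambda\psi$ in $\mathcal{F}$. Since essential selfadjointness is precisely the vanishing of these solution spaces for one (equivalently any) $\lambda$ with $\operatorname{Im}\lambda \neq 0$ in each half-plane, the equivalence follows once we check that the hypotheses of Theorem \ref{Theo5.2} are met by the relevant $\lambda$'s.

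Concretely, I would argue as follows. First I would note that $\Delta_V$ is semibounded: by \eqref{Eq4.22} in Theorem \ref{Theo4.7} we have $\langle u, \Delta u\rangle \geq 0$ for all $u \in \mathcal{D}_V$, and this passes to $\Delta_V$ since $\Delta_V = \Delta|_{\mathcal{D}_V}$. The same semiboundedness holds for $\Delta_F = \Delta|_{\mathcal{D}_F}$ if one wishes, but what matters is that for the extension $\Delta_H$, semiboundedness on $\mathcal{D}_V$ suffices to invoke the simplified criterion \eqref{Eq5.9}: essential selfadjointness of $\Delta_H$ is equivalent to $\Delta_H^{\ast}\psi = -\psi \Rightarrow \psi = 0$. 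Now apply Theorem \ref{Theo5.2} with $\lambda = -1$ (which satisfies $\operatorname{Re}\lambda < 0$, so it is admissible under the semibounded alternative): a solution $\psi \in \mathcal{H}$ of $\Delta_H^{\ast}\psi = -\psi$ exists and is nonzero if and only if there is a nonzero $\psi \in \mathcal{F}$ with $\Delta_F^{\ast}\psi = -\psi$. Hence $\Delta_H$ is essentially selfadjoint iff the equation $\Delta_F^{\ast}\psi = -\psi$ has only the trivial solution in $\mathcal{F}$. For the other direction I would observe that $\Delta_F$, being a restriction of $\Delta$ to $\mathcal{D}_F$, is also semibounded (its numerical range is contained in that of $\Delta$, or more directly one re-runs the computation of \eqref{Eq4.22}), so the same simplified criterion \eqref{Eq5.9} applies to $\Delta_F$: essential selfadjointness of $\Delta_F$ is equivalent to triviality of solutions of $\Delta_F^{\ast}\psi = -\psi$ in $\mathcal{F}$. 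Chaining these two equivalences gives (i) $\Leftrightarrow$ (ii).

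There is a subtlety I would want to address carefully, and it is the main obstacle: one must make sure the criterion is being applied in the correct ambient Hilbert space for each operator. The operator $\Delta_H$ lives in $\mathcal{H}$ with dense domain $\mathcal{D}_H = \mathcal{D}_V + \mathcal{C}$, so $\Delta_H^{\ast}$ is the adjoint relative to $\mathcal{H}$, and the deficiency equation is posed in $\mathcal{H}$; this is exactly the left-hand side of Theorem \ref{Theo5.2}. The operator $\Delta_F$ lives in $\mathcal{F} = \overline{\operatorname{span}}\{\delta_x\}$ with dense domain $\mathcal{D}_F$, so $\Delta_F^{\ast}$ must be interpreted as the adjoint in $\mathcal{F}$, and the deficiency equation is posed in $\mathcal{F}$ — again matching the right-hand side of Theorem \ref{Theo5.2}. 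Provided one reads \eqref{Eq5.11} and \eqref{Eq5.7} with these conventions (as the statement of Theorem \ref{Theo5.2} already does), no mismatch arises. I would also note in passing that if $\Delta_V$ fails to be semibounded one may instead invoke Theorem \ref{Theo5.2} at a conjugate pair $\lambda_{\pm}$ with $\operatorname{Im}\lambda_{\pm} \gtrless 0$ and use the two-sided von Neumann criterion \eqref{Eq5.8}; but here semiboundedness is available, so the one-equation version is the cleanest route.

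Thus the proof is essentially a packaging of Theorem \ref{Theo5.2} plus the semiboundedness from \eqref{Eq4.22}, and the only care needed is bookkeeping of which Hilbert space each adjoint and each deficiency equation is taken in. I would present it in three or four lines: (1) record semiboundedness of $\Delta_V$ and of $\Delta_F$; (2) invoke the simplified essential-selfadjointness criterion for each; (3) apply Theorem \ref{Theo5.2} with $\lambda = -1$ to identify the two deficiency spaces; (4) conclude the equivalence (i) $\Leftrightarrow$ (ii).
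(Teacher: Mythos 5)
Your proposal is correct and follows essentially the same route as the paper: the paper's own proof simply declares the equivalence ``immediate from the theorem, given von Neumann's theory of deficiency spaces,'' and your argument is a careful unpacking of exactly that, using the semibounded one-equation criterion \eqref{Eq5.9} at $\lambda=-1$ and Theorem \ref{Theo5.2} to identify the two deficiency spaces. The one point you gesture at but could make explicit is that semiboundedness of $\Delta_{H}$ on all of $\mathcal{D}_{H}=\mathcal{D}_{V}+\mathcal{C}$ (not just on $\mathcal{D}_{V}$) follows because $\Delta_{V}$ maps into $\mathcal{F}=\mathcal{C}^{\bot}$, so $\left\langle u+h,\Delta_{H}\left(u+h\right)\right\rangle =\left\langle u,\Delta_{V}u\right\rangle \geq0$; this is routine and does not affect the validity of your argument.
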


\begin{proof}
The equivalence (i)$\Leftrightarrow$(ii) is immediate from the theorem; given
von Neumann's theory of deficiency spaces; see (\ref{Eq5.10}) and
(\ref{Eq5.11}) above.
\end{proof}

\begin{remark}
\label{Rem5.4}In many applications \emph{(}see \cite{JoPe08}\emph{)} it's
easier to verify essential selfadjointness for $\Delta_{F}$ than it is for
$\Delta_{H}$.

This is an instance of our duality theory\emph{:} A comparison of
\emph{restrictions} and \emph{extensions}.
\end{remark}

\begin{corollary}
\label{Cor5.5}Consider the two operators $\Delta_{F}$ in $\mathcal{F}$ and
$\Delta_{V}$ in $\mathcal{H}$. Let $\Delta_{H}$ be the extension of
$\Delta_{V}$ defined in \emph{(}\ref{Eq5.6}\emph{)}. We assume that
$\mathcal{C}\subseteq\ker\left(  \Delta_{V}^{\ast}\right)  $. Then the
following four affirmations are equivalent\emph{:}

\begin{enumerate}
\item[(i)] $\Delta_{V}^{\ast}$ maps its domain into $\mathcal{F}%
=\mathcal{H}\ominus\mathcal{C}$.

\item[(ii)] $\Delta_{V}^{\ast}=\Delta_{H}^{\ast}$.

\item[(iii)] $\Delta_{V}^{\operatorname*{clo}}=\Delta_{H}^{\operatorname*{clo}%
}$.

\item[(iv)] $\mathcal{C}\subseteq\operatorname*{dom}\left(  \Delta
_{V}^{\operatorname*{clo}}\right)  $.
\end{enumerate}
\end{corollary}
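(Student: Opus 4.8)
\textbf{Proof plan for Corollary \ref{Cor5.5}.}

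The plan is to establish the cycle of implications (i)$\Rightarrow$(ii)$\Rightarrow$(iii)$\Rightarrow$(iv)$\Rightarrow$(i), using throughout the description of $\operatorname*{dom}(\Delta_H^{\ast})$ recorded in \eqref{Eq5.7}, namely
$\operatorname*{dom}(\Delta_H^{\ast})=\{\psi\in\operatorname*{dom}(\Delta_V^{\ast}):\Delta_V^{\ast}\psi\in\mathcal{F}\}$, together with the containment $\Delta_V\subseteq\Delta_H$, which by Lemma \ref{Lem3.2} (applied with $T=\Delta_H$, $\Delta=\Delta_V$) gives $\Delta_H^{\ast}\subseteq\Delta_V^{\ast}$, and with $\Delta_V^{\operatorname*{clo}}\subseteq\Delta_H^{\operatorname*{clo}}$. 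Note also that on $\operatorname*{dom}(\Delta_H^{\ast})$ the two adjoints agree: if $\psi\in\operatorname*{dom}(\Delta_H^{\ast})$ then $\Delta_H^{\ast}\psi=\Delta_V^{\ast}\psi$ (this is contained in the proof of Theorem \ref{Theo5.2}), so the content of (ii) is purely the assertion that the two \emph{domains} coincide.

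For (i)$\Rightarrow$(ii): if $\Delta_V^{\ast}$ maps $\operatorname*{dom}(\Delta_V^{\ast})$ into $\mathcal{F}$, then the side condition ``$\Delta_V^{\ast}\psi\in\mathcal{F}$'' in \eqref{Eq5.7} is automatic, so $\operatorname*{dom}(\Delta_H^{\ast})=\operatorname*{dom}(\Delta_V^{\ast})$; combined with the agreement of the two adjoint actions just noted, (ii) follows. For (ii)$\Rightarrow$(iii): one uses the general identity $(\Delta^{\ast})^{\ast}=\Delta^{\operatorname*{clo}}$ for any densely defined hermitian operator (the bicommutant/double-adjoint fact, implicit in \eqref{Eq3.4}--\eqref{Eq3.5}); applying $\ast$ to both sides of (ii) gives $\Delta_V^{\operatorname*{clo}}=(\Delta_V^{\ast})^{\ast}=(\Delta_H^{\ast})^{\ast}=\Delta_H^{\operatorname*{clo}}$. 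For (iii)$\Rightarrow$(iv): since $\mathcal{C}\subseteq\operatorname*{dom}(\Delta_H)\subseteq\operatorname*{dom}(\Delta_H^{\operatorname*{clo}})$ trivially from the definition \eqref{Eq5.6}, (iii) immediately yields $\mathcal{C}\subseteq\operatorname*{dom}(\Delta_V^{\operatorname*{clo}})$. For (iv)$\Rightarrow$(i): given $\psi\in\operatorname*{dom}(\Delta_V^{\ast})$ and any $h\in\mathcal{C}\subseteq\operatorname*{dom}(\Delta_V^{\operatorname*{clo}})$, pick $h_n\in\mathcal{D}_V$ with $h_n\to h$ and $\Delta_V h_n\to\Delta_V^{\operatorname*{clo}}h$ in $\mathcal{H}$; but $\mathcal{C}\subseteq\ker(\Delta_V^{\ast})$ forces $\Delta_V^{\operatorname*{clo}}h=0$ as well (apply $\Delta_V^{\ast}$ to $h$ and use $\Delta_V^{\operatorname*{clo}}\subseteq\Delta_V^{\ast}$), so $\langle\Delta_V^{\ast}\psi,h\rangle=\lim_n\langle\Delta_V^{\ast}\psi,h_n\rangle=\lim_n\langle\psi,\Delta_V h_n\rangle=\langle\psi,\Delta_V^{\operatorname*{clo}}h\rangle=0$; since $h\in\mathcal{C}$ was arbitrary, $\Delta_V^{\ast}\psi\in\mathcal{C}^{\bot}=\mathcal{F}$, which is (i).

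The main obstacle is the implication (ii)$\Rightarrow$(iii), where one must be careful that the double-adjoint identity $(\Delta^{\ast})^{\ast}=\Delta^{\operatorname*{clo}}$ is legitimately available here: it holds for any densely defined operator whose adjoint is itself densely defined, and both $\Delta_V$ and $\Delta_H$ are densely defined hermitian operators so their adjoints are densely defined (indeed both adjoints extend the hermitian operators themselves, which have dense domain). The only other point requiring a line of care is the remark in (iv)$\Rightarrow$(i) that the hypothesis $\mathcal{C}\subseteq\ker(\Delta_V^{\ast})$ together with $\mathcal{C}\subseteq\operatorname*{dom}(\Delta_V^{\operatorname*{clo}})$ gives $\Delta_V^{\operatorname*{clo}}h=0$ for $h\in\mathcal{C}$; this is what makes the limiting argument collapse to $\langle\psi,\Delta_V^{\operatorname*{clo}}h\rangle=0$ rather than something uncontrolled. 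Everything else is bookkeeping with the graph inclusions of Lemma \ref{Lem3.2} and the explicit domain formula \eqref{Eq5.7}.
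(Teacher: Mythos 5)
Your proposal is correct, and the first half coincides with the paper's argument: (i)$\Rightarrow$(ii) via the domain formula (\ref{Eq5.13}) together with $\Delta_{H}^{\ast}\subseteq\Delta_{V}^{\ast}$, and (ii)$\Rightarrow$(iii) by taking double adjoints, exactly as in the paper. Where you diverge is in closing the cycle. The paper establishes (iii)$\Leftrightarrow$(iv) in both directions at once by asserting the identity $\operatorname*{dom}(\Delta_{H}^{\operatorname*{clo}})=\operatorname*{dom}(\Delta_{V}^{\operatorname*{clo}})+\mathcal{C}$ (equation (\ref{Eq5.14})), justified only by "a simple limit consideration," and then returns to (i) through the remark that $\Delta_{H}^{\ast}$ maps into $\mathcal{F}$. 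You instead use only the trivial half of that identity --- $\mathcal{C}\subseteq\operatorname*{dom}(\Delta_{H})\subseteq\operatorname*{dom}(\Delta_{H}^{\operatorname*{clo}})$, read off directly from (\ref{Eq5.6}) --- to get (iii)$\Rightarrow$(iv), and you supply a self-contained (iv)$\Rightarrow$(i): for $h\in\mathcal{C}\cap\operatorname*{dom}(\Delta_{V}^{\operatorname*{clo}})$ the hypothesis $\mathcal{C}\subseteq\ker(\Delta_{V}^{\ast})$ together with $\Delta_{V}^{\operatorname*{clo}}\subseteq\Delta_{V}^{\ast}$ forces $\Delta_{V}^{\operatorname*{clo}}h=0$, whence $\left\langle \Delta_{V}^{\ast}\psi,h\right\rangle =\left\langle \psi,\Delta_{V}^{\operatorname*{clo}}h\right\rangle =0$ for every $\psi\in\operatorname*{dom}(\Delta_{V}^{\ast})$, i.e. $\Delta_{V}^{\ast}\psi\in\mathcal{C}^{\bot}=\mathcal{F}$. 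This buys you something real: you never need to prove the nontrivial inclusion $\operatorname*{dom}(\Delta_{H}^{\operatorname*{clo}})\subseteq\operatorname*{dom}(\Delta_{V}^{\operatorname*{clo}})+\mathcal{C}$, which is the one genuinely delicate (and unargued) step in the paper's version; the price is that you prove only a one-way cycle rather than the two-way equivalence (iii)$\Leftrightarrow$(iv) as a standalone fact, but for the corollary as stated the cycle suffices. Your cautionary notes --- that $(\Delta^{\ast})^{\ast}=\Delta^{\operatorname*{clo}}$ needs the adjoint to be densely defined, which holds here since both operators are densely defined and hermitian --- are well placed.
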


\begin{proof}
(i)$\Rightarrow$(ii). In general $\Delta_{V}\subseteq\Delta_{H}$ so
$\Delta_{H}^{\ast}\subseteq\Delta_{V}^{\ast}$, and
\begin{equation}
\operatorname*{dom}\left(  \Delta_{H}^{\ast}\right)  =\left\{  \psi
\in\operatorname*{dom}\left(  \Delta_{V}^{\ast}\right)  |\Delta_{V}^{\ast}%
\psi\in\mathcal{F}\right\}  , \label{Eq5.13}%
\end{equation}
so if (i) holds, then $\operatorname*{dom}\left(  \Delta_{H}^{\ast}\right)
=\operatorname*{dom}\left(  \Delta_{V}^{\ast}\right)  $, and (ii) follows.

(ii)$\Rightarrow$(iii). Take adjoints in (ii), and we get
\[
\Delta_{V}^{\operatorname*{clo}}=\Delta_{V}^{\ast\ast}=\Delta_{H}^{\ast\ast
}=\Delta_{H}^{\operatorname*{clo}}%
\]
which is condition (iii).

(iii)$\Rightarrow$(iv). A simple limit consideration applied to (\ref{Eq5.6})
yields the following:
\begin{equation}
\operatorname*{dom}\left(  \Delta_{H}^{\operatorname*{clo}}\right)
=\operatorname*{dom}\left(  \Delta_{V}^{\operatorname*{clo}}\right)
+\mathcal{C} \label{Eq5.14}%
\end{equation}
which proves (iii)$\Leftrightarrow$(iv). Since $\Delta_{H}^{\ast}$ maps into
$\mathcal{F}$ by (\ref{Eq5.13}), it follows that (iv)$\Rightarrow$(i).
\end{proof}

There are many applications of selfadjoint extension operators; a major reason
being that the Spectral Theorem applies to each selfadjoint extension, while
it does not apply to a hermitian non-selfadjoint operator.

The operator $\Delta$ we consider here is semibounded on its dense domain, so
it has semibounded selfadjoint extensions with the same bound, for example the
Friedrichs extension $\Delta_{Fr}$; see \cite{DuSc88}.

The following applies to any one of the semibounded selfadjoint extension
$\Delta_{S}$ of $\Delta$. Given $\Delta_{S}$, there is a projection valued
measure $E_{S}\left(  \cdot\right)  $ defined on the Borel-sets $\mathcal{B}$
in $[0,\infty)$ and mapping into projections in $\mathcal{H}$; i.e., each
$P$:$=E_{S}\left(  B\right)  ,~B\in\mathcal{B}$ satisfies $P=P^{\ast}=P^{2}$;
and we have
\begin{gather}
\Delta_{S}=\int\limits_{0}^{\infty}\lambda E_{S}\left(  d\lambda\right)
,~I_{\mathcal{H}}=\int\limits_{0}^{\infty}E_{S}\left(  d\lambda\right)
\text{, and }\label{Eq5.15}\\
\int\limits_{0}^{\infty}\left\Vert E_{S}\left(  d\lambda\right)  u\right\Vert
_{\mathcal{H}}^{2}=\left\Vert u\right\Vert _{\mathcal{H}}^{2}\text{ for all
}u\in\mathcal{H}. \label{Eq5.16}%
\end{gather}

\begin{definition}
\label{Def5.6}Let $\left(  \mathcal{H},X,o\right)  $ be a relative reproducing
kernel Hilbert space satisfying the conditions above, and let $\Delta$,
$\Delta_{S}$ be associated operators with the listed properties. Let
$\mathcal{H}_{\mathbb{R}}$ be a real form of $\mathcal{H}$, and set
\begin{equation}
\mathcal{S}\text{\emph{:}}=\left\{  u\in\mathcal{H}_{\mathbb{R}}%
|\int\limits_{0}^{\infty}\lambda^{2p}\left\Vert E_{S}\left(  d\lambda\right)
u\right\Vert ^{2}<\infty,~\text{for all }p\in\mathbb{N}\right\}  \text{.}
\label{Eq5.17}%
\end{equation}
Recall
\begin{equation}
u\in\operatorname*{dom}\left(  \Delta_{S}^{p}\right)  \Leftrightarrow
\left\Vert \Delta_{S}^{p}u\right\Vert ^{2}=\int\limits_{0}^{\infty}%
\lambda^{2p}\left\Vert E_{S}\left(  d\lambda\right)  u\right\Vert ^{2}%
<\infty\text{.} \label{Eq5.18}%
\end{equation}

\end{definition}

We turn $\mathcal{S}$ into a Fr\'{e}chet space with the seminorms
\begin{equation}
\left\Vert u\right\Vert _{p}\text{:}=\left\Vert \Delta_{S}^{p}u\right\Vert
_{\mathcal{H}},~\text{for }u\in\mathcal{S},~\text{and }p\in\mathbb{N}\text{.}
\label{Eq5.19}%
\end{equation}
and we denote the dual of $\left(  \mathcal{S},~\left\Vert \cdot\right\Vert
_{p}\right)  _{p\in\mathbb{N}}$ by $\mathcal{S}^{\prime}$ for tempered distributions.

As a result we get the following \textit{Gelfand triple} \cite{JoPe08}:
\begin{equation}
\mathcal{S}\subseteq\mathcal{H}_{\mathbb{R}}\subseteq\mathcal{S}^{\prime}
\label{Eq5.20}%
\end{equation}
with the two inclusions in (\ref{Eq5.20}) representing continuous embeddings.

The cylinder sets $\left(  \subseteq\mathcal{S}^{\prime}\right)  $ generate a
sigma-algebra $\mathcal{B}$:$=\mathcal{B}\left(  \mathcal{S}^{\prime}\right)
$, and there is an associated (Wiener-) measure $W$ defined on $\mathcal{B}$
determined uniquely by the following identity:
\begin{equation}
\int\limits_{\mathcal{S}^{\prime}}e^{i\left\langle u,\xi\right\rangle
}dW\left(  \xi\right)  =e^{-\frac{1}{2}\left\Vert u\right\Vert _{\mathcal{H}%
}^{2}},~\text{for all }u\in\mathcal{H}_{\mathbb{R}}\text{.} \label{Eq5.21}%
\end{equation}

In the exponent on the LHS in (\ref{Eq5.21}), the expression $\left\langle
u,\cdot\right\rangle $ will be denoted as a function $\tilde{u}$ on
$\mathcal{S}^{\prime}$. We have
\begin{equation}
\left\langle u_{1},u_{2}\right\rangle _{\mathcal{H}_{\mathbb{R}}}%
=\int\limits_{\mathcal{S}^{\prime}}\tilde{u}_{1}\tilde{u}_{2}~dW\text{,}
\label{5.22}%
\end{equation}
and
\begin{equation}
\int\limits_{\mathcal{S}^{\prime}}\tilde{u}~dW=0\text{,} \label{Eq5.22}%
\end{equation}
for all $u_{1},u_{2},u\in\mathcal{H}_{\mathbb{R}}$.

If $\mu$ is a signed measure on $\mathcal{S}^{\prime}$, we denote its Fourier
transform
\begin{equation}
\hat{\mu}\left(  u\right)  \text{:}=\int\limits_{\mathcal{S}^{\prime}%
}e^{i\left\langle u,\xi\right\rangle }~d\mu\left(  \xi\right)  \text{;}
\label{Eq5.23}%
\end{equation}
or simply $\int_{\mathcal{S}^{\prime}}e^{i\tilde{u}\left(  \cdot\right)
}~d\mu\left(  \cdot\right)  $.

\begin{definition}
\label{Def5.7}We shall need the Hermite polynomials $\left(  H_{n}\right)
_{n\in\mathbb{N}_{0}}$ given by $H_{0}\equiv1$, $H_{1}\left(  x\right)  =1-x$,
and
\begin{equation}
\frac{d}{dx}e^{-\frac{x^{2}}{2}}=H_{n}\left(  x\right)  e^{-\frac{x^{2}}{2}};
\label{Eq5.24}%
\end{equation}
so
\[
H_{n+1}\left(  x\right)  =\frac{d}{dx}H_{n}\left(  x\right)  -xH_{n}\left(
x\right)  \text{.}%
\]

\end{definition}

\begin{lemma}
\label{Lem5.8}Let $\mathcal{H}$, $\mathcal{S}$, $\mathcal{S}^{\prime}$ be as
in \emph{(}\ref{Eq5.20}\emph{)}, let $f\in\mathcal{H}_{\mathbb{R}}$ be
given\emph{;} and set
\begin{equation}
d\mu\left(  \cdot\right)  \text{\emph{:}}=\tilde{f}\left(  \cdot\right)
dW\left(  \cdot\right)  \text{.} \label{Eq5.25}%
\end{equation}
Then for the Fourier transform, we have
\begin{equation}
\widehat{d\mu}\left(  u\right)  =i\tilde{f}\left(  u\right)  e^{-\frac
{\left\Vert u\right\Vert ^{2}}{2}}=i\left\langle f,u\right\rangle
_{\mathcal{H}}e^{-\frac{\left\Vert u\right\Vert _{\mathcal{H}}^{2}}{2}}\text{
for all }u\in\mathcal{S}\text{.} \label{Eq5.26}%
\end{equation}

\end{lemma}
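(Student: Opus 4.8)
The plan is to compute the Fourier transform $\widehat{d\mu}(u) = \int_{\mathcal{S}'} e^{i\tilde u(\xi)} \tilde f(\xi)\, dW(\xi)$ directly by differentiating the characteristic functional of $W$ with respect to a real parameter. Concretely, introduce the auxiliary one-parameter family $\varphi(t) := \int_{\mathcal{S}'} e^{i\widetilde{(u+tf)}(\xi)}\, dW(\xi)$ for $t \in \mathbb{R}$. By the defining identity (\ref{Eq5.21}) for the Wiener measure $W$, since $u+tf \in \mathcal{H}_{\mathbb{R}}$, we have $\varphi(t) = e^{-\frac{1}{2}\|u+tf\|_{\mathcal{H}}^{2}}$. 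Expanding the Hilbert-space norm, $\|u+tf\|^{2} = \|u\|^{2} + 2t\operatorname{Re}\langle f,u\rangle_{\mathcal{H}} + t^{2}\|f\|^{2}$, and here everything is real so $\operatorname{Re}\langle f,u\rangle = \langle f,u\rangle_{\mathcal{H}_{\mathbb{R}}}$.

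Next I would differentiate $\varphi$ at $t=0$ in two ways. On one hand, $\varphi'(0) = \frac{d}{dt}\big|_{t=0} e^{-\frac{1}{2}\|u+tf\|^{2}} = -\langle f,u\rangle_{\mathcal{H}}\, e^{-\frac{1}{2}\|u\|^{2}}$. On the other hand, differentiating under the integral sign — which is the step requiring justification — gives $\varphi'(0) = \int_{\mathcal{S}'} \frac{d}{dt}\big|_{t=0} e^{i\widetilde{(u+tf)}(\xi)}\, dW(\xi) = \int_{\mathcal{S}'} i\tilde f(\xi)\, e^{i\tilde u(\xi)}\, dW(\xi) = i\,\widehat{d\mu}(u)$, using that $\widetilde{(u+tf)} = \tilde u + t\tilde f$ pointwise on $\mathcal{S}'$ by linearity of $u \mapsto \tilde u$. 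Equating the two expressions for $\varphi'(0)$ yields $i\,\widehat{d\mu}(u) = -\langle f,u\rangle_{\mathcal{H}}\, e^{-\frac{1}{2}\|u\|^{2}}$, hence $\widehat{d\mu}(u) = i\langle f,u\rangle_{\mathcal{H}}\, e^{-\frac{1}{2}\|u\|_{\mathcal{H}}^{2}} = i\tilde f(u)\, e^{-\frac{1}{2}\|u\|^{2}}$, which is exactly (\ref{Eq5.26}).

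The main obstacle is rigorously justifying the interchange of $\frac{d}{dt}$ and $\int_{\mathcal{S}'}(\cdot)\, dW$. The integrand $e^{i\widetilde{(u+tf)}(\xi)}$ has modulus $1$, and its $t$-derivative is $i\tilde f(\xi)e^{i\widetilde{(u+tf)}(\xi)}$, dominated in modulus by $|\tilde f(\xi)|$; since $\tilde f \in L^{2}(W)$ by (\ref{5.22}) (indeed $\|\tilde f\|_{L^2(W)}^2 = \|f\|_{\mathcal{H}}^2$) and hence $\tilde f \in L^{1}(W)$, the dominated convergence theorem applies uniformly in $t$ on compact intervals, so $\varphi$ is differentiable with the stated derivative. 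One should also remark that $d\mu = \tilde f\, dW$ is a genuine (complex, or signed after taking real/imaginary parts) finite measure precisely because $\tilde f \in L^{1}(W)$, so that the Fourier transform in (\ref{Eq5.23}) is well defined. A brief word noting that the same computation gives the higher moments via repeated differentiation (bringing down powers of $i\tilde f$, which connects to the Hermite polynomials of Definition \ref{Def5.7}) would tie this lemma to what follows, but is not needed for the statement itself.
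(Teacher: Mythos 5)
Your proposal is correct and follows essentially the same route as the paper: the paper likewise applies $\left.\frac{d}{d\varepsilon}\right|_{\varepsilon=0}$ to both sides of the identity $\int_{\mathcal{S}'}e^{i\langle u+\varepsilon f,\cdot\rangle}\,dW = e^{-\frac{1}{2}\|u+\varepsilon f\|^{2}}$ and divides by $i$. Your added justification of the interchange of derivative and integral (domination of the difference quotients by $|\tilde f|\in L^{1}(W)$) is a detail the paper leaves implicit, and is welcome.
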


\begin{proof}
Let $f\in\mathcal{H}_{\mathbb{R}},$ $u\in\mathcal{S}$, and $\varepsilon
\in\mathbb{R}_{+}$. By (\ref{Eq5.21}), we then have
\begin{equation}
\int\limits_{\mathcal{S}^{\prime}}e^{i\left\langle u+\varepsilon
f,\cdot\right\rangle }dW\left(  \cdot\right)  =e^{-\frac{1}{2}\left\Vert
u+\varepsilon f\right\Vert _{\mathcal{H}}^{2}}\text{.} \label{Eq5.27}%
\end{equation}
An application of $\left.  \frac{d}{d\varepsilon}\right\vert _{\varepsilon=0}$
to both sides in (\ref{Eq5.27}) then yields
\[
\int\limits_{\mathcal{S}^{\prime}}i\tilde{f}\left(  \cdot\right)
e^{i\left\langle u,\cdot\right\rangle }dW\left(  \cdot\right)  =-\left\langle
u,f\right\rangle e^{-\frac{\left\Vert u\right\Vert ^{2}}{2}}\text{.}%
\]

By virtue of (\ref{Eq5.23}) and (\ref{Eq5.24}), this formula is equivalent to
(\ref{Eq5.26}); i.e., the conclusion in the lemma.
\end{proof}

\begin{proposition}
\label{Prop5.9}Let $\left(  \mathcal{H},X,o\right)  $ be a relative
reproducing kernel Hilbert space. For $x,y\in X$, $x\not =y$, let $w_{x,y}%
\in\mathcal{H}$ be the solution to
\begin{equation}
\left\langle w_{x,y},u\right\rangle =u\left(  x\right)  -u\left(  y\right)
,~\forall u\in\mathcal{H}\text{.} \label{Eq5.28}%
\end{equation}

Then
\begin{equation}
\widehat{\left(  w_{x,y}dW\right)  }\left(  u\right)  =i\left(  u\left(
x\right)  -u\left(  y\right)  \right)  e^{-\frac{\left\Vert u\right\Vert ^{2}%
}{2}},~\forall u\in\mathcal{H}_{\mathbb{R}}\text{.} \label{Eq5.29}%
\end{equation}

\end{proposition}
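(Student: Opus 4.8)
The plan is to recognize Proposition \ref{Prop5.9} as an immediate specialization of Lemma \ref{Lem5.8}. Indeed, $w_{x,y}$ is by definition a vector in $\mathcal{H}$ (in fact in the real form $\mathcal{H}_{\mathbb{R}}$, since the reproducing kernel $k$ may be taken real-valued on the relevant pairs, so that $w_{x,y}\in\mathcal{H}_{\mathbb{R}}$), and the signed measure $d\mu := \tilde{w}_{x,y}\,dW$ is exactly the measure appearing in \eqref{Eq5.25} with $f = w_{x,y}$. Thus Lemma \ref{Lem5.8} applies verbatim.

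First I would set $f := w_{x,y}$ in Lemma \ref{Lem5.8}; then \eqref{Eq5.26} reads
\[
\widehat{(w_{x,y}\,dW)}(u) = i\langle w_{x,y},u\rangle_{\mathcal{H}}\, e^{-\frac{\|u\|_{\mathcal{H}}^{2}}{2}},\quad\forall u\in\mathcal{S}.
\]
Next I would invoke the reproducing/difference identity \eqref{Eq5.28}, which gives $\langle w_{x,y},u\rangle = u(x)-u(y)$ for all $u\in\mathcal{H}$, and in particular for $u\in\mathcal{S}\subseteq\mathcal{H}_{\mathbb{R}}$. Substituting this into the displayed formula yields
\[
\widehat{(w_{x,y}\,dW)}(u) = i\bigl(u(x)-u(y)\bigr)\, e^{-\frac{\|u\|^{2}}{2}}
\]
for all $u\in\mathcal{S}$, which is \eqref{Eq5.29} restricted to the Fréchet space $\mathcal{S}$.

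Finally I would upgrade from $u\in\mathcal{S}$ to $u\in\mathcal{H}_{\mathbb{R}}$ by a density/continuity argument: $\mathcal{S}$ is dense in $\mathcal{H}_{\mathbb{R}}$ by the Gelfand-triple construction \eqref{Eq5.20}, both sides of \eqref{Eq5.29} are continuous in $u$ with respect to the $\mathcal{H}$-norm — the right-hand side because $u\mapsto u(x)-u(y) = \langle w_{x,y},u\rangle$ is an $\mathcal{H}$-inner product with a fixed vector and $u\mapsto e^{-\|u\|^2/2}$ is norm-continuous, and the left-hand side because $|\widehat{(w_{x,y}\,dW)}(u) - \widehat{(w_{x,y}\,dW)}(u')| \le \int_{\mathcal{S}'} |\tilde{w}_{x,y}|\,|e^{i\tilde{u}}-e^{i\tilde{u}'}|\,dW$, which one bounds via \eqref{5.22} (an $L^2(W)$ estimate on $\tilde{w}_{x,y}$) together with $|e^{ia}-e^{ib}|\le|a-b|$ and $\|\tilde{u}-\tilde{u}'\|_{L^2(W)} = \|u-u'\|_{\mathcal{H}}$ — so the identity extends by continuity to all of $\mathcal{H}_{\mathbb{R}}$. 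The only mild subtlety, and the step I would be most careful about, is this last continuous-extension argument: one must confirm that $\tilde{w}_{x,y}\in L^2(\mathcal{S}',W)$ (immediate from \eqref{5.22} with $u_1=u_2=w_{x,y}$) so that the Fourier transform $\widehat{(w_{x,y}\,dW)}$ is genuinely defined and norm-continuous on all of $\mathcal{H}_{\mathbb{R}}$, not merely on $\mathcal{S}$.
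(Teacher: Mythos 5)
Your proof is correct and follows the same route as the paper: apply Lemma \ref{Lem5.8} with $f=w_{x,y}$ and then substitute the dipole identity \eqref{Eq5.28} into \eqref{Eq5.26}. The only difference is that you also supply the density/continuity argument extending the formula from $u\in\mathcal{S}$ to all of $\mathcal{H}_{\mathbb{R}}$, a step the paper passes over silently; that addition is sound and, if anything, makes the argument more complete.
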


\begin{proof}
We have
\[
\widehat{\left(  w_{x,y}dW\right)  }\left(  u\right)  =_{\left(  \text{by
}\left(  \ref{Eq5.26}\right)  \right)  }i\left\langle w_{x,y},u\right\rangle
e^{-\frac{\left\Vert u\right\Vert ^{2}}{2}}=_{\left(  \text{by }\left(
\ref{Eq5.28}\right)  \right)  }i\left(  u\left(  x\right)  -u\left(  y\right)
\right)  e^{-\frac{\left\Vert u\right\Vert ^{2}}{2}}%
\]
which is the desired formula (\ref{Eq5.29}).
\end{proof}

\begin{theorem}
\label{Theo5.10}Let $\mathcal{H}$, $f$, $u$, and $W$ be as described above.
Then
\begin{equation}
\widehat{\tilde{f}\left(  \cdot\right)  ^{n}dW}\left(  \cdot\right)  \left(
u\right)  =H_{n}\left(  f,u\right)  e^{-\frac{1}{2}\left\Vert u\right\Vert
^{2}} \label{Eq5.30}%
\end{equation}
where $H_{n},n\in\mathbb{N}_{0}$, are the Hermite polynomials in
\emph{(}\ref{Eq5.24}\emph{);}
\begin{align*}
H_{1}\left(  f,u\right)   &  \emph{:}=i\left\langle f,u\right\rangle ,~\\
H_{2}\left(  f,u\right)   &  \emph{:}=\left\Vert f\right\Vert ^{2}%
-\left\langle u,f\right\rangle ^{2},
\end{align*}
etc.
\end{theorem}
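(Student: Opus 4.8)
\textbf{Proof plan for Theorem \ref{Theo5.10}.}

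The plan is to prove \eqref{Eq5.30} by induction on $n$, using the differentiation trick from the proof of Lemma \ref{Lem5.8} applied repeatedly. The base cases $n=0$ and $n=1$ are already established: $n=0$ is just \eqref{Eq5.21}, and $n=1$ is exactly the content of Lemma \ref{Lem5.8}, with $H_1(f,u)=i\langle f,u\rangle$. For the inductive step, I would start from the generating identity
\begin{equation}
\int\limits_{\mathcal{S}^{\prime}}e^{i\langle u+\varepsilon f,\cdot\rangle}\,dW(\cdot)=e^{-\frac{1}{2}\|u+\varepsilon f\|^{2}}=e^{-\frac{1}{2}\|u\|^{2}}\,e^{-\varepsilon\langle u,f\rangle-\frac{\varepsilon^{2}}{2}\|f\|^{2}} \label{EqGen}
\end{equation}
and apply $\left.\frac{d^{n}}{d\varepsilon^{n}}\right|_{\varepsilon=0}$ to both sides. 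On the left the derivative passes under the integral sign (justified by the moment bounds implicit in $f\in\mathcal{H}_{\mathbb{R}}$ together with Gaussian integrability of $\tilde f$ under $W$, which one gets from \eqref{Eq5.21}), producing $i^{n}\int_{\mathcal{S}^{\prime}}\tilde f(\cdot)^{n}e^{i\langle u,\cdot\rangle}\,dW(\cdot)=i^{n}\,\widehat{\tilde f(\cdot)^{n}dW}(u)$.

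On the right-hand side of \eqref{EqGen}, the key observation is that $e^{-\varepsilon\langle u,f\rangle-\frac{\varepsilon^{2}}{2}\|f\|^{2}}$ is, up to the constant $e^{-\frac12\|u\|^2}$, precisely the Hermite generating function evaluated at the right arguments. Concretely, with the normalization \eqref{Eq5.24} (so that $\frac{d}{dx}e^{-x^2/2}=H_n(x)e^{-x^2/2}$ means the $H_n$ are the ``probabilists'' Hermite polynomials up to sign conventions), one has the formal identity $e^{tx-t^2\sigma^2/2}=\sum_{n\ge0}\frac{t^n}{n!}\,(\text{$n$-th Hermite-type polynomial in }x,\sigma)$; matching $t\leftrightarrow\varepsilon$, $x\leftrightarrow -\langle u,f\rangle$ or better $i\langle f,u\rangle$, and $\sigma^2\leftrightarrow\|f\|^2$, the coefficient of $\varepsilon^n/n!$ is exactly $i^n H_n(f,u)$ in the notation of the theorem, where $H_n(f,u)$ is the two-variable polynomial obtained from $H_n$ by the substitution dictated by \eqref{Eq5.24}. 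Thus $\left.\frac{d^{n}}{d\varepsilon^{n}}\right|_{\varepsilon=0}$ of the RHS equals $i^{n}H_{n}(f,u)\,e^{-\frac12\|u\|^{2}}$. Cancelling $i^{n}$ from both sides yields \eqref{Eq5.30}, and the listed special values $H_1(f,u)=i\langle f,u\rangle$, $H_2(f,u)=\|f\|^2-\langle u,f\rangle^2$ drop out by computing the first two $\varepsilon$-derivatives directly (as in Lemma \ref{Lem5.8} for $n=1$, and one further differentiation for $n=2$).

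The main obstacle is purely bookkeeping rather than conceptual: keeping the normalization of the Hermite polynomials in \eqref{Eq5.24} consistent with the two-variable polynomials $H_n(f,u)$ appearing on the RHS, and making sure the sign/factor-of-$i$ conventions line up so that the stated values $H_1(f,u)=i\langle f,u\rangle$ and $H_2(f,u)=\|f\|^2-\langle u,f\rangle^2$ are reproduced exactly. I would handle this by first verifying the recursion: differentiating \eqref{Eq5.30} in $\varepsilon$ once more (i.e. replacing $u$ by $u+\varepsilon f$ and extracting one more order) yields a recursion for $H_{n+1}(f,u)$ in terms of $H_n(f,u)$ that must match the polynomial recursion $H_{n+1}(x)=\frac{d}{dx}H_n(x)-xH_n(x)$ from Definition \ref{Def5.7} under the substitution $x\mapsto \langle\cdot,\cdot\rangle$-type arguments; once that recursion is checked against the two base cases, the induction closes. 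The only analytic point requiring a word of justification is the interchange of $\frac{d^n}{d\varepsilon^n}$ with the integral over $\mathcal{S}^{\prime}$, which follows from dominated convergence using that all moments $\int|\tilde f|^{k}\,dW$ are finite — a consequence of \eqref{Eq5.21}, since $\tilde f$ is a centered Gaussian variable with variance $\|f\|_{\mathcal{H}}^{2}$.
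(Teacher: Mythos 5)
Your proposal is correct and follows essentially the same route as the paper: the paper's own (rather terse) proof also applies $\left.\left(\frac{d}{d\varepsilon}\right)^{n}\right\vert_{\varepsilon=0}$ to both sides of (\ref{Eq5.27}), reads off $i^{n}\widehat{\tilde{f}^{n}dW}(u)$ on the left, and matches the right-hand side against the Hermite recursion of Definition \ref{Def5.7}. Your version merely supplies the details the paper leaves to the reader, namely the Hermite generating-function bookkeeping and the dominated-convergence justification for differentiating under the $W$-integral.
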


\begin{proof}
The reader may check that the theorem follows from Lemma \ref{Lem5.8},
combined with the recursive Hermite formulas (\ref{Eq5.24}) in Definition
\ref{Def5.7}. Indeed we must apply $\left.  \left(  \frac{d}{d\varepsilon
}\right)  ^{n}\right\vert _{\varepsilon=0}$ recursively to the RHS in
(\ref{Eq5.27}):
\[
\left.  \frac{d}{d\varepsilon}\right\vert _{\varepsilon=0}e^{-\frac{1}%
{2}\left\Vert u+\varepsilon f\right\Vert ^{2}}=-\left\langle u,f\right\rangle
e^{-\frac{\left\Vert u\right\Vert ^{2}}{2}},
\]
and
\[
\left.  \left(  \frac{d}{d\varepsilon}\right)  ^{2}\right\vert _{\varepsilon
=0}e^{-\frac{1}{2}\left\Vert u+\varepsilon f\right\Vert ^{2}}=\left(
\left\langle u,f\right\rangle ^{2}-\left\Vert f\right\Vert ^{2}\right)
e^{-\frac{1}{2}\left\Vert u\right\Vert ^{2}}\text{.}%
\]

\end{proof}

\begin{definition}
\label{Def5.11}Let $\left(  \mathcal{H},X,o\right)  $ be a relative
reproducing kernel Hilbert space satisfying condition \emph{(}\ref{Eq5.1}%
\emph{)}. \emph{(}It follows then that $X$ is discrete!\emph{)} Let
$\mathcal{S}^{\prime}$ be the \emph{real} space in the Gelfand triple
\emph{(}\ref{Eq5.20}\emph{)}, and let $W$ be the corresponding Wiener measure
determined by \emph{(}\ref{Eq5.21}\emph{)}. By a \emph{boundary point} for
$\left(  \mathcal{H},X,o\right)  $ we mean a measure $\beta$ on $\mathcal{S}%
^{\prime}$ such that there is a sequence $x_{1},x_{2},\cdots$ in $X$
satisfying
\begin{equation}
\lim\limits_{n\rightarrow\infty}\left(  u\left(  x_{n}\right)  -u\left(
o\right)  \right)  =\int\limits_{\mathcal{S}^{\prime}}\tilde{u}~d\beta\text{
for all }u\in\mathcal{H}. \label{Eq5.31}%
\end{equation}

\end{definition}

The following result is different from the classical Rieman-Lebesgue theorem,
but it is inspired by it. First, for $x\in X$ set $v_{x}$:$=w_{x,o}=$ the
dipole for the pair of points $x,o$ in $X$.

\begin{corollary}
\label{Cor5.12}Let $\left(  \mathcal{H},X,o\right)  $ and $\left\{
v_{x}\right\}  _{x\in X^{\ast}}$ be a reproducing kernel system subject to the
conditions listed above\emph{;} and let $\beta$ be a boundary point.

Then there is a sequence $x_{1},x_{2},\cdots$ in $X$ such that
\begin{equation}
\lim\limits_{n\rightarrow\infty}\left(  v_{x_{n}}dW\right)  \widehat
{\;}\left(  u\right)  =i\int\limits_{\mathcal{S}^{\prime}}\tilde{u}%
~d\beta\cdot e^{-\frac{\left\Vert u\right\Vert ^{2}}{2}}\text{.}
\label{Eq5.32}%
\end{equation}

\end{corollary}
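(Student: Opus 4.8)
\textbf{Proof proposal for Corollary \ref{Cor5.12}.}
The plan is to read the asserted limit (\ref{Eq5.32}) as the composition of two facts already in place: the Fourier-transform identity of Proposition \ref{Prop5.9} (specialized to $y=o$, so $w_{x,o}=v_x$), and the defining property of a boundary point from Definition \ref{Def5.11}. First I would invoke Definition \ref{Def5.11}: since $\beta$ is a boundary point, there is by hypothesis a sequence $x_1,x_2,\dots$ in $X$ with
\[
\lim_{n\to\infty}\bigl(u(x_n)-u(o)\bigr)=\int_{\mathcal{S}'}\tilde u\,d\beta
\]
for every $u\in\mathcal{H}$; this is exactly the sequence that will appear in (\ref{Eq5.32}).

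Next I would apply Proposition \ref{Prop5.9} with $x=x_n$ and $y=o$. Since $v_{x_n}=w_{x_n,o}$ and (\ref{Eq5.28}) gives $\langle v_{x_n},u\rangle = u(x_n)-u(o)$, the proposition yields, for each fixed $n$ and every $u\in\mathcal{H}_{\mathbb{R}}$,
\[
\widehat{\bigl(v_{x_n}dW\bigr)}(u)=i\bigl(u(x_n)-u(o)\bigr)\,e^{-\frac{\left\Vert u\right\Vert^2}{2}}.
\]
Now I would fix $u\in\mathcal{H}_{\mathbb{R}}$ and pass to the limit $n\to\infty$ on the right-hand side. The factor $e^{-\left\Vert u\right\Vert^2/2}$ does not depend on $n$, so the limit is controlled entirely by $u(x_n)-u(o)$, which converges to $\int_{\mathcal{S}'}\tilde u\,d\beta$ by the boundary-point property. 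Therefore
\[
\lim_{n\to\infty}\widehat{\bigl(v_{x_n}dW\bigr)}(u)=i\Bigl(\int_{\mathcal{S}'}\tilde u\,d\beta\Bigr)e^{-\frac{\left\Vert u\right\Vert^2}{2}},
\]
which is precisely (\ref{Eq5.32}).

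The argument is essentially a two-line chaining of prior results, so there is no single hard computational step; the only point that deserves care is the logical order of quantifiers. The sequence $(x_n)$ is produced once and for all by Definition \ref{Def5.11}, and then the convergence in (\ref{Eq5.32}) is asserted pointwise in $u$ — it is this same sequence that works simultaneously for all $u$, which is guaranteed because the boundary-point condition (\ref{Eq5.31}) itself already asserts convergence for all $u\in\mathcal{H}$ with one fixed sequence. I would also remark (parenthetically, as the corollary's preamble already hints) that no Riemann--Lebesgue-type decay in $u$ is being claimed here; the statement is simply the transport of the scalar limit through the fixed Gaussian weight $e^{-\left\Vert u\right\Vert^2/2}$ and the continuous linear functional $w\mapsto\widehat{(w\,dW)}(u)$ furnished by Lemma \ref{Lem5.8} and Proposition \ref{Prop5.9}. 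If one wished, the interchange of limit and Fourier transform could instead be phrased via dominated convergence on $\mathcal{S}'$ using $|\tilde{v}_{x_n}(\xi)e^{i\tilde u(\xi)}|=|\tilde v_{x_n}(\xi)|$ together with an $L^2(W)$-bound $\|\tilde v_{x_n}\|_{L^2(W)}=\|v_{x_n}\|_{\mathcal{H}}$, but the direct route above already suffices.
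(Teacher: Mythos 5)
Your argument coincides with the paper's own proof: pick the sequence $(x_n)$ furnished by Definition \ref{Def5.11}, substitute the limit (\ref{Eq5.31}) into the identity (\ref{Eq5.29}) of Proposition \ref{Prop5.9} specialized to $y=o$ (so $w_{x_n,o}=v_{x_n}$), and conclude. The extra remarks on quantifier order and on the fixed Gaussian factor are sound but not needed beyond what the paper records.
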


\begin{proof}
Let $\beta$ be a boundary point as indicated. Pick $\left(  x_{n}\right)
_{n\in\mathbb{N}}\subset X$ such that (\ref{Eq5.31}) holds. When this is
substituted into (\ref{Eq5.29}) from Proposition \ref{Prop5.9}, the desired
conclusion (\ref{Eq5.32}) follows.
\end{proof}

\section{Computing deficiency-spaces for a pair of Hermitian operators in
duality\label{ComputingSpaces}}

The setting will be as in the two previous sections. We are given a
reproducing kernel Hilbert space $\left(  \mathcal{H},X,o\right)  $ with
base-point, and we introduce the three associated hermitian operators
$\Delta_{F}$ in $\mathcal{F}$; and $\Delta_{V}$ (with $\Delta_{V}%
\subseteq\Delta_{H}$) and $\Delta_{H}$ densely defined operators in
$\mathcal{H}$.

We saw in Theorem \ref{Theo5.2} that it is frequently easier to compute
deficiency spaces for $\Delta_{F}$ than it is for the other two operators in
the larger ambient Hilbert space $\mathcal{H}$. But in all cases $\mathcal{H}$
may be somewhat intractable because it is determined by a fixed reproducing
kernel $k\left(  \cdot,\cdot\right)  $, and the spanning functions
$v_{x}\left(  \cdot\right)  $:$=k\left(  \cdot,x\right)  $ are far from
forming an orthogonal system in $\mathcal{H}$; in fact in my examples, turning
$\left\{  v_{x}|x\in X^{\ast}\right\}  $ into a \textit{frame} still leaves
with poor frame-bound estimates.

As before, here we set $X^{\ast}$:$=X\diagdown\left\{  o\right\}  $.

We will now examine the fundamental property,
\begin{equation}
\delta_{x}\in\mathcal{H}\text{, }\forall x\in X\text{.} \label{Eq6.1}%
\end{equation}

Our aim is to represent $\delta_{x}$ as an expansion in $\left\{  v_{y}|y\in
X^{\ast}\right\}  $.

To avoid difficulties with \textquotedblleft bad\textquotedblright%
\ frame-bounds, we add the following restricting assumption,

For all $x$, we have
\begin{equation}
\#\left\{  y\in X|\left\langle \delta_{x},\delta_{y}\right\rangle
\not =0\right\}  <\infty\text{.} \label{Eq6.2}%
\end{equation}
We will see in the next section that (\ref{Eq6.2}) corresponds to a
finite-degree restriction on a graph build from the system $\left(
\mathcal{H},X,o\right)  $.

\begin{proposition}
\label{Prop6.1}Let $\left(  \mathcal{H},X,o\right)  $ and $\left\{  v_{x}|x\in
X^{\ast}\right\}  $ be as specified above\emph{;} see also section
\ref{EssentProb} for additional details.

Then
\begin{equation}
\delta_{x}=\left\Vert \delta_{x}\right\Vert _{\mathcal{H}}^{2}v_{x}+\sum_{y\in
X\diagdown\left\{  o,x\right\}  }\left\langle \delta_{x},\delta_{y}%
\right\rangle v_{y}\text{.} \label{Eq6.3}%
\end{equation}

\end{proposition}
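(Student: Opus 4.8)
The plan is to verify the identity \eqref{Eq6.3} by testing both sides against the spanning family $\{v_y : y \in X^\ast\}$, and then invoking \eqref{Eq4.7} to conclude. Since $\mathcal{H}$ is the closed span of the dipoles $v_y$, two vectors in $\mathcal{H}$ agree if and only if their inner products against every $v_y$ agree; so it suffices to compute $\langle v_y, \delta_x\rangle$ for arbitrary $y \in X^\ast$, and then check that the same value is produced by $\langle v_y, \cdot\rangle$ applied to the right-hand side of \eqref{Eq6.3}.

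\textbf{Key steps.} First I would use the reproducing property \eqref{Eq4.6} in the form $\langle v_y, \delta_x\rangle = \delta_x(y) - \delta_x(o)$. By \eqref{Eq4.9} this equals $\delta_{x,y} - \delta_{x,o}$; since $y \in X^\ast = X \setminus \{o\}$, the second term vanishes, so $\langle v_y, \delta_x\rangle = \delta_{x,y}$ for all $y \in X^\ast$. Next I would pair $v_y$ against the right-hand side of \eqref{Eq6.3}. This produces
\[
\|\delta_x\|^2_{\mathcal{H}}\,\langle v_y, v_x\rangle + \sum_{z \in X\setminus\{o,x\}} \langle \delta_z, \delta_x\rangle\,\langle v_y, v_z\rangle,
\]
using that the coefficients $\langle \delta_x,\delta_y\rangle$ in \eqref{Eq6.3} are real (they are inner products of real Dirac functions, or one may simply track conjugates carefully). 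The inner products $\langle v_y, v_z\rangle$ can be expressed via \eqref{Eq4.6} once one expands one of the $v$'s in a convenient way; more directly, I would compute $\langle v_y, \delta_z\rangle = \delta_{y,z}$ (for $z \neq o$) and $\langle v_y, \delta_o\rangle = -\delta_{y,o} = 0$, and then use \eqref{Eq6.3} itself as a consistency relation — but to avoid circularity the cleanest route is: expand $\delta_x = \sum_{z \in X^\ast}\langle \delta_x,\delta_z\rangle\,(\text{something})$? No — instead I would directly verify that the right-hand side $R$ of \eqref{Eq6.3} satisfies $\langle v_y, R\rangle = \delta_{x,y}$ for all $y \in X^\ast$ by substituting the known values $\langle v_y, v_z\rangle$. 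The needed fact is that $\langle v_y, v_z\rangle$, summed against the Dirac coefficients, telescopes to $\delta_{x,y}$; this is exactly the content of the assertion that $\delta_x$, as an element of $\mathcal{H}$, is reproduced by the dipole family with these coefficients.

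\textbf{Main obstacle.} The delicate point is convergence: when \eqref{Eq6.2} fails, the sum in \eqref{Eq6.3} is infinite, and one must justify that $\sum_{y} \langle \delta_x, \delta_y\rangle\, v_y$ converges in the $\mathcal{H}$-norm and that the inner product $\langle v_z, \cdot\rangle$ commutes with the sum. Under the standing finiteness assumption \eqref{Eq6.2} the sum is finite and this issue disappears entirely — which is precisely why \eqref{Eq6.2} was imposed. So the proof reduces to: (1) observe the sum in \eqref{Eq6.3} has finitely many nonzero terms by \eqref{Eq6.2}, since $\langle \delta_x, \delta_y\rangle = 0$ for all but finitely many $y$; (2) test both sides against each $v_z$, $z \in X^\ast$, using \eqref{Eq4.6}; (3) verify the resulting scalar identity, which amounts to $\|\delta_x\|^2\langle v_z,v_x\rangle + \sum_{y \neq o,x}\langle\delta_x,\delta_y\rangle\langle v_z,v_y\rangle = \delta_{x,z}$; and (4) conclude by density \eqref{Eq4.7}. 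The scalar identity in step (3) is itself most transparently checked by writing $\langle \delta_x, \delta_y \rangle = \langle \delta_x, \delta_y\rangle$ and recognizing that the left side equals $\langle \delta_x, \delta_x\rangle\delta_{x,z}$-type expansion; I would expect step (3) to require only routine bookkeeping with the Kronecker deltas coming from \eqref{Eq4.6} and \eqref{Eq4.9}, with no genuine difficulty once the finite-support reduction is in place.
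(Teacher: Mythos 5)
There is a genuine gap at the heart of your step (3). Your strategy is to test both sides of \eqref{Eq6.3} against the family $\{v_z\}_{z\in X^\ast}$ and reduce everything to Kronecker deltas. The left-hand side does behave well: $\langle v_z,\delta_x\rangle=\delta_x(z)-\delta_x(o)=\delta_{x,z}$ by \eqref{Eq4.6}. But the right-hand side produces the quantities $\langle v_z,v_y\rangle$, and these are \emph{not} Kronecker deltas: $\langle v_z,v_y\rangle=v_y(z)-v_y(o)=k(z,y)-k(o,y)$ is the reproducing kernel itself (in Example \ref{Ex2.2} it equals $\left\vert z\right\vert \wedge\left\vert y\right\vert$, by \eqref{Eq2.8}). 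The scalar identity you defer to ``routine bookkeeping,''
\[
\left\Vert \delta_x\right\Vert^2\left\langle v_z,v_x\right\rangle+\sum_{y\neq o,x}\left\langle \delta_x,\delta_y\right\rangle\left\langle v_z,v_y\right\rangle=\delta_{x,z},
\]
asserts that the Gram matrix of the $\delta$'s is a one-sided inverse of the Gram matrix of the $v$'s; it is equivalent to the proposition and you have given no independent way to establish it. You sensed this yourself when you flirted with ``using \eqref{Eq6.3} as a consistency relation'' and then backed off without resolving the circularity.

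The fix is to test against the \emph{other} family. Pairing with $\delta_z$ gives $\langle\delta_z,v_y\rangle=\delta_{z,y}$ for $z,y\in X^\ast$, which \emph{is} a Kronecker delta; this is exactly Lemma \ref{Lem6.2} of the paper: if $u=\sum_y\xi_y v_y$ is a finite combination, then $\xi_y=\langle\delta_y,u\rangle$. The paper's proof then applies this with $u=\delta_x$, reading off the coefficients as $\langle\delta_y,\delta_x\rangle$ and obtaining \eqref{Eq6.3}. Note, however, that testing against the $\delta_z$ alone only determines a vector modulo $\mathcal{C}=\mathcal{H}\ominus\mathcal{F}$, which may contain nonzero (finite-energy harmonic) elements; so one genuinely needs the prior fact that $\delta_x$ admits a finite expansion in the $v_y$'s at all --- this is where \eqref{Eq6.2} (finitely many $y$ with $\langle\delta_x,\delta_y\rangle\neq0$) and the explicit formula \eqref{Eq4.15} enter. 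Your density appeal to \eqref{Eq4.7} is the right kind of closing move, but it must be paired with the $\delta$-test and the existence of the finite expansion, not with the $v$-test.
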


\begin{proof}
Note that with assumption (\ref{Eq6.2}), we have ruled out infinite summations
occurring on the R.H.S. in (\ref{Eq6.3}). However, it is possible to relax
condition (\ref{Eq6.2}), and this will be taken up in a subsequent paper.
\end{proof}

We will need the following:

\begin{lemma}
\label{Lem6.2}If some $u\in\mathcal{H}$ has a finite representation
\begin{equation}
u=\sum_{x\in X^{\ast}}\xi_{x}v_{x}, \label{Eq6.4}%
\end{equation}
finite summation, and with $\xi_{x}\in\mathbb{C}$\emph{;} then
\begin{equation}
\xi_{x}=\left(  \Delta u\right)  \left(  x\right)  \left(  \text{:}%
=\left\langle \delta_{x},u\right\rangle _{\mathcal{H}}\right)  \text{.}
\label{Eq6.5}%
\end{equation}

\end{lemma}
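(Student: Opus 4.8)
The plan is to compute the inner product $\langle \delta_x, u\rangle_{\mathcal H}$ directly from the hypothesis \eqref{Eq6.4}, using the defining property \eqref{Eq4.6} of the dipoles $v_y$. First I would write, by linearity of the inner product in the second variable,
\[
\langle \delta_x, u\rangle_{\mathcal H} = \sum_{y\in X^\ast} \xi_y\, \langle \delta_x, v_y\rangle_{\mathcal H},
\]
the sum being finite by assumption. The key identity is that $\langle \delta_x, v_y\rangle_{\mathcal H} = \delta_x(y) - \delta_x(o)$ by \eqref{Eq4.6} applied to $f = \delta_x$ (note $\delta_x \in \mathcal H$ by \eqref{Eq4.8}/\eqref{Eq6.1}); and since $y \in X^\ast = X\setminus\{o\}$, we have $\delta_x(o) = 0$ unless $x = o$ — and here $x$ ranges over $X^\ast$ as well in the representation, so we may take $x \in X^\ast$ and get $\delta_x(o)=0$. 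Hence $\langle \delta_x, v_y\rangle_{\mathcal H} = \delta_x(y)$, which equals $1$ if $y = x$ and $0$ otherwise.

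Substituting this back, the sum collapses to the single surviving term $\xi_x$, giving $\langle \delta_x, u\rangle_{\mathcal H} = \xi_x$, which is exactly \eqref{Eq6.5} in view of the definition \eqref{Eq4.19} (equivalently \eqref{Eq5.2}) of $\Delta$. I would also remark that this is essentially the same computation already used in the proof of Theorem \ref{Theo4.7} (in the verification of \eqref{Eq4.22}) and recalled in Step 1 of the proof of Corollary \ref{Cor4.8}, so the lemma merely isolates that fact for later reference.

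I do not anticipate a genuine obstacle here; the statement is a one-line orthogonality-type computation. The only mild subtlety worth flagging is the role of the base-point $o$: one must observe that the index $x$ in \eqref{Eq6.5} is implicitly in $X^\ast$ (since $\delta_o$ is not among the dipole-expandable vectors in the natural setting, and in any case $v_o$ is not defined), so that the term $\delta_x(o)$ vanishes and the pairing $\langle \delta_x, v_y\rangle$ reduces cleanly to $\delta_{x,y}$. Apart from that bookkeeping, the proof is immediate from \eqref{Eq4.6} and finiteness of the sum.
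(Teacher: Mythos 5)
Your proposal is correct and is essentially identical to the paper's own proof: both expand $\langle \delta_x,u\rangle$ by linearity over the finite sum and apply the reproducing identity (\ref{Eq4.6}) with $f=\delta_x$ to collapse the sum to $\xi_x$, using $\delta_x(o)=0$ for $x\in X^{\ast}$. Your remark about the base-point is a correct (and slightly more explicit) version of a step the paper leaves implicit.
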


\begin{proof}
Let $y\in X^{\ast}$, and compute
\begin{align*}
\left\langle \delta_{y},u\right\rangle _{H}  &  =_{(\text{by \ref{Eq6.4}}%
)}\sum_{x\in X^{\ast}}\xi_{x}\left\langle \delta_{y},v_{x}\right\rangle
_{\mathcal{H}}\\
&  =\sum_{x\in X^{\ast}}\xi_{x}\left(  \delta_{y}\left(  x\right)  -\delta
_{y}\left(  0\right)  \right) \\
&  =\xi_{y}\text{;}%
\end{align*}
and therefore
\[
\xi_{y}=\left\langle \delta_{y},u\right\rangle _{\mathcal{H}}=\left(  \Delta
u\right)  \left(  y\right)
\]
which is the desired conclusion.
\end{proof}

\bigskip

\begin{proof}
[Proof of Proposition 6.1 resumed]With the lemma, we now compute the L.H.S. in
(\ref{Eq6.3}):
\begin{align*}
\delta_{x}  &  =_{\left(  \text{by the lemma}\right)  }\sum_{y}\left(
\Delta\delta_{x}\right)  \left(  y\right)  v_{y}\\
&  =\left(  \Delta\delta_{x}\right)  \left(  x\right)  v_{x}+\sum_{y\not =%
x}\left(  \Delta\delta_{x}\right)  \left(  y\right)  v_{y}\\
&  =_{\left(  \text{by}\left(  \ref{Eq6.5}\right)  \right)  }\left\langle
\delta_{x},\delta_{x}\right\rangle _{\mathcal{H}}v_{x}+\sum_{y\not =%
x}\left\langle \delta_{y},\delta_{x}\right\rangle _{\mathcal{H}}v_{y}%
\end{align*}
which is the desired formula (\ref{Eq6.3}).
\end{proof}

\begin{definition}
\label{Def6.3}Let $\left(  \mathcal{H},X,o\right)  $ be as above, and set
\begin{equation}
c\left(  x\right)  =\max\left(  \left\Vert \delta_{x}\right\Vert
_{\mathcal{H}}^{2},\sum_{y\not =x}\left\vert \left\langle \delta_{x}%
,\delta_{y}\right\rangle _{\mathcal{H}}\right\vert \right)  \text{.}
\label{Eq6.6}%
\end{equation}

Let $\ell^{2}\left(  X,c\right)  $ be the $\ell^{2}$-space with $c\left(
\cdot\right)  $ as weight, i.e., all $\xi$\emph{:}\thinspace$X\rightarrow
\mathbb{C}$ such that
\begin{equation}
\sum_{x}c\left(  x\right)  \left\vert \xi_{x}\right\vert ^{2}=\text{\emph{:}%
}\left\Vert \xi\right\Vert _{\ell_{\left(  c\right)  }^{2}}^{2}<\infty\text{.}
\label{Eq6.7}%
\end{equation}

\end{definition}

\begin{theorem}
\label{Theo6.4}Let $\left(  \mathcal{H},X,o\right)  $ be as above, and let the
function $c\left(  \cdot\right)  $ be defined by \emph{(}\ref{Eq6.6}\emph{)}.
Then $\ell^{2}\left(  c\right)  $ is contractively embedded in $\mathcal{H}$.
\end{theorem}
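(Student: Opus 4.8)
The plan is to show that the linear map $\iota:\ell^{2}(X,c)\to\mathcal{H}$ sending a (finitely supported, then general) sequence $\xi=(\xi_{x})_{x\in X^{\ast}}$ to $\sum_{x\in X^{\ast}}\xi_{x}v_{x}$ is well-defined and satisfies $\|\iota\xi\|_{\mathcal{H}}\le\|\xi\|_{\ell^{2}(c)}$. First I would work on finitely supported $\xi$, where there is no convergence issue, and estimate $\|\sum_{x}\xi_{x}v_{x}\|_{\mathcal{H}}^{2}$. The natural route is to expand using the Gram matrix $G_{x,y}:=\langle v_{x},v_{y}\rangle_{\mathcal{H}}$, so that $\|\iota\xi\|^{2}=\sum_{x,y}\bar\xi_{x}\xi_{y}G_{x,y}$, and then bound this quadratic form by $\sum_{x}c(x)|\xi_{x}|^{2}$. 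The cleaner approach, however, avoids the Gram matrix entirely: by Lemma~\ref{Lem6.2}, if $u=\sum_{x}\xi_{x}v_{x}$ is a finite sum then $\xi_{x}=(\Delta u)(x)=\langle\delta_{x},u\rangle_{\mathcal{H}}$. Hence
\[
\|u\|_{\mathcal{H}}^{2}=\Big\langle\sum_{x}\xi_{x}v_{x},\,u\Big\rangle=\sum_{x}\bar\xi_{x}\langle v_{x},u\rangle.
\]
Now I would use the reproducing identity (\ref{Eq4.6}), $\langle v_{x},u\rangle=u(x)-u(o)$, and recognize $u(x)-u(o)$ in terms of the already-computed coefficients: indeed from Proposition~\ref{Prop6.1} (expressing $\delta_{x}$ in the $v_{y}$) one gets $\langle\delta_{x},u\rangle=\|\delta_{x}\|^{2}\langle v_{x},u\rangle+\sum_{y\neq x}\langle\delta_{x},\delta_{y}\rangle\langle v_{y},u\rangle$, i.e. a relation dual to (\ref{Eq6.3}) pairing the $v$'s and $\delta$'s.

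The key step is then a Schur-test style estimate. Writing $a_{x}:=\langle v_{x},u\rangle=u(x)-u(o)$ and $b_{x}:=\xi_{x}=\langle\delta_{x},u\rangle$, the identity $\|u\|^{2}=\sum_{x}\bar b_{x}a_{x}$ together with the expansion $b_{x}=\|\delta_{x}\|^{2}a_{x}+\sum_{y\neq x}\langle\delta_{x},\delta_{y}\rangle a_{y}$ gives
\[
\|u\|_{\mathcal{H}}^{2}=\sum_{x,y}K_{x,y}\,\overline{a_{x}}\,a_{y},\qquad K_{x,x}=\|\delta_{x}\|^{2},\quad K_{x,y}=\langle\delta_{y},\delta_{x}\rangle\ (x\neq y),
\]
so $K$ is exactly the Gram matrix of $\{\delta_{x}\}$, which is positive semidefinite. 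I would apply the Schur test to $K$ with the constant test function: each row sum of $|K|$ is $\|\delta_{x}\|^{2}+\sum_{y\neq x}|\langle\delta_{x},\delta_{y}\rangle|\le 2c(x)$, and by symmetry the same for columns, giving $\|u\|_{\mathcal{H}}^{2}\le\sum_{x}c(x)|a_{x}|^{2}$ in a first pass. To sharpen the constant to $1$ (true contraction, not just boundedness) I would instead bound the quadratic form in $\xi$ directly: from $\|u\|^{2}=\sum_{x}\bar\xi_{x}a_{x}$ and Cauchy–Schwarz, $\|u\|^{2}\le(\sum_{x}c(x)^{-1}|\xi_{x}|^{2}\cdot c(x))^{1/2}\cdots$ — more carefully, pair $\bar\xi_{x}$ against $a_{x}$ with weights $c(x)^{1/2}$ and $c(x)^{-1/2}$, use Cauchy–Schwarz to get $\|u\|^{2}\le\|\xi\|_{\ell^{2}(c)}\cdot(\sum_{x}c(x)^{-1}|a_{x}|^{2})^{1/2}$, and then show $\sum_{x}c(x)^{-1}|a_{x}|^{2}\le\|u\|_{\mathcal{H}}^{2}$ using $|a_{x}|=|\langle v_{x},u\rangle|$ and the fact that $c(x)\ge\|\delta_{x}\|^{2}$ controls $\|v_{x}\|$ appropriately via the Gram relation. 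This yields $\|u\|_{\mathcal{H}}\le\|\xi\|_{\ell^{2}(c)}$.

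Finally I would pass from finitely supported $\xi$ to all of $\ell^{2}(X,c)$ by density: finitely supported sequences are dense in $\ell^{2}(X,c)$, the map is contractive hence uniformly continuous on that dense subspace, so it extends uniquely to a contraction $\iota:\ell^{2}(X,c)\to\mathcal{H}$; and one checks injectivity (or simply notes that ``embedding'' here only requires boundedness and, if desired, that $\iota$ has dense range, which follows from (\ref{Eq4.7}) since the $v_{x}$ already span a dense subspace). The main obstacle I anticipate is getting the constant exactly equal to $1$ rather than a harmless factor like $2$: the Schur test naturally produces $2c(x)$ from the diagonal-plus-off-diagonal split, so obtaining a genuine \emph{contraction} requires the more delicate Cauchy–Schwarz pairing above, exploiting that the off-diagonal terms $\langle\delta_{x},\delta_{y}\rangle$ enter the Gram form of the $\delta$'s with a sign structure (positive semidefiniteness of $K$) rather than in absolute value. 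A secondary technical point is justifying the interchange of summations in the infinite case, which is where assumption (\ref{Eq6.2}) — finitely many nonzero $\langle\delta_{x},\delta_{y}\rangle$ per $x$ — does the work, keeping all the row expansions finite.
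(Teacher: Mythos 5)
Your proposal goes wrong at the very first step: the choice of embedding map. You send $\xi$ to $\sum_{x}\xi_{x}v_{x}$, a combination of the dipoles, but that map is not bounded from $\ell^{2}(X,c)$ into $\mathcal{H}$, let alone contractive. Test it on Example \ref{Ex2.2} ($X=\mathbb{Z}$ with nearest-neighbour weights $1$): there $\left\Vert \delta_{x}\right\Vert ^{2}=2$, each $x$ has exactly two neighbours with $\left\langle \delta_{x},\delta_{y}\right\rangle =-1$, so $c(x)\equiv 2$ and $\ell^{2}(c)$ is just $\ell^{2}(\mathbb{Z})$ up to a constant; but $\left\Vert v_{x}\right\Vert _{\mathcal{H}}^{2}=\left\vert x\right\vert$ by (\ref{Eq2.8}), so the image of the unit vector of $\ell^{2}(c)$ supported at $x$ has norm $\sqrt{\left\vert x\right\vert /2}\rightarrow\infty$. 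The paper's embedding is a different map: it sends $\xi$ to the functional $L(\xi)\colon u\mapsto\sum_{x}\xi_{x}(\Delta u)(x)=\sum_{x}\xi_{x}\left\langle \delta_{x},u\right\rangle$, i.e.\ (via Riesz) essentially to a combination of the Dirac masses $\delta_{x}$, not of the dipoles $v_{x}$. That this is the right map is visible from the weight itself: $c(x)$ in (\ref{Eq6.6}) is built from $\left\Vert \delta_{x}\right\Vert ^{2}$ and $\sum_{y\neq x}\left\vert \left\langle \delta_{x},\delta_{y}\right\rangle \right\vert$, i.e.\ from the row sums of the Gram matrix of the $\delta$'s --- exactly what a Schur-type test for that map calls for --- and says nothing about the Gram matrix of the $v_{x}$'s, which is unbounded in general.

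The auxiliary inequality you lean on to repair the constant, $\sum_{x}c(x)^{-1}\left\vert u(x)-u(o)\right\vert ^{2}\leq\left\Vert u\right\Vert _{\mathcal{H}}^{2}$, is also false: in the same example take $u=v_{N}$, so the left side is at least $\frac{1}{2}\sum_{x=1}^{N}x^{2}\sim N^{3}/6$ while $\left\Vert v_{N}\right\Vert ^{2}=N$. The appeal to ``$c(x)\geq\left\Vert \delta_{x}\right\Vert ^{2}$ controls $\left\Vert v_{x}\right\Vert$'' has no content, since $\left\Vert v_{x}\right\Vert$ is not controlled by $c(x)$ at all. For the record, the paper's argument runs as follows: expand $\left\langle \delta_{x},u\right\rangle =\sum_{y}\left\langle \delta_{x},\delta_{y}\right\rangle \left(  u(x)-u(y)\right)$ using the bipoles $w_{x,y}=v_{x}-v_{y}$ (differences over \emph{interacting pairs}, not differences to the base point), split $\left\vert \left\langle \delta_{x},\delta_{y}\right\rangle \right\vert$ as a product of two square roots, apply Cauchy--Schwarz in $x$ for fixed $y$ and then again in the outer sum, and recognize $\sum_{y}\sum_{x}\left\vert \left\langle \delta_{x},\delta_{y}\right\rangle \right\vert \left\vert u(x)-u(y)\right\vert ^{2}$ as (a multiple of) the energy norm $\left\Vert u\right\Vert _{\mathcal{H}}^{2}$; each of the two quantities maximized in (\ref{Eq6.6}) absorbs one of the two Cauchy--Schwarz factors. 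Your instinct that this is a Schur estimate on the Gram matrix of the $\delta$'s was correct; the error was aiming it at the wrong map and at base-point differences $u(x)-u(o)$ instead of the pair differences $u(x)-u(y)$ that the energy norm actually controls.
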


\begin{proof}
Since $\mathcal{H}$ is a Hilbert space, we shall state the embedding of
$\ell^{2}\left(  c\right)  $ into $\mathcal{H}$ instead as a mapping into
$\mathcal{H}^{\ast}=($the dual of $\mathcal{H})\simeq\mathcal{H}$.

For $\xi\in\ell^{2}\left(  c\right)  $, set
\begin{equation}
L\left(  \xi\right)  \text{:}\,u\longmapsto\sum_{x}\xi_{x}\cdot\left(  \Delta
u\right)  \left(  x\right)  \text{.} \label{Eq6.8}%
\end{equation}

We will show that the summation on the R.H.S. in (\ref{Eq6.8}) is absolutely
convergent and that
\begin{equation}
\sum_{x}\left\vert \xi_{x}\cdot\left(  \Delta u\right)  \left(  x\right)
\right\vert ^{2}\leq\left\Vert \xi\right\Vert _{\ell_{\left(  c\right)  }^{2}%
}^{2}\cdot\left\Vert u\right\Vert _{\mathcal{H}}^{2}\text{.} \label{Eq6.9}%
\end{equation}
The conclusion in the theorem follows from this, and an application of Riesz'
lemma to $\mathcal{H}$.

By the theorem, we have
\begin{equation}
\delta_{x}=\sum_{y}\left\langle \delta_{x},\delta_{y}\right\rangle
_{\mathcal{H}}w_{x,y} \label{Eq6.10}%
\end{equation}
where $w_{x,y}$:$=v_{x}-v_{y}$; and
\begin{align*}
&  \sum_{x}\left\vert \xi_{x}\left(  \Delta u\right)  \left(  x\right)
\right\vert \\
&  =\sum_{x}\left\vert \xi_{x}\left\langle \delta_{x},u\right\rangle
_{\mathcal{H}}\right\vert \\
&  =_{\left(  \text{by }\left(  \ref{Eq6.10}\right)  \right)  }\sum_{\times
}\sum_{y}\left\vert \xi_{x}\left\langle \delta_{x},\delta_{y}\right\rangle
_{\mathcal{H}}\left\langle w_{x,y},u\right\rangle _{\mathcal{H}}\right\vert \\
&  =_{\left(  \text{Fubini}\right)  }\sum_{y}\sum_{x}\left\vert \xi
_{x}\left\langle \delta_{x}\delta_{y}\right\rangle _{\mathcal{H}}\left\langle
w_{x,y},u\right\rangle _{\mathcal{H}}\right\vert \\
&  =\sum_{y}\sum_{x}\left\vert \xi_{x}\right\vert \sqrt{\left\vert
\left\langle \delta_{x},\delta_{y}\right\rangle _{\mathcal{H}}\right\vert
}\sqrt{\left\vert \left\langle \delta_{x},\delta_{y}\right\rangle
_{\mathcal{H}}\right\vert }\left\vert u\left(  x\right)  -u\left(  y\right)
\right\vert \\
&  \leq_{\left(  \text{Schwarz}\right)  }\sum_{y}\left(  \sum_{x}\left\vert
\xi_{x}\right\vert ^{2}\cdot\left\vert \left\langle \delta_{x},\delta
_{y}\right\rangle \right\vert \right)  ^{\frac{1}{2}}\cdot\left(  \sum
_{x}\left\vert \left\langle \delta_{x},\delta_{y}\right\rangle _{\mathcal{H}%
}\right\vert \cdot\left\vert u\left(  x\right)  -v\left(  y\right)
\right\vert ^{2}\right)  ^{\frac{1}{2}}\\
&  \leq_{\left(  \text{Schwarz}\right)  }\left(  \sum_{x}\left\vert \xi
_{x}\right\vert ^{2}c\left(  x\right)  \cdot\sum_{y}\left\vert \left\langle
\delta_{x},\delta_{y}\right\rangle _{\mathcal{H}}\right\vert \cdot\left\vert
u\left(  x\right)  -v\left(  y\right)  \right\vert ^{2}\right)  ^{\frac{1}{2}%
}\\
&  \leq\left\Vert \xi\right\Vert _{\ell_{\left(  c\right)  }^{2}}%
\cdot\left\Vert u\right\Vert _{\mathcal{H}}\text{,}%
\end{align*}
which is the desired estimate (\ref{Eq6.9}).
\end{proof}

Let $\left(  \mathcal{H},X,o\right)  $ be a relative reproducing kernel
Hilbert space such that (\ref{Eq6.1}) is satisfied; and let $\Delta$ be the
associated operator from (\ref{Eq5.2}). Since vectors in $\mathcal{H}$ are
determined from differences (via dipoles, see equation (\ref{Eq5.28}))
intuitively one would expect the constant function $1$ to be represented by
zero in $\mathcal{H}$.

The next result offers an operator theoretic answer to this question.

\begin{definition}
\label{Def6.5}A family of finite subsets $\left(  F_{k}\right)  _{k\in
\mathbb{N}}$ is said to be an \emph{exhaustion} or a \emph{filtration} in $X$
if
\begin{equation}
F_{1}\subset F_{2}\subset\cdots F_{k}\subset F_{k+1}\subset\cdots
\label{Eq6.11}%
\end{equation}
and
\begin{equation}
\bigcup\limits_{k=1}^{\infty}F_{k}=X\text{.} \label{Eq6.12}%
\end{equation}
Let
\begin{equation}
f_{k}\text{\emph{:}}=\chi_{F_{k}}=\sum_{x\in F_{k}}\delta_{x}\text{.}
\label{Eq6.13}%
\end{equation}

Now define the boundaries $\operatorname*{bd}F_{k}$ for each $k$ as
follows\emph{:}
\begin{equation}
\operatorname*{bd}F_{k}=\left\{  x\in F_{k}|\exists y\in F_{k}^{c}\text{ with
}\left\langle \delta_{x},\delta_{y}\right\rangle \not =0.\right\}
\label{Eq6.14}%
\end{equation}
For simplicity we will assume finite degrees, i.e., assume that \emph{(}%
\ref{Eq4.17}\emph{)} is satisfied for all $x\in X$. By $F_{k}^{c}$ we mean the
complement, i.e., $F_{k}^{c}$\emph{:}$=X\diagdown F_{k}$.

For functions $\psi$ on $X$, define a \emph{normal derivative }$\frac
{\partial\psi}{\partial n}$ referring to the filtration\emph{:}
\begin{equation}
\frac{\partial\psi}{\partial n}\left(  x\right)  =\sum_{y\sim x}\left\langle
\delta_{x},\delta_{y}\right\rangle \left(  \psi\left(  x\right)  -\psi\left(
y\right)  \right)  \label{Eq6.15}%
\end{equation}
where $y\sim x$ means $y\not =x$ and $\left\langle \delta_{x},\delta
_{y}\right\rangle \not =0$. Moreover for $x\in F_{k}$, set
\begin{equation}
\left(  \frac{\partial\psi}{\partial n}\right)  _{k}\left(  x\right)
=\sum_{y\in F_{k}^{c}}\left\langle \delta_{x},\delta_{y}\right\rangle \left(
\psi\left(  x\right)  -\psi\left(  y\right)  \right)  \text{.} \label{Eq6.16}%
\end{equation}

\end{definition}

\begin{lemma}
\label{Lem6.6}Let $\left\langle \cdot,\cdot\right\rangle $ be the inner
product in $\mathcal{H}$, and let $F\subset X$ be a finite subset. Then for
$\psi\in\mathcal{H}$, we have the identity
\begin{equation}
\left\langle \chi_{F},\psi\right\rangle =\sum_{x\in F}\left(  \frac
{\partial\psi}{\partial n}\right)  _{F}\left(  x\right)  \text{.}
\label{Eq6.17}%
\end{equation}

\end{lemma}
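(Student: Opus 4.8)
The plan is to compute $\langle \chi_F, \psi\rangle$ directly by writing $\chi_F = \sum_{x \in F} \delta_x$ and then using bilinearity of the inner product together with the formula $\langle \delta_x, \psi \rangle = (\Delta \psi)(x)$ from (\ref{Eq4.19}), combined with the graph representation (\ref{Eq4.23}) of $\Delta$ in terms of the coefficients $c(xy) := \langle \delta_x, \delta_y\rangle$. First I would recall that $\langle \delta_x, \psi\rangle_{\mathcal H} = (\Delta\psi)(x) = \sum_{y\sim x}\langle\delta_x,\delta_y\rangle(\psi(x) - \psi(y))$, where the neighbor relation $\sim$ is exactly the one in (\ref{Eq6.15}), namely $y \sim x$ iff $y \neq x$ and $\langle \delta_x,\delta_y\rangle \neq 0$; the finite-degree assumption makes these sums finite. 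Summing over $x \in F$ then gives
\[
\langle \chi_F, \psi\rangle = \sum_{x\in F}\sum_{y\sim x}\langle\delta_x,\delta_y\rangle(\psi(x)-\psi(y)).
\]

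Next I would split the inner sum according to whether $y \in F$ or $y \in F^c$. The $y \in F$ part is $\sum_{x,y \in F, y\sim x}\langle\delta_x,\delta_y\rangle(\psi(x)-\psi(y))$, which is \emph{antisymmetric} under swapping $x$ and $y$ (since $\langle\delta_x,\delta_y\rangle = \overline{\langle\delta_y,\delta_x\rangle}$ is real here — the weights $c(xy)$ are real-valued — while $\psi(x)-\psi(y)$ changes sign), hence this double sum over the symmetric index set $\{(x,y) \in F\times F : x\sim y\}$ vanishes. What remains is precisely $\sum_{x\in F}\sum_{y\in F^c}\langle\delta_x,\delta_y\rangle(\psi(x)-\psi(y)) = \sum_{x\in F}\left(\frac{\partial\psi}{\partial n}\right)_F(x)$ by the definition (\ref{Eq6.16}), which is the desired identity (\ref{Eq6.17}).

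The main point requiring care — and the one I would flag as the potential obstacle — is the cancellation of the interior double sum: one must be sure the weights $\langle\delta_x,\delta_y\rangle$ are real (so that the antisymmetry argument is clean and not merely a statement about real parts), and one should double-check the convention in (\ref{Eq6.15})--(\ref{Eq6.16}) that $\left(\frac{\partial\psi}{\partial n}\right)_F(x)$ is defined with the single sum over $y\in F^c$ rather than a symmetrized $\tfrac12$-sum, so that no spurious factor of $2$ appears. A secondary technical point is justifying that all sums are finite and that rearrangement is legitimate: this is immediate from the finite-degree hypothesis, which ensures each $x$ has only finitely many neighbors $y$, so the double sum over $x\in F$ (finite) and its neighbors is a finite sum and reordering is trivial. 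Modulo these bookkeeping checks, the proof is a short direct computation from (\ref{Eq4.19}), (\ref{Eq4.23}), and the definitions in Definition \ref{Def6.5}.
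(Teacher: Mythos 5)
Your proof is correct and is essentially the computation the paper gives: both expand $\left\langle \chi_{F},\psi\right\rangle$ through the energy coefficients $\left\langle \delta_{x},\delta_{y}\right\rangle$ and reduce to the boundary pairs $x\in F$, $y\in F^{c}$, which is exactly the sum in (\ref{Eq6.16}). The only cosmetic difference is bookkeeping: you cancel the interior pairs $x,y\in F$ by an explicit antisymmetry argument (rightly flagging that this needs $\left\langle \delta_{x},\delta_{y}\right\rangle$ real, which holds since these numbers are the conductances up to sign), whereas the paper inserts $\chi_{F}$ directly into the bilinear form so that interior and exterior pairs drop out term by term because $\chi_{F}\left(  x\right)  -\chi_{F}\left(  y\right)  =0$ there.
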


\begin{proof}%
\begin{align*}
\left\langle \chi_{F},\psi\right\rangle  &  =\underset{x\not =y}{\sum\sum
}\left\langle \delta_{x},\delta y\right\rangle \left(  \chi_{F}\left(
x\right)  -\chi_{F}\left(  y\right)  \right)  \left(  \psi\left(  x\right)
-\psi\left(  y\right)  \right) \\
&  =\sum_{x\in F}\sum_{y\in F^{c}}\left\langle \delta_{x},\delta
_{y}\right\rangle \left(  \psi\left(  x\right)  -\psi\left(  y\right)  \right)
\\
&  =\sum_{x\in F}\left(  \frac{\partial\psi}{\partial n}\right)  _{F}\left(
x\right)  \text{.}%
\end{align*}

\end{proof}

\begin{theorem}
\label{Theo6.7}Let $\left(  \mathcal{H},X,o\right)  $ be as above, and let
$\left(  F_{k}\right)  _{k\in\mathbb{N}}$ be a filtration. Then
\[
f_{k}\text{\emph{:}}=\chi_{F_{k}}\in\mathcal{H}%
\]
converges to zero weakly if and only if
\begin{equation}
\lim\limits_{k\rightarrow\infty}\sum_{x\in F_{k}}\left(  \frac{\partial\psi
}{\partial n}\right)  _{F_{k}}\left(  x\right)  =0\text{ for all }\psi
\in\mathcal{H}. \label{Eq6.18}%
\end{equation}

\end{theorem}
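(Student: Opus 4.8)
The plan is to obtain the equivalence directly from Lemma \ref{Lem6.6}, which is the substantive ingredient (a discrete Green--Gauss / summation-by-parts identity); the theorem itself is then essentially a translation of the definition of weak convergence.

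First I would record that $f_{k}=\chi_{F_{k}}=\sum_{x\in F_{k}}\delta_{x}$ genuinely lies in $\mathcal{H}$: each $\delta_{x}\in\mathcal{H}$ by \emph{(}\ref{Eq6.1}\emph{)}, and $F_{k}$ is finite, so $f_{k}$ is a finite linear combination of vectors in $\mathcal{H}$. Hence the statement ``$f_{k}\to 0$ weakly'' is meaningful, and by definition of weak convergence in a Hilbert space it means precisely that $\langle f_{k},\psi\rangle\to 0$ for every $\psi\in\mathcal{H}$. No separate norm-boundedness hypothesis is needed to state the weak limit, though the uniform boundedness principle would supply $\sup_{k}\Vert f_{k}\Vert<\infty$ a posteriori.

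Next I would apply Lemma \ref{Lem6.6} with the finite set $F=F_{k}$, which gives, for each fixed $k$ and each $\psi\in\mathcal{H}$,
\[
\langle \chi_{F_{k}},\psi\rangle=\sum_{x\in F_{k}}\left(\frac{\partial\psi}{\partial n}\right)_{F_{k}}(x).
\]
Combining this identity with the reformulation of weak convergence from the previous step, the asserted equivalence is immediate: $\langle f_{k},\psi\rangle\to 0$ for all $\psi\in\mathcal{H}$ if and only if $\sum_{x\in F_{k}}\left(\partial\psi/\partial n\right)_{F_{k}}(x)\to 0$ for all $\psi\in\mathcal{H}$, which is exactly \emph{(}\ref{Eq6.18}\emph{)}.

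Since the real work has already been done in Lemma \ref{Lem6.6}, I do not anticipate a genuine obstacle; the ``hard part'' is really the summation-by-parts computation underlying that lemma, already dispatched. The one point to keep in mind is that $\mathcal{H}$ consists of functions on $X$ modulo constants \emph{(}Remark \ref{Rem4.4}\emph{(}a\emph{))}, so $\chi_{F_{k}}$ is only determined up to an additive constant; this causes no trouble, because the right-hand side of the identity in Lemma \ref{Lem6.6} depends only on the differences $\chi_{F_{k}}(x)-\chi_{F_{k}}(y)$, equivalently on the edges crossing $\operatorname{bd}F_{k}$, and is therefore insensitive to such a shift. If desired, one could additionally remark that under a uniform bound $\sup_{k}\Vert f_{k}\Vert<\infty$ it suffices to verify \emph{(}\ref{Eq6.18}\emph{)} on a dense subspace such as $\mathcal{D}_{V}=\operatorname{span}\{v_{x}\mid x\in X^{\ast}\}$, but this refinement is not needed for the stated equivalence.
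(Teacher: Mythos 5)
Your argument is correct and is exactly the paper's route: the paper's proof of Theorem \ref{Theo6.7} simply declares the conclusion immediate from Lemma \ref{Lem6.6}, and you have filled in the same two observations (that $\chi_{F_{k}}$ is a finite sum of the $\delta_{x}\in\mathcal{H}$, and that weak convergence means $\left\langle f_{k},\psi\right\rangle \rightarrow 0$ for all $\psi$) that make that immediacy precise. Your closing remark about the quotient by constants is a sensible extra check, not a departure from the paper's argument.
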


\begin{proof}
The conclusion (\ref{Eq6.18}) is immediate from the lemma.
\end{proof}

\begin{corollary}
\label{Cor6.8}Let $\mathcal{H}=\mathcal{H}_{E}$ where $\mathcal{H}_{F}$ is the
energy Hilbert space coming from a weighted graph $\left(  G,c\right)  $ with
$G^{0}=$ the set of vertices, and $G^{1}=$ the set of edges, i.e.,
\[
\left\langle u,v\right\rangle _{\mathcal{H}_{E}}=\underset{x\sim y}{\sum\sum
}c\left(  xy\right)  \left(  \overline{u\left(  x\right)  }-\overline{u\left(
y\right)  }\right)  \left(  v\left(  x\right)  -v\left(  y\right)  \right)
\]
and
\[
\left(  \Delta u\right)  \left(  x\right)  =\sum_{y\sim x}c\left(  xy\right)
\left(  u\left(  x\right)  -u\left(  y\right)  \right)  \text{.}%
\]
Then for every filtration $\left(  F_{k}\right)  $ in $G^{0}$, $\chi_{F_{k}%
}\rightarrow0$ as $k\rightarrow\infty$, with weak convergence in
$\mathcal{H}_{E}$.
\end{corollary}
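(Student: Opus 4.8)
The plan is to read the corollary off from Theorem~\ref{Theo6.7} after making the energy inner product explicit. By Lemma~\ref{Lem6.6} one has $\langle\chi_F,\psi\rangle_{\mathcal{H}_E}=\sum_{x\in F}\bigl(\tfrac{\partial\psi}{\partial n}\bigr)_F(x)$ for every finite $F\subset G^0$ and every $\psi\in\mathcal{H}_E$, so Theorem~\ref{Theo6.7} amounts to the statement that $\chi_{F_k}\to 0$ weakly in $\mathcal{H}_E$ exactly when $\langle\chi_{F_k},\psi\rangle_{\mathcal{H}_E}\to 0$ for all $\psi\in\mathcal{H}_E$. Hence the whole task is to evaluate these pairings and show they vanish in the limit.

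First I would compute, directly from the energy form (\ref{Eq4.12}), that $\langle\delta_x,\delta_y\rangle_{\mathcal{H}_E}=-c(xy)$ for adjacent $x\neq y$, that $\langle\delta_x,\delta_y\rangle_{\mathcal{H}_E}=0$ for distinct non-adjacent $x,y$, and that $\|\delta_x\|_{\mathcal{H}_E}^2=c(x)$. Writing $\chi_{F_k}=\sum_{x\in F_k}\delta_x$ and noting that the edge-contributions with both endpoints inside $F_k$ occur in antisymmetric pairs and cancel, one gets
\[
\langle\chi_{F_k},\psi\rangle_{\mathcal{H}_E}=\sum_{\substack{x\in F_k,\ y\notin F_k\\ x\sim y}}c(xy)\bigl(\psi(x)-\psi(y)\bigr)=\sum_{x\in F_k}(\Delta\psi)(x),
\]
with $\Delta$ the graph Laplacian of (\ref{Eq4.23}); this is the net flux of $\psi$ out of $F_k$. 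So Corollary~\ref{Cor6.8} reduces to: for every $\psi\in\mathcal{H}_E$, the flux $\sum_{x\in F_k}(\Delta\psi)(x)$ tends to $0$ as $k\to\infty$.

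The favorable case is $\psi\in\mathcal{D}_F$, i.e.\ $\psi$ finitely supported: since $G$ has finite degrees, the set $\{x:(\Delta\psi)(x)\neq0\}$ is finite, and as soon as $F_k$ engulfs it the flux equals $\sum_{x\in G^0}(\Delta\psi)(x)$, which vanishes by the edge antisymmetry $c(xy)(\psi(x)-\psi(y))+c(xy)(\psi(y)-\psi(x))=0$ (discrete Stokes). Thus the pairings vanish on the dense subspace $\mathcal{D}_F$; moreover each $\chi_{F_k}$ lies in the closed span $\mathcal{F}$ of the $\delta_x$, whose orthogonal complement is the space of finite-energy harmonic functions, so it is enough to treat $\psi\in\mathcal{F}$. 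Approximating such $\psi$ by finitely supported $\psi_n$ in $\mathcal{H}_E$-norm leaves the estimate
\[
|\langle\chi_{F_k},\psi\rangle_{\mathcal{H}_E}|\le|\langle\chi_{F_k},\psi_n\rangle_{\mathcal{H}_E}|+\|\chi_{F_k}\|_{\mathcal{H}_E}\,\|\psi-\psi_n\|_{\mathcal{H}_E}.
\]

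I expect this last estimate to be the real obstacle, and in fact the place where a hypothesis enters: it closes only when the boundary capacities $\|\chi_{F_k}\|_{\mathcal{H}_E}^2=\sum_{x\in F_k,\ y\notin F_k}c(xy)$ stay bounded in $k$ — otherwise $(\chi_{F_k})$ is not even norm-bounded and cannot converge weakly to anything. This uniform bound holds automatically for finite graphs, and more generally for recurrent networks admitting an exhaustion of bounded boundary conductance (for instance $\mathbb{Z}$ with constant weights, where $\|\chi_{F_k}\|_{\mathcal{H}_E}^2$ is constant), but it fails for transient networks: on a regular tree exhausted by metric balls $\|\chi_{F_k}\|_{\mathcal{H}_E}^2\to\infty$. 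I would therefore prove Corollary~\ref{Cor6.8} under the standing assumption $\sup_k\|\chi_{F_k}\|_{\mathcal{H}_E}<\infty$ (in particular when $(G,c)$ is recurrent), where the density argument above goes through verbatim; in full generality one should instead weaken the conclusion to convergence in the locally convex topology generated by the dipoles, since $\langle v_x,\chi_{F_k}\rangle_{\mathcal{H}_E}=\chi_{F_k}(x)-\chi_{F_k}(o)=0$ for all large $k$, and this is all that the boundary constructions of Section~\ref{EssentProb} actually use.
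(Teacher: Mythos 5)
Your reduction coincides with the paper's: both run the corollary through Lemma \ref{Lem6.6} and Theorem \ref{Theo6.7}, so that everything rests on the vanishing of the flux $\sum_{x\in F_{k}}\left(  \Delta\psi\right)  \left(  x\right)  $ for \emph{every} $\psi\in\mathcal{H}_{E}$. The difference is what happens after the flux is shown to vanish on a dense set. The paper's printed proof is the single remark that (\ref{Eq6.18}) ``follows from the proof of (\ref{Eq4.21})'': since $\Delta v_{x}=\delta_{x}-\delta_{0}$, the flux of a dipole over $F_{k}$ is $1-1=0$ once $F_{k}\supseteq\left\{  x,o\right\}  $, so the functionals $\psi\longmapsto\left\langle \chi_{F_{k}},\psi\right\rangle $ vanish eventually on $\mathcal{D}_{V}$ (equivalently, by your discrete Stokes computation, on $\mathcal{D}_{F}$); the passage from this dense subspace to all of $\mathcal{H}_{E}$ is made in silence. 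You have put your finger on exactly the right spot: that passage requires $\sup_{k}\left\Vert \chi_{F_{k}}\right\Vert _{\mathcal{H}_{E}}<\infty$, and this is not implied by the stated hypotheses.

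Your diagnosis that the corollary as printed is false is correct, and it can be confirmed in two independent ways. First, on a transient network the finite-energy monopole $w$ satisfies $\left\langle \delta_{x},w\right\rangle _{\mathcal{H}_{E}}=\left(  \Delta w\right)  \left(  x\right)  =\delta_{o}\left(  x\right)  $ pointwise, hence $\left\langle \chi_{F_{k}},w\right\rangle _{\mathcal{H}_{E}}=1$ for every exhaustion and every $k$ with $o\in F_{k}$. Second, the paper's own Example \ref{Ex6.9} computes $\left\Vert \chi_{F_{k}}\right\Vert _{\mathcal{H}_{E}}^{2}=\left(  2d\right)  \left(  2k\right)  ^{d-1}$ for $\mathbb{Z}^{d}$ with the box exhaustion; for $d\geq2$ this is unbounded, so by Banach--Steinhaus $\left(  \chi_{F_{k}}\right)  $ cannot converge weakly to anything --- note this already kills the statement in the \emph{recurrent} case $d=2$, so your hedge that the right hypothesis is ``an exhaustion of bounded boundary conductance'' rather than recurrence alone is also accurate. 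With that hypothesis added, your argument (vanishing of the flux on $\mathcal{D}_{F}$, reduction to $\psi\in\mathcal{F}$ via $\chi_{F_{k}}\in\mathcal{F}$, then the norm-bounded density estimate) is complete and correct, whereas the paper's is not; and your closing observation that the dipole pairings $\left\langle v_{x},\chi_{F_{k}}\right\rangle =\chi_{F_{k}}\left(  x\right)  -\chi_{F_{k}}\left(  o\right)  $ do vanish eventually identifies the weaker topology in which the unconditional statement survives.
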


\begin{proof}
By Theorem \ref{Theo6.7}, we only need to prove that the limit property
(\ref{Eq6.18}) is satisfied; but this follows in turn from the proof of
Theorem 4.7; specifically the proof of (\ref{Eq4.21}) in this theorem.
\end{proof}

\begin{example}
\label{Ex6.9}Let $G$ be the graph $\mathbb{Z}^{d}$ with nearest
neighbors\emph{;} i.e., $x\sim y$ for pairs of points $x$ and $y$ in
$x=\left(  x_{1},\cdots,x_{d}\right)  $, $y=\left(  y_{1},\cdots,y_{d}\right)
$ and the two only differ on one coordinate place, i.e., $\exists i$ such that
$\left\vert x_{i}-y_{i}\right\vert =1$. For $x\sim y$ set $c\left(  xy\right)
=1$.
\end{example}

For filtration, let
\[
F_{k}\text{:}=\left[  -k,k\right]  ^{d}\cap\mathbb{Z}^{d}\text{,}%
\]
and $f_{k}$:$=\chi_{F_{k}}$. Then
\[
\left\Vert f_{k}\right\Vert _{\mathcal{H}_{E}}^{2}=\left(  2d\right)
\cdot\left(  2k\right)  ^{d-1}.
\]
In particular, it follows that $\left(  f_{k}\right)  _{k\in\mathbb{N}}$ is
not a Cauchy sequence in $\mathcal{H}_{E}$.

\section{Concluding remarks and applications\label{ConcludingRem}}

We saw that every weighted graph $\left(  G,c\right)  $ with finite degrees
gives rise to a reproducing kernel Hilbert space $\left(  \mathcal{H}%
,X\right)  $ with $X=G^{c}\diagdown\left(  o\right)  $. Here $G^{c}$ denotes
the set of vertices in $G$, and $o$ is a chosen (and fixed) base-point for
$G^{0}$. To see this we introduce the graph Laplacian
\begin{equation}
\left(  \Delta u\right)  \left(  x\right)  \text{:}=\sum_{y\sim x}c\left(
xy\right)  \left(  u\left(  x\right)  -u\left(  y\right)  \right)
\label{Eq7.1}%
\end{equation}
with
\begin{equation}
c\left(  x\right)  \text{:}=\sum_{y\sim x}c\left(  xy\right)  . \label{Eq7.2}%
\end{equation}
Equation (\ref{Eq7.1}) then takes the form
\begin{equation}
\left(  \Delta u\right)  \left(  x\right)  =c\left(  x\right)  u\left(
x\right)  -\sum_{y\sim x}c\left(  xy\right)  u\left(  y\right)  \label{Eq7.3}%
\end{equation}
for all functions $u$ on $G^{0}$.

In section \ref{Extensions}, and in \cite{JoPe08}, we introduced the
associated energy Hilbert space $\mathcal{H}_{E}$ with its inner product
$\left\langle \cdot,\cdot\right\rangle _{E}$ and norm $\left\Vert
\cdot\right\Vert _{E}$. We showed that for every $x$, there is a unique
$v_{x}\in\mathcal{H}_{E}$ such that
\begin{equation}
\left\langle v_{x},f\right\rangle =f\left(  x\right)  -f\left(  0\right)
,~\forall f\in\mathcal{H}_{E}. \label{Eq7.4}%
\end{equation}
Setting $w_{x,y}$:$=v_{x}-v_{y}$, we get
\begin{equation}
\left\langle w_{x,y},f\right\rangle =f\left(  x\right)  -f\left(  y\right)
\text{.} \label{Eq7.5}%
\end{equation}
Furthermore, the Dirac functions $\delta_{x}$ satisfy
\begin{equation}
\delta_{x}\in\mathcal{H}_{E}\text{, and }\left\Vert \delta_{x}\right\Vert
_{E}^{2}=c\left(  x\right)  \text{ for all }x\in G^{0}\text{.} \label{Eq7.6}%
\end{equation}

\begin{theorem}
\label{Theo7.1}\emph{(}a\emph{)} Let $\left(  \mathcal{H},X\right)  $ be a
reproducing kernel Hilbert space of functions on a set $X$. Let $o\in X$ be a
base-point. Let $k\left(  \cdot,\cdot\right)  $ be the reproducing kernel for
$\left(  \mathcal{H},X,o\right)  $, and set
\begin{equation}
v_{x}\left(  y\right)  \text{\emph{:}}=k\left(  y,x\right)  \text{ for }x\in
X^{\ast}\text{.} \label{Eq7.7}%
\end{equation}
Then $\left(  v_{x}\right)  _{x\in X^{\ast}}$ satisfies
\begin{equation}
\left\langle v_{x},f\right\rangle =f\left(  x\right)  -f\left(  0\right)
,~\forall f\in\mathcal{H},~\forall x\in X^{\ast}\text{.} \label{Eq7.8}%
\end{equation}

\emph{(}b\emph{)} The following two affirmations are equivalent\emph{:}

\begin{enumerate}
\item[(i)] $\left(  \mathcal{H},X,o\right)  $ satisfies\emph{:}

\begin{itemize}
\item $\delta_{x}\in\mathcal{H}$, $\forall x\in X$.

\item For every $x\in X$, we have
\begin{equation}
\#\left\{  y\in X|\left\langle \delta_{x},\delta_{y}\right\rangle
\not =0\right\}  <\infty\text{.} \label{Eq7.9}%
\end{equation}

\item The following identity holds\emph{:}
\begin{equation}
\left\Vert \delta_{x}\right\Vert ^{2}=-\sum_{y\in X}\left\langle \delta
_{x},\delta_{y}\right\rangle \text{.} \label{Eq7.10}%
\end{equation}

\end{itemize}

\item[(ii)] There is a weighted graph $\left(  G,c\right)  $ with finite
degrees such that $X=G^{0}$\emph{;}
\begin{equation}
G^{1}=\left\{  \left(  x,y\right)  |\left\langle \delta_{x},\delta
_{y}\right\rangle \not =0\right\}  ; \label{Eq7.11}%
\end{equation}
and
\begin{equation}
c\left(  xy\right)  \emph{:}=-\left\langle \delta_{x},\delta_{y}\right\rangle
,~\forall\left(  xy\right)  \in G^{1}. \label{Eq7.12}%
\end{equation}

\end{enumerate}

\emph{(}c\emph{)} If the conditions in \emph{(}i\emph{)} or \emph{(}ii\emph{)}
are satisfied, the Laplace operator
\[
\left(  \Delta u\right)  \left(  x\right)  \emph{:}=\left\langle \delta
_{x},u\right\rangle
\]
satisfies \emph{(}\ref{Eq7.1}\emph{)}.
\end{theorem}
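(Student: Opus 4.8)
The plan is to prove Theorem \ref{Theo7.1} in three stages, corresponding to its three parts, building each on the reproducing-kernel axioms (i)--(ii) and the operator-theoretic results of sections \ref{Extensions}--\ref{ComputingSpaces}.

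For part (a), I would simply invoke the defining property (\ref{Eq4.6}) of a reproducing kernel Hilbert space with base-point: setting $v_x(y):=k(y,x)$, the kernel axiom directly gives $\langle v_x,f\rangle = f(x)-f(o)$ for all $f\in\mathcal{H}$ and all $x\in X^\ast$. This is immediate and requires no real work beyond unwinding Definition \ref{Def4.3}.

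For part (b), the implication (ii)$\Rightarrow$(i) is the routine direction: given a weighted graph $(G,c)$ with finite degrees, the energy Hilbert space $\mathcal{H}_E$ satisfies $\delta_x\in\mathcal{H}$ by Proposition \ref{Prop4.5} (with $\|\delta_x\|^2 = c(x)$), condition (\ref{Eq7.9}) holds because finite degree means only finitely many neighbors $y$ have $c(xy)\neq 0$ hence $\langle\delta_x,\delta_y\rangle\neq 0$, and (\ref{Eq7.10}) follows from the identity $c(x) = \sum_{y\sim x} c(xy) = -\sum_{y\neq x}\langle\delta_x,\delta_y\rangle$ together with $\|\delta_x\|^2 = c(x)$ — this last algebraic identity I would verify by expanding $\langle\delta_x,\delta_y\rangle$ using the energy inner product (\ref{Eq4.12}), which gives $\langle\delta_x,\delta_y\rangle = -c(xy)$ for $y\sim x$. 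The harder direction is (i)$\Rightarrow$(ii): here I must \emph{construct} a graph purely from Hilbert-space data. I would define $G^0:=X$, declare $x\sim y$ iff $\langle\delta_x,\delta_y\rangle\neq 0$ (as in (\ref{Eq7.11})), and set $c(xy):=-\langle\delta_x,\delta_y\rangle$. Finite degrees is then exactly (\ref{Eq7.9}). The main point to check is that this $c$ is a legitimate positive weight function and that the resulting energy norm reproduces the original $\mathcal{H}$; symmetry $c(xy)=c(yx)$ is clear from $\langle\delta_x,\delta_y\rangle = \overline{\langle\delta_y,\delta_x\rangle}$ (and reality, since in the relevant real form these inner products are real), and positivity of the off-diagonal weights is the substantive claim — one expects $\langle\delta_x,\delta_y\rangle \leq 0$ for $x\neq y$ from the structure of such kernels, which should follow from writing $\delta_x$ in terms of the dipoles $v_y$ via Proposition \ref{Prop6.1} and using (\ref{Eq7.10}) to control signs.

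For part (c), once the graph is in hand I would verify that $\Delta u(x):=\langle\delta_x,u\rangle$ agrees with the graph Laplacian (\ref{Eq7.1}). Using Proposition \ref{Prop6.1}, which expresses $\delta_x = \|\delta_x\|^2 v_x + \sum_{y\notin\{o,x\}}\langle\delta_x,\delta_y\rangle v_y$, I pair against $u$ and apply (\ref{Eq7.8}): each $\langle v_y,u\rangle = u(y)-u(o)$, so the $u(o)$ terms collect with coefficient $\|\delta_x\|^2 + \sum_{y\neq x}\langle\delta_x,\delta_y\rangle = 0$ by (\ref{Eq7.10}), and what remains is $\|\delta_x\|^2 u(x) + \sum_{y\neq x}\langle\delta_x,\delta_y\rangle u(y) = c(x)u(x) - \sum_{y\sim x}c(xy)u(y)$, which is precisely (\ref{Eq7.3}), equivalently (\ref{Eq7.1}). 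The cancellation of the base-point contribution via (\ref{Eq7.10}) is the key mechanism, and the fact that $\mathcal{H}$ has no monopoles (Lemma \ref{Lem2.5}, Remark \ref{Rem2.6}) is the conceptual reason this works cleanly: $\delta_x$ lives in the range-closure direction, not the kernel. The main obstacle I anticipate is the positivity/consistency check in the (i)$\Rightarrow$(ii) construction of part (b) — showing the synthesized weights are nonnegative and that the synthesized energy form genuinely recovers the given norm rather than merely a comparable one.
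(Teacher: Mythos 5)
Your proposal is essentially correct, and parts (a) and (b) follow the paper's own route: (a) is just the base-point kernel axiom (\ref{Eq4.6}); for (b) the paper makes exactly your synthesis $G^{0}:=X$, $G^{1}$ as in (\ref{Eq7.11}), $c(xy):=-\langle\delta_{x},\delta_{y}\rangle$, $c(x):=\Vert\delta_{x}\Vert^{2}$, with finite degrees coming from (\ref{Eq7.9}), and the converse via the energy-space computation $\langle\delta_{x},\delta_{y}\rangle_{E}=-c(xy)$ for $y\sim x$ and $=c(x)$ for $y=x$. Where you genuinely diverge is part (c). The paper does \emph{not} expand $\delta_{x}$ in dipoles; instead it introduces Lemma \ref{Lem7.2} ($\xi_{x}-\xi_{y}=\langle w_{x,y},u\rangle$ for $u=\sum_{z}\xi_{z}\delta_{z}$) and verifies (\ref{Eq7.1}) by a direct computation on finite linear combinations of the $\delta_{z}$ (setting $\Delta u=0$ on the orthogonal complement of their span), using (\ref{Eq7.10}) to trade $\langle\delta_{x},\delta_{x}\rangle$ for $-\sum_{z\neq x}\langle\delta_{x},\delta_{z}\rangle$. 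Your route via Proposition \ref{Prop6.1} pairs $\delta_{x}=\Vert\delta_{x}\Vert^{2}v_{x}+\sum_{y\neq o,x}\langle\delta_{x},\delta_{y}\rangle v_{y}$ against an \emph{arbitrary} $u\in\mathcal{H}$, which is arguably cleaner and gives (\ref{Eq7.3}) on all of $\mathcal{H}$ at once; just note that the sum in Proposition \ref{Prop6.1} omits $y=o$, so the coefficient of $u(o)$ is $-\langle\delta_{x},\delta_{o}\rangle$ rather than $0$, and this leftover is exactly what restores the missing $y=o$ term of $\sum_{y\neq x}\langle\delta_{x},\delta_{y}\rangle u(y)$ --- the identity still closes, but your bookkeeping as written skips this step. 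Finally, on the positivity of the synthesized weights: you correctly flag it as the substantive issue but do not prove it, and neither does the paper --- its definition of a weighted graph only requires $c\colon G^{1}\to\mathbb{R}$ symmetric, so sign-definiteness is not formally part of affirmation (ii); if one wants the synthesized energy form to be positive (hence a genuine Hilbert norm recovering $\mathcal{H}$), an additional argument would indeed be needed, and the paper does not supply one.
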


\begin{proof}
(a) This is already proved in section \ref{Extensions}.

(b) (i)$\Rightarrow$(ii). Assume that some reproducing kernel Hilbert space
$\left(  \mathcal{H},X,o\right)  $ with base-point $o$ satisfies the three
conditions (bullet points) listed in (i). We will construct a weighted graph
$\left(  G,c\right)  $ with $G^{0}$:$=X$; $G^{1}$ we take to be the set in
(\ref{Eq7.11}); and we set
\[
c\left(  xy\right)  \text{:}=-\left\langle \delta_{x},\delta_{y}\right\rangle
\]
as in (\ref{Eq7.12}); and $c\left(  x\right)  $:$=\left\Vert \delta
_{x}\right\Vert ^{2}$. We then have the implication: (10)$\Rightarrow$(12). As
a result, the axioms for weighted graphs are satisfied for this particular
$\left(  G,c\right)  $, and the degrees are finite by assumption (\ref{Eq7.9}).

We set
\begin{equation}
\left(  \Delta u\right)  \left(  x\right)  \text{:}=\left\langle \delta
_{x},u\right\rangle ,~u\in\mathcal{H} \label{Eq7.13}%
\end{equation}
which is possible by the first assumption in (i).

It remains to prove that then (\ref{Eq7.1}) is satisfied.
\end{proof}

\begin{lemma}
\label{Lem7.2}Let $x,y\in X$, and suppose $x\not =y$. Let
\[
u\in\operatorname*{span}\left\{  \delta_{z}|z\in X\right\}  ,~u\text{\emph{:}%
}=\sum_{z}\xi_{z}\delta_{z}%
\]
\emph{(}a finite linear combination $\xi\in\mathbb{C}$\emph{)}.

Then
\begin{equation}
\xi_{x}-\xi_{y}=\left\langle w_{x,y},u\right\rangle \label{Eq7.14}%
\end{equation}
with $w_{x,y}$ from \emph{(}\ref{Eq7.5}\emph{)}.
\end{lemma}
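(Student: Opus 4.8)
The plan is to compute $\langle w_{x,y}, u\rangle$ directly by expanding $u = \sum_z \xi_z \delta_z$ and using bilinearity of the inner product together with the defining property of the dipoles. First I would recall from \eqref{Eq7.5} that $w_{x,y} = v_x - v_y$ and that \eqref{Eq7.8} (equivalently \eqref{Eq7.4}) gives $\langle v_a, f\rangle = f(a) - f(o)$ for every $f \in \mathcal{H}$ and every $a \in X^{\ast}$; consequently $\langle w_{x,y}, f\rangle = f(x) - f(y)$ for all $f \in \mathcal{H}$, with the constant terms $f(o)$ cancelling. Note this identity is stated for $x, y \in X^{\ast}$, but since the right-hand side is insensitive to which point is chosen as base-point, it extends to all $x, y \in X$ (one checks that $w_{x,o} = v_x$ and $w_{o,y} = -v_y$ give the correct formula).

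Next I would simply apply this to $f := u = \sum_z \xi_z \delta_z$, which lies in $\mathcal{H}$ by the standing assumption \eqref{Eq7.6} (or \eqref{Eq6.1}) that each $\delta_z \in \mathcal{H}$, so that a finite linear combination is again in $\mathcal{H}$. This yields
\[
\langle w_{x,y}, u\rangle = u(x) - u(y).
\]
It then remains to evaluate $u(x)$ and $u(y)$ pointwise. Since $u = \sum_z \xi_z \delta_z$ and $\delta_z(a) = 1$ if $a = z$ and $0$ otherwise (see \eqref{Eq4.9}), we get $u(a) = \sum_z \xi_z \delta_z(a) = \xi_a$ for each $a \in X$. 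Substituting $a = x$ and $a = y$ gives $u(x) = \xi_x$ and $u(y) = \xi_y$, hence $\langle w_{x,y}, u\rangle = \xi_x - \xi_y$, which is \eqref{Eq7.14}.

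There is no real obstacle here: the lemma is essentially a bookkeeping statement that the dipole $w_{x,y}$, which reproduces the difference $f(x) - f(y)$ of \emph{values} of a general $f \in \mathcal{H}$, in particular reproduces the difference $\xi_x - \xi_y$ of \emph{coefficients} when $f$ is written out as a finite combination of Dirac masses. The only point requiring a moment's care is the interplay between the base-point normalization in \eqref{Eq7.8} and the base-point-free statement of the lemma; I would handle this by the remark above that $w_{x,y} = v_x - v_y$ has the base-point dependence cancel, so \eqref{Eq7.5} holds verbatim for all pairs $x \ne y$ in $X$. With that settled, the computation is two lines.
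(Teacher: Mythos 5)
Your proposal is correct and amounts to the same computation as the paper's: the paper expands $\langle w_{x,y},u\rangle=\sum_z\xi_z\langle w_{x,y},\delta_z\rangle$ by linearity and applies (\ref{Eq7.5}) to each $\delta_z$, whereas you apply (\ref{Eq7.5}) to $u$ directly and then evaluate $u(x)-u(y)=\xi_x-\xi_y$ pointwise; these are the same two steps in the opposite order. Your added remark about the base-point cancellation in $w_{x,y}=v_x-v_y$ is a harmless and correct clarification.
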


\begin{proof}
We have
\begin{align*}
\left\langle w_{x,y},u\right\rangle  &  =\sum_{z}\xi_{z}\left\langle
w_{x,y},\delta_{z}\right\rangle \\
&  =_{\left(  \text{by }\left(  \ref{Eq7.5}\right)  \right)  }\sum_{z}\xi
_{z}\left(  \delta_{z}\left(  x\right)  -\delta_{z}\left(  y\right)  \right)
\\
&  =\xi_{x}-\xi_{y}\text{.}%
\end{align*}

\end{proof}

\begin{proof}
[Proof of Theorem \ref{Theo7.1} resumed]~\smallskip\newline\textit{Case 1.} If
$u\in\mathcal{H}$, and
\begin{equation}
u\bot\left\{  \delta_{z}|z\in X\right\}  \text{, } \label{Eq7.15}%
\end{equation}
then set $\Delta u=0$.\smallskip

\noindent\textit{Case 2.} If $u=\sum_{z}\xi_{z}\delta_{z}$ is a finite linear
combination as in Lemma \ref{Lem7.2}, we may compute $\left(  \Delta u\right)
\left(  x\right)  $ from the assumptions as follows:
\begin{align*}
\left(  \Delta u\right)  \left(  x\right)   &  =\sum_{z}\xi_{z}\left(
\Delta\delta_{z}\right)  \left(  x\right) \\
&  =\xi_{x}\left(  \Delta\delta_{x}\right)  \left(  x\right)  +\sum_{z\not =%
x}\xi_{z}\left(  \Delta\delta_{z}\right)  \left(  x\right) \\
&  =_{\left(  \text{by }\left(  \ref{Eq7.13}\right)  \right)  }\xi
_{x}\left\langle \delta_{x},\delta_{x}\right\rangle +\sum_{z\not =x}\xi
_{z}\left\langle \delta_{x},\delta_{z}\right\rangle \\
&  =_{\left(  \text{by }\left(  \ref{Eq7.10}\right)  \right)  }-\xi_{x}%
\sum_{z\not =x}\left\langle \delta_{x},\delta_{z}\right\rangle +\sum
_{z\not =x}\xi_{z}\left\langle \delta_{x},\delta_{z}\right\rangle \\
&  =\sum_{z\not =x}\left(  \xi_{z}-\xi_{x}\right)  \left\langle \delta
_{x},\delta_{z}\right\rangle \\
&  =_{\left(  \text{by }\left(  \ref{Eq7.14}\right)  \right)  }-\sum
_{z\not =x}\left\langle w_{x,z},u\right\rangle \left\langle \delta_{x}%
,\delta_{z}\right\rangle \\
&  =_{\left(  \text{by }\left(  \ref{Eq7.5}\right)  \right)  }-\sum_{z\not =%
x}\left(  u\left(  x\right)  -u\left(  z\right)  \right)  \left\langle
\delta_{x},\delta_{z}\right\rangle \\
&  =_{\left(  \text{by }\left(  \ref{Eq7.12}\right)  \right)  }\sum_{z\not =%
x}c\left(  xz\right)  \left(  u\left(  x\right)  -u\left(  z\right)  \right)
\text{,}%
\end{align*}
which is the desired formula (\ref{Eq7.1}).

The proof of the converse formula (ii)$\Rightarrow$(i) amounts to showing that
every weighted graph $\left(  G,c\right)  $ with finite degrees yields a
reproducing kernel Hilbert space representation as stated. But with $\left(
G,c\right)  $ given, we may take $\mathcal{H}$:$=\mathcal{H}_{E}$, as in
section \ref{Extensions}; $X$:$=G^{0}=$ the set of vertices. A direct
computation then yields the formulas
\[
\left\langle \delta_{x},\delta_{y}\right\rangle _{E}=\left\{
\begin{array}
[c]{ll}%
\;c\left(  x\right)  & \text{if }y=x\\
-c\left(  xy\right)  & \text{if }y\sim x\\
\;0 & \text{for other cases, i.e., }y\not =x\text{ and }y\not \sim x\text{.}%
\end{array}
\right.
\]
and, as a result, we get $\left(  \mathcal{H}_{E},G^{0},o\right)  $ as a
reproducing kernel Hilbert space with base point $0$, and reproducing kernel
\[
k\left\langle x,y\right\rangle =\left\langle v_{x},v_{y}\right\rangle
_{E}\text{.}%
\]
Finally, it follows that equation (\ref{Eq7.10}) will then be satisfied.
\end{proof}

\begin{center}
\textsc{Acknowledgment}
\end{center}

The author is happy to thank his colleagues (for suggestions) in the math
physics and operator theory seminars at the University of Iowa, where he
benefitted from numerous helpful discussions of various aspects of the results
in the paper, and their applications. And he thanks Doug Slauson for the typesetting.

This work supported in part by the US National Science Foundation.

\bibliographystyle{amsplain}
\bibliography{acompat,Jorgen}

\end{document}